\documentclass[twocolumn,prx,longbibliography,nofootinbib, superscriptaddress]{revtex4-2}
\usepackage[cal=cm]{mathalfa}
\usepackage[dvips]{graphicx} 
\usepackage{amsfonts}
\usepackage{amssymb}
\usepackage{amscd}
\usepackage{amsmath} 
\usepackage{amsthm}
\usepackage{appendix}
\usepackage{bbm}
\usepackage{dsfont}
\usepackage{mathptmx}
\usepackage{enumerate}
\usepackage{enumitem}
\usepackage{upgreek}
\usepackage{epsfig}
\usepackage{float}
\usepackage[colorlinks=true,linkcolor=teal,citecolor=teal,urlcolor=teal]{hyperref}
\usepackage{cleveref}

\crefname{theorem}{theorem}{theorems}
\Crefname{theorem}{Theorem}{Theorems}

\crefmultiformat{theorem}
 {theorems~#2#1#3}
 {,~#2#1#3}
 {,~#2#1#3}
 {,~#2#1#3}

\Crefmultiformat{theorem}
 {Theorems~#2#1#3}
 {,~#2#1#3}
 {,~#2#1#3}
 {,~#2#1#3}

\usepackage{makecell}
\usepackage{times}
\usepackage{pifont}
\usepackage{subfigure}
\usepackage{subfloat}
\usepackage{xcolor}		
\usepackage{physics}
\usepackage[most]{tcolorbox}
\usepackage{tabularx}
\usepackage{MnSymbol}
\usepackage{soul}
\usepackage{booktabs}
\usepackage{amssymb}
\usepackage{wasysym}
\usepackage{tikz}
\usepackage{tikzpeople}
\usepackage{pgfplots}
\usepackage{pifont}
\usepackage[normalem]{ulem}
\usetikzlibrary{quantikz}
\usetikzlibrary{arrows}
\usetikzlibrary{shapes,fadings,snakes}
\usetikzlibrary{decorations.pathmorphing,patterns}
\usetikzlibrary{calc}
\usetikzlibrary{positioning}
\usepackage[linesnumbered,ruled,vlined]{algorithm2e}

\newtheorem{theorem}{Theorem}

\newtheorem{lemma}{Lemma}
\newtheorem{corollary}{Corollary}

\newtheorem{definition}{Definition}
\newtheorem{proposition}{Proposition}

\newcommand{\bI}{\mathbb{I}}

\newcommand{\cO}{\mathcal{O}}
\newcommand{\cA}{\mathcal{A}}
\newcommand{\cB}{\mathcal{B}}
\newcommand{\cC}{\mathcal{C}}
\newcommand{\cD}{\mathcal{D}}
\newcommand{\cE}{\mathcal{E}}

\newcommand{\cG}{\mathcal{G}}
\newcommand{\cH}{\mathcal{H}}
\newcommand{\cI}{\mathcal{I}}
\newcommand{\cJ}{\mathcal{J}}
\newcommand{\cL}{\mathcal{L}}
\newcommand{\cM}{\mathcal{M}}
\newcommand{\cN}{\mathcal{N}}

\newcommand{\cR}{\mathcal{R}}

\newcommand{\dF}{d_{\mathrm F}}

\DeclareDocumentCommand\dket{ s m }{%
 \IfBooleanTF{#1}
 {\lvert #2\rangle\!\rangle}
 {\left\lvert #2\middle\rangle\!\middle\rangle\right.}
}

\DeclareDocumentCommand\dbra{ s m }{%
 \IfBooleanTF{#1}
 {\langle\!\langle #2\rvert}
 {\left.\middle\langle\!\middle\langle #2\right\rvert}
}

\DeclareDocumentCommand\dketbra{ s m g }{%
 \IfBooleanTF{#1}
 {%
  \IfNoValueTF{#3}
  {\lvert #2\rangle\!\rangle\langle\!\langle #2\rvert}
  {\lvert #2\rangle\!\rangle\langle\!\langle #3\rvert}
 }
 {%
  \IfNoValueTF{#3}
  {\left\lvert #2
   \middle\rangle\!\middle\rangle
   \middle\langle\!\middle\langle #2
   \right\rvert}
  {\left\lvert #2
   \middle\rangle\!\middle\rangle
   \middle\langle\!\middle\langle #3
   \right\rvert}
 }
}

\DeclareDocumentCommand\dbraket{ s m m }{%
 \IfBooleanTF{#1}
 {\langle\!\langle #2\mid #3\rangle\!\rangle}
 {\left.
  \middle\langle\!\middle\langle #2
  \middle| #3
  \middle\rangle\!\middle\rangle
  \right.}
}

\usepackage{phfcc}
\phfMakeCommentingCommand[initials={PhF}]{phf}

\usepackage[most]{tcolorbox}
\newtcolorbox[auto counter]{mybox}[2][]{
	enhanced,
	breakable,
	colback=blue!5!white,
	colframe=blue!75!black,
	fonttitle=\bfseries,
	title=Box \thetcbcounter: #2,#1
}

\makeatletter

\newif\ifappendixtocrecord
\appendixtocrecordfalse

\newcommand{\appendixtableofcontents}{%
 \section*{Contents of Appendices}%
 \setcounter{tocdepth}{3}%
 \@starttoc{atoc}%
 \appendixtocrecordtrue
}

\let\oldsection\section
\let\oldsubsection\subsection
\let\oldsubsubsection\subsubsection

\renewcommand{\section}{%
 \@ifstar{\app@sectionstar}{\@ifnextchar[{\app@sectionopt}{\app@sectionnoopt}}%
}

\newcommand{\app@sectionstar}[1]{%
 \oldsection*{#1}%
}

\newcommand{\app@sectionnoopt}[1]{%
 \oldsection{#1}%
 \ifappendixtocrecord
 \addcontentsline{atoc}{section}{\protect\numberline{\thesection}#1}%
 \fi
}

\def\app@sectionopt[#1]#2{%
 \oldsection[#1]{#2}%
 \ifappendixtocrecord
 \addcontentsline{atoc}{section}{\protect\numberline{\thesection}#1}%
 \fi
}

\renewcommand{\subsection}{%
 \@ifstar{\app@subsectionstar}{\@ifnextchar[{\app@subsectionopt}{\app@subsectionnoopt}}%
}

\newcommand{\app@subsectionstar}[1]{%
 \oldsubsection*{#1}%
}

\newcommand{\app@subsectionnoopt}[1]{%
 \oldsubsection{#1}%
 \ifappendixtocrecord
 \addcontentsline{atoc}{subsection}{\protect\numberline{\thesubsection}#1}%
 \fi
}

\def\app@subsectionopt[#1]#2{%
 \oldsubsection[#1]{#2}%
 \ifappendixtocrecord
 \addcontentsline{atoc}{subsection}{\protect\numberline{\thesubsection}#1}%
 \fi
}

\renewcommand{\subsubsection}{%
 \@ifstar{\app@subsubsectionstar}{\@ifnextchar[{\app@subsubsectionopt}{\app@subsubsectionnoopt}}%
}

\newcommand{\app@subsubsectionstar}[1]{%
 \oldsubsubsection*{#1}%
}

\newcommand{\app@subsubsectionnoopt}[1]{%
 \oldsubsubsection{#1}%
 \ifappendixtocrecord
 \addcontentsline{atoc}{subsubsection}{\protect\numberline{\thesubsubsection}#1}%
 \fi
}

\def\app@subsubsectionopt[#1]#2{%
 \oldsubsubsection[#1]{#2}%
 \ifappendixtocrecord
 \addcontentsline{atoc}{subsubsection}{\protect\numberline{\thesubsubsection}#1}%
 \fi
}

\makeatother

\begin{document}
\title{Fermionic quantum error correction is never free}

\author{Yifan Tang}
\thanks{\href{mailto:yifta@zedat.fu-berlin.de}{yifta@zedat.fu-berlin.de}}
\affiliation{Dahlem Center for Complex Quantum Systems, Freie Universität Berlin, 14195 Berlin, Germany}

\author{Ingo Roth}
\affiliation{Quantum Research Center, Technology Innovation Institute (TII), Abu Dhabi, United Arab Emirates}

\author{Philippe Faist}
\affiliation{Dahlem Center for Complex Quantum Systems, Freie Universität Berlin, 14195 Berlin, Germany}

\author{Zi-Wen Liu}
\thanks{\href{mailto:zwliu0@tsinghua.edu.cn}{zwliu0@tsinghua.edu.cn}}
\affiliation{Yau Mathematical Sciences Center, Tsinghua University, Beijing 100084, China} 

\author{Jens Eisert}
\thanks{\href{mailto:jense@zedat.fu-berlin.de}{jense@zedat.fu-berlin.de}}
\affiliation{Dahlem Center for Complex Quantum Systems, Freie Universität Berlin, 14195 Berlin, Germany}
\affiliation{Helmholtz-Zentrum Berlin für Materialien und Energie, 14109 Berlin, Germany}

\author{Zhenhuan Liu}
\thanks{\href{mailto:qubithuan@gmail.com}{qubithuan@gmail.com}}
\affiliation{Quantum Research Center, Technology Innovation Institute (TII), Abu Dhabi, United Arab Emirates}


\begin{abstract}
Fermionic platforms offer compelling architectures for quantum computing, ranging from topologically protected Majorana-based qubits to fermionic cold atoms. To achieve scalability, however, they require quantum error correction. 
In this work, we prove that any exact and sufficiently accurate approximate fermionic quantum error correction necessarily requires non-Gaussian operations, beyond the free-fermion regime of quadratic dynamics. This is in sharp contrast to the qubit setting, where the efficiently classically simulable stabilizer operations form the standard framework for quantum error correction. 
Specifically, we show that the logical space of any non-trivial fermionic error-correcting code contains no pure fermionic Gaussian state, utilizing a fundamental incompatibility between fermionic error correction and Wick's theorem.
We further show that the required number of bounded-weight non-Gaussian gates for unitary codeword preparation grows at least linearly with both the code distance and the number of encoded qubits, revealing an intrinsic resource overhead that increases simultaneously with error-protection strength and logical capacity.
Furthermore, we analyze the performance of fermionic Gaussian operations in entanglement distillation, revealing a distinction from their bosonic counterparts.
Our results reveal fundamental difficulties for fermionic error correction from the perspectives of both physical implementation and classical simulation, suggesting connections to fermionic phases of matter and complexity theory.
\end{abstract}

\maketitle

\section{Introduction}
Quantum mechanics enables information processing capabilities beyond the reach of classical physics, motivating sustained efforts to realize universal quantum computation across a broad range of physical platforms. 
Among these, fermionic platforms, particularly those based on Majorana zero modes and fermionic cold atoms, offer a promising route toward intrinsically protected quantum information processing. 
In topological approaches to quantum computing, the nonlocal encoding of information in fermion-parity degrees of freedom provides hardware-level protection against important classes of local errors, making these systems attractive candidates for fault-tolerant quantum computation provided the relevant engineering challenges can be overcome~\cite{kitaev2001UnpairedMajoranaFermions,aasen2016MilestonesMajoranaBasedQuantum,karzig2017ScalableDesignsQuasiparticlepoisoningprotected,litinski2018QuantumComputingMajorana}. Cold-atom platforms offer high levels of control~\cite{DopedFermiHubbard,jordens_mott_2008,GreinerHubbard,CollisionalQuantumGates,bakr2025microscopy,BlochSimulation}, and there is growing evidence that fermionic systems can be simulated more efficiently on fermionic hardware than on qubit-based architectures, thereby 
avoiding additional encoding overheads~\cite{BravyiKitaev,LowDepth,PhysRevA.95.032332,JaschkeEncoding}, providing a second route to fermionic quantum computing.

Achieving scalable quantum computation requires some form of \emph{quantum error correction} (QEC)~\cite{Roads,terhal2015qec,MindTheGaps}. This remains true even for topologically protected fermionic platforms. While topological encoding provides hardware-level protection against important classes of local errors, it cannot completely eliminate the effects of disorder, finite-size corrections, quasiparticle poisoning, thermal excitations, or imperfections in measurement and control. As a result, residual errors inevitably accumulate during computation. To enable arbitrarily long and reliable quantum computations, this intrinsic hardware protection must therefore be complemented by fault-tolerant quantum error correction at the software level~\cite{qx36-4rv1,Litinski2017,bravyi2010MajoranaFermionCodes,vijay2017QuantumErrorCorrection,litinski2018QuantumComputingMajorana}. The same conclusion applies to fermionic cold-atom platforms, where quantum error correction is likewise indispensable for scalable quantum computation.

This raises a basic conceptual and practical question: What physical resources are required to implement fermionic quantum error correction?
Fermionic Gaussian operations arise from quadratic, effectively free-fermion dynamics, encompassing basic processes such as hopping, pairing, and pairwise Majorana braiding. 
They therefore constitute particularly natural 
control primitives in fermionic platforms~\cite{Sarma2015zero,aasen2016MilestonesMajoranaBasedQuantum}; moreover, fermionic 
Gaussian dynamics admit efficient classical simulation~\cite{Valiant2002QuantumCircuits,terhal2002classical, jozsa2008MatchgatesClassicalSimulation}, making Gaussian QEC architectures particularly amenable to large-scale classical benchmarking, optimization, and verification. 
By contrast, non-Gaussian operations require resources beyond free-fermion dynamics, such as controlled many-fermion interactions or multi-Majorana measurements, and generally demand 
an additional layer of experimental control. 
A fully Gaussian scheme for 
fermionic QEC would therefore be especially attractive, combining experimentally natural operations with 
efficient classical simulability.
 
Several previous observations, 
however, suggest that this
may be challenging. Some known encoding constructions for
general Majorana stabilizer codes 
rely on quartic and hence
non-Gaussian operations~\cite{mudassar2024EncodingMajoranaCodes}. Moreover, the canonical quadratic Majorana encoding based on unpaired zero modes has Majorana distance one, despite its nonlocal protection against parity-preserving errors~\cite{bravyi2010MajoranaFermionCodes}.
Relatedly, in bosonic systems, Gaussian
operations are known to be insufficient not only for QEC~\cite{niset2009NoGoTheoremGaussian}, but also for entanglement distillation~\cite{eisert2002distilling,Giedke2002distillation,Fiurasek2002Gaussian,GiedkeSchmidt}. 
These observations point to a broader tension between Gaussianity and QEC, yet they do
not fully exclude the possibility of realizing useful fermionic QEC using Gaussian operations.

In this work, we establish the fundamental incompatibility between fermionic Gaussianity and QEC, proving that the logical space of any fermionic code with Majorana distance $d_{\mathrm F}\geq 3$, namely nontrivial error-correcting capability, contains no pure fermionic Gaussian state. A
simple proof reveals the origin of this general obstruction: a pure fermionic
Gaussian state is completely determined by its covariance matrix, which is also commonly referred to as the fermionic correlation matrix and is specified by weight-two Majorana correlations, whereas a
non-trivial error-correcting code distance requires all such low-weight
observables to be unable to distinguish logical states. Even at distance two, where we construct a code whose entire logical space consists of fermionic Gaussian states, we prove that neither its universal encoding map nor the non-destructive syndrome measurement of the code space can be implemented using Gaussian operations alone.
Together, these results reveal a fundamental limitation of fermionic Gaussian operations for both quantum error correction and error detection.

In order to flesh out this obstruction more substantially, 
we show that the unitary codeword preparation of an $[\![n,k,\dF\geq3]\!]_{\mathrm F}$ code requires $\Omega(d_{\mathrm F}+k)$ bounded-weight non-Gaussian unitaries. 
This establishes a direct resource tradeoff with two central code parameters: the distance $d_{\mathrm F}$, which characterizes robustness against noise, and the number of encoded qubits $k$, which characterizes logical storage capacity. 
We further analyze fermionic Gaussian operations in entanglement distillation, showing that products of local Gaussian channels cannot increase the Einstein–Podolsky–Rosen (EPR) pair fidelity of the states considered, even with classical randomness, whereas local Gaussian measurements with postselection can.
Taken together, our QEC and distillation results reveal fundamental differences in noise processing among fermionic Gaussianity, bosonic Gaussianity, and stabilizerness, despite their close parallels from the perspective of efficient classical simulation~\cite{gottesman1998heisenberg,dehaene2003clifford,aaronson2004improved,bartlett2002efficient,mari2012positive,Valiant2002QuantumCircuits,terhal2002classical,jozsa2008MatchgatesClassicalSimulation,StabilizerPolytope}.

\section{Preliminaries}
We first briefly review the notations for fermionic systems, fermionic
Gaussianity, and fermionic quantum codes used throughout this work
(see Appendix~\ref{app:preliminaries} for further details). For a
system of $n$ fermionic modes, we denote the associated $2n$
\emph{Majorana operators} by $\gamma_j$, $j\in[2n]$. They satisfy
$\gamma_j=\gamma_j^\dagger$, $\gamma_j^2=\mathbb I$, and
$\{\gamma_j,\gamma_k\}=2\delta_{j,k}\mathbb I$. For an ordered subset
$S=\{j_1,\cdots,j_{\abs{S}}\}\subseteq[2n]$, the corresponding
\emph{Majorana product} is
$\gamma_S=\gamma_{j_1}\cdots\gamma_{j_{\abs{S}}}$, with
$\gamma_\emptyset=\mathbb I$, and its \emph{Majorana weight} is
$\operatorname{wt}_{\mathrm F}(\gamma_S)=\abs{S}$. When Hermiticity is
needed, we use
$\hat{\gamma}_S=(-i)^{\abs{S}(\abs{S}-1)/2}\gamma_S$, which has the
same Majorana weight. The operators
$\{\hat{\gamma}_S\}_{S\subseteq[2n]}$ form an orthogonal operator basis
with respect to the Hilbert--Schmidt inner product.

Both pure and mixed fermionic Gaussian states admit an equivalent
characterization through Wick's theorem~\cite{wick1950EvaluationCollisionMatrix}. For
a fermionic state $\rho$, its Majorana covariance matrix $\Gamma(\rho)$ is defined by its entries
\begin{equation}\label{eq:correlation_matrix_main}
\Gamma_{j,k}(\rho)
=
-\frac{i}{2}\Tr([\gamma_j,\gamma_k]\rho)
\end{equation}
for $j,k\in[2n]$. 
A fermionic state is Gaussian precisely when its higher-order Majorana
correlations are completely determined by this two-point correlation
matrix according to Wick's theorem. Importantly, the set of fermionic
Gaussian states is not convex: a classical probabilistic mixture of
Gaussian states need not itself satisfy Wick's theorem and therefore
need not be Gaussian. The Gaussian structure is preserved under the
standard fermionic Gaussian operations, including Gaussian unitary
evolution, tensoring with another Gaussian state, conditioning on a
fine-grained outcome of a Gaussian measurement, and discarding
fermionic modes, but not under arbitrary classical probabilistic
mixing.

A parity-preserving fermionic Gaussian unitary is generated by a
quadratic Majorana Hamiltonian and induces a linear rotation of the
Majorana operators,
\begin{equation}\label{eq:Gaussian_unitary_action_main}
U_R^\dagger\gamma_jU_R
=
\sum_k R_{j,k}\gamma_k ,
\end{equation}
with $R\in\operatorname{SO}(2n)$. We denote by $\mathrm M_n$ the
(generalized) fermionic Gaussian unitaries corresponding to
$R\in\operatorname{O}(2n)$.  
The set of $n$-mode pure fermionic Gaussian states can be written as
\begin{equation}\label{eq:def_pure_Gaussian_state_main}
\mathcal G_{\mathrm{pure}}^{(n)}
=
\left\{
U\ket{0^n}
\middle\mid
U\in\mathrm M_n
\right\},
\end{equation}
each of which is generated by applying a Gaussian unitary on the $n$-mode vacuum state $\ket{0^n}$.
A fermionic Gaussian measurement is one whose fine-grained outcomes can be implemented by adjoining Gaussian auxiliary modes, applying Gaussian
unitaries, and measuring occupations of canonical fermionic modes,
possibly followed by discarding modes
\cite{bravyi2005LagrangianRepresentationFermionica}. 
In particular,
each fine-grained measurement branch maps a Gaussian input to a subnormalized Gaussian state.
Due to the concise mathematical description of fermionic Gaussianity, fermionic non-Gaussianity can therefore serve as a computational resource~\cite{hebenstreit2019AllPureFermionic,oszmaniec2022FermionSamplingRobust},
while circuits composed only of Gaussian states and Gaussian
operations admit efficient classical simulation
\cite{jozsa2008MatchgatesClassicalSimulation,Valiant2002QuantumCircuits,terhal2002classical}.

Finally, we characterize an $[\![n,k,d_{\mathrm F}]\!]_{\mathrm F}$ fermionic quantum code that encodes $k$ logical qubits into $n$ fermionic modes by its
code space $\mathcal C$ and the orthogonal projector $P$ onto it,
$\mathcal C=\operatorname{im}(P)$. A non-trivial code has
$\operatorname{rank}(P)=2^k>1$ and its \emph{Majorana distance} is defined as~\cite{bravyi2010MajoranaFermionCodes}
\begin{equation}\label{eq:def_majorana_distance_main}
d_{\mathrm F}(P)
=
\min
\left\{
\abs{S}\ 
\middle|\ 
\emptyset\neq S\subseteq[2n],\
P\gamma_S P\notin\mathbb CP
\right\}.
\end{equation}
According to the Knill-Laflamme condition, a code with Majorana
distance $d_{\mathrm F}(P)$ detects every Majorana product of weight
smaller than $d_{\mathrm F}(P)$, while the error space spanned by all Majorana
products of weight at most $t$ is correctable whenever
$2t<d_{\mathrm F}(P)$~\cite{knill1997TheoryQuantumErrorcorrecting}.

\section{No fermionic Gaussian QEC}\label{app:no_Gaussian_qec}
We begin with a central observation: the code space of any non-trivial fermionic quantum error-correcting code contains no pure Gaussian state.
In this sense, fermionic non-Gaussianity not only provides computational ``magic'' but is also necessary for encoding any non-trivial fermionic
quantum error-correcting code.

\begin{theorem}[No Gaussian codewords]\label{thm:no-Gaussian-codeword}
Let $P$ be the orthogonal projector onto a fermionic quantum code
$\mathcal{C}=\operatorname{im}(P)$ of $n$ modes with $\operatorname{rank}(P)>1$.
If
$d_{\mathrm F}(P)\geq 3$,
then the code space contains no pure fermionic 
Gaussian state, in that
\begin{equation}
\mathcal{C}\cap\mathcal G_{\mathrm{pure}}^{(n)}=\emptyset.
\end{equation}
\end{theorem}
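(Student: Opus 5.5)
We argue by contradiction. Suppose $\ket{\psi}\in\mathcal C$ is a pure fermionic Gaussian state, $\ket{\psi}=U\ket{0^n}$ with $U\in\mathrm M_n$. The idea is that $\ket{\psi}$ is uniquely pinned down by its covariance matrix, while distance $d_{\mathrm F}\ge 3$ forces every codeword to share that same covariance matrix. Since $\operatorname{rank}(P)>1$, another codeword then also attains the maximal ``purity'' of the covariance matrix, and uniqueness gives a contradiction.

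\emph{Step 1 (all codewords share the covariance matrix).} Since $d_{\mathrm F}(P)\ge 3$, the definition of Majorana distance gives $P\gamma_j\gamma_kP=c_{jk}P$ for all $j\neq k$, because $\abs{\{j,k\}}=2<3$. Hence for any normalized $\ket{\phi}\in\mathcal C$ we have $\bra{\phi}\gamma_j\gamma_k\ket{\phi}=c_{jk}$, independent of $\ket{\phi}$. For $j\neq k$, $[\gamma_j,\gamma_k]=2\gamma_j\gamma_k$, so $\Gamma(\phi)=\Gamma(\psi)$ for every normalized $\ket{\phi}\in\mathcal C$.

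\emph{Step 2 (a pure Gaussian state is the unique state with its covariance matrix).} For $\ket{\psi}=U_R\ket{0^n}$ we have $\Gamma(\psi)=R\,\Gamma_0R^T$, where $\Gamma_0=\bigoplus_{l}\begin{pmatrix}0&1\\-1&0\end{pmatrix}$ up to sign convention, so $\Gamma(\psi)^2=-\mathbb I$. Define the rotated modes via $\tilde\gamma_j=U\gamma_jU^\dagger$. Then $\ket{\psi}$ satisfies $-i\tilde\gamma_{2l-1}\tilde\gamma_{2l}\ket{\psi}=\pm\ket{\psi}$ for each $l$, with signs fixed by $\Gamma_0$. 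Now let $\ket{\phi}$ be any normalized state with $\Gamma(\phi)=\Gamma(\psi)$. Then $\bra{\phi}(-i\tilde\gamma_{2l-1}\tilde\gamma_{2l})\ket{\phi}$ equals the corresponding linear combination of entries of $\Gamma(\phi)$, namely $(R^T\Gamma(\phi)R)_{2l-1,2l}=\pm1$. These operators are Hermitian with eigenvalues $\pm1$, so the extremal expectation forces $\ket{\phi}$ into the $\pm1$ eigenspace of each $-i\tilde\gamma_{2l-1}\tilde\gamma_{2l}$. These $n$ commuting operators (the rotated occupation parities) have a one-dimensional joint eigenspace spanned by $U\ket{0^n}$. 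Hence $\ket{\phi}\propto\ket{\psi}$. Equivalently, $\Tr(\rho\, H)$ with $H=\sum_l -i\tilde\gamma_{2l-1}\tilde\gamma_{2l}$ (suitably signed) attains $n$ only on $\ket{\psi}$, and $H$ has a nondegenerate top eigenvalue.

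\emph{Step 3 (conclusion).} Since $\operatorname{rank}(P)\ge 2$, pick a normalized $\ket{\phi}\in\mathcal C$ orthogonal to $\ket{\psi}$. By Step 1 it has covariance $\Gamma(\psi)$. By Step 2 it is proportional to $\ket{\psi}$, contradicting orthogonality.

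The main care point is Step 2, specifically checking that the expectation of $\tilde\gamma_{2l-1}\tilde\gamma_{2l}$ is expressible through $\Gamma$ via $R$ with the right signs and conventions, and that this covers generalized $R\in\mathrm O(2n)$. The covariance matrix transforms as $R\Gamma R^T$ for any orthogonal $R$, so the argument goes through unchanged. Steps 1 and 3 are immediate from the definitions.
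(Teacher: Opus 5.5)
Your proof is correct and follows the paper's argument: $d_{\mathrm F}\geq 3$ forces all normalized codewords to share one covariance matrix, and a pure Gaussian state is the unique state with its covariance matrix, which contradicts $\operatorname{rank}(P)>1$. The only difference is that the paper cites this uniqueness lemma from Bittel et al., whereas you prove it directly: each rotated occupation operator $-i\tilde\gamma_{2l-1}\tilde\gamma_{2l}$ must attain its maximal eigenvalue, which is essentially the parent Hamiltonian $H_\phi=U_RNU_R^\dagger$ that the paper uses later in its approximate-QEC appendix.
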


The full proof of Theorem~\ref{thm:no-Gaussian-codeword} is provided in Appendix~\ref{app:proof_no-Gaussian-qec}, and it can be sketched by using the following two observations.
First, as discussed in the last section, a pure fermionic Gaussian state is completely specified by its two-point Majorana correlations, collected in the covariance matrix in Eq.~\eqref{eq:correlation_matrix_main}.
All higher-order correlations are then fixed by Wick's theorem~\cite{wick1950EvaluationCollisionMatrix}.
Moreover, this characterization is unique: if an arbitrary pure state has the same covariance matrix as a pure fermionic Gaussian state, then the two states coincide up to a global phase~\cite{bittel2025OptimalTraceDistanceBoundsa}.
Second, the Knill--Laflamme condition implies that sufficiently low-weight Majorana observables cannot distinguish states within a quantum code.
More precisely, for a code projector $P$ with Majorana distance $d_{\mathrm F}(P)$, every Majorana product $\gamma_S$ of weight $\abs{S}<d_{\mathrm F}(P)$ satisfies
\begin{equation}
\begin{gathered}
P\gamma_S P=\lambda_S P\\
\Downarrow\\
\bra{\psi}\gamma_S\ket{\psi}=\lambda_S,
\forall\ket{\psi}\in\operatorname{im}(P),\text{ with }\bra{\psi}\ket{\psi}=1.
\label{eq:low_weight_indistinguishability}
\end{gathered}
\end{equation}
Thus, when $d_{\mathrm F}\geq3$, all logical states have identical two-point Majorana correlations and hence the same covariance matrix.
Combining this indistinguishability with the uniqueness of pure Gaussian states leads to the theorem.
At a fundamental level, this obstruction originates from the intrinsically low-weight characterization of fermionic Gaussianity, in sharp contrast to stabilizerness, whose defining stabilizer operators may have arbitrarily high weight.

Since $d_{\mathrm F}=3$ is the minimal distance required to correct arbitrary single-Majorana errors, Theorem~\ref{thm:no-Gaussian-codeword} already implies that exact fermionic quantum error correction cannot be realized entirely within the Gaussian framework.
More generally, following the framework of approximate quantum error correction~\cite{beny2010AQEC,faist2020ContinuousSymmetriesApproximatea}, we show in Appendix~\ref{app:proof_approximate_qec} that the Gaussianity of the code space and its approximate error-correcting capability obey a general quantitative tradeoff, with some specific noise models as examples~\cite{knapp2018ModelingNoiseError,wu2024ErrormitigatedFermionicClassical,greplova2018DegradabilityFermionicGaussian}. Intuitively, a code space with sufficiently large Gaussian fidelity necessarily contains orthogonal codewords whose expectation values differ for a quadratic observable accessible from the complementary environment. This is in tension with approximate correctability, which requires logical information to remain nearly decoupled from the environment. 

In Appendix~\ref{app:proof_distance_two_Gaussian_code} we show that the threshold Majorana distance shown in Theorem~\ref{thm:no-Gaussian-codeword} is tight: there exists a non-trivial code with $d_{\mathrm F}=2$ whose entire pure logical space consists of fermionic Gaussian states.
However, a perhaps counterintuitive point is that the existence of such a distance-two code does not imply that either its encoding or its syndrome measurement can be implemented using Gaussian operations. 
Encoding requires a single universal channel that coherently maps every logical input state to its corresponding physical codeword, rather than a state-dependent preparation procedure. 
Likewise, syndrome extraction requires a non-destructive measurement of the projector onto the entire logical space, rather than separate measurements that distinguish and destroy coherence between individual logical states. We establish the following no-go theorem:

\begin{proposition}[Gaussian obstruction to fermionic encoding and detection]
\label{prop:Gaussian_encoding_measurement_informal}
For any non-trivial fermionic code with $d_{\mathrm F}\geq2$, neither an
exact universal encoding channel nor an exact non-destructive measurement
of the code-space projector can be implemented using fermionic Gaussian
operations, even allowing arbitrary classical probabilistic mixtures
of Gaussian operations.
\end{proposition}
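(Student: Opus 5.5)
The plan is to reduce both claims to one structural fact about Gaussian operations and then use linearity of quantum channels and instruments against it. The fact: every fine-grained branch of a fermionic Gaussian operation (Gaussian unitaries, adjoining Gaussian ancillas, Gaussian measurements with a fixed outcome, discarding modes) maps a pure Gaussian input to a subnormalized Gaussian state. A classical mixture of Gaussian operations therefore maps a pure Gaussian input to a convex combination of Gaussian states. If that output is \emph{pure}, every component with nonzero weight must coincide with it, because a pure state is an extreme point of the state space. Hence any pure output produced from a pure Gaussian input is itself Gaussian. This step is routine; it only needs the closure properties recalled in the Preliminaries.

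\emph{Encoding.} Suppose $\mathcal E$ is an exact universal encoding channel built from (mixtures of) Gaussian operations, taking the $k$-mode logical input $\rho$ to $V\rho V^\dagger$, where $V$ is an isometry onto $\mathcal C$. I would proceed in two steps.
\begin{enumerate}
\item Feed in the pure Gaussian logical inputs $U\ket{0^k}$ with $U\in\mathrm M_k$. By the fact above, each $VU\ket{0^k}$ is a pure Gaussian codeword.
\begin{itemize}
\item For $d_{\mathrm F}\geq 3$ this already contradicts Theorem~\ref{thm:no-Gaussian-codeword}.
\item For $d_{\mathrm F}=2$ more is needed, since the distance-two example of Appendix~\ref{app:proof_distance_two_Gaussian_code} shows Gaussian codewords can exist.
\end{itemize}
\item For $d_{\mathrm F}=2$, pass to covariance matrices. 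A Gaussian channel acts affinely on covariance matrices, $\Gamma_{\mathrm{out}}=A\Gamma_{\mathrm{in}}A^{T}+B$. Because every output here is pure Gaussian, its weight-four correlations are Pfaffians of $2\times 2$ blocks of $\Gamma_{\mathrm{out}}$ by Wick's theorem, and hence quadratic in $\Gamma_{\mathrm{in}}$. On the other hand, $\rho\mapsto\Tr(\gamma_S V\rho V^\dagger)$ is linear in $\rho$. Mixing two Gaussian logical inputs with the same parity and comparing the two expressions forces these quadratic forms to be affine along every such segment. I expect this to contradict the requirement that $V$ acts as an isometry on the two-dimensional parity sector, where logical superpositions must remain codewords.
\end{enumerate}
Concretely, I would pick two orthogonal Gaussian logical states $\ket{a},\ket{b}$ of equal parity whose superpositions $(\ket a+e^{i\theta}\ket b)/\sqrt2$ are also Gaussian. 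I would then show that, under the affine covariance map, these images cannot all be Gaussian codewords of a code with $d_{\mathrm F}\geq2$, using that $P\gamma_jP\propto P$ kills single-Majorana terms.

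\emph{Non-destructive measurement.} Suppose a Gaussian instrument has an accepting branch $\rho\mapsto P\rho P$; classical mixing is allowed, but exactness forces every component to implement this branch. The plan is to exhibit a pure Gaussian $\ket\phi$ such that $P\ket\phi\neq0$ and $P\ket\phi/\|P\ket\phi\|$ is non-Gaussian. The fact above then rules out a Gaussian implementation.
\begin{itemize}
\item For $d_{\mathrm F}\geq 3$, any Gaussian $\ket\phi$ with nonzero overlap with $\mathcal C$ works, by Theorem~\ref{thm:no-Gaussian-codeword}. Such a $\ket\phi$ exists because pure Gaussian states span the fixed-parity sectors.
\item For $d_{\mathrm F}=2$, I would choose $\ket\phi$ Gaussian so that $P\ket\phi$ is a superposition of two codewords of equal parity with non-Gaussian sum. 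I would verify non-Gaussianity by checking that $\Gamma\Gamma^{T}\neq\bI$, or by checking failure of Wick's theorem on a weight-four product.
\end{itemize}

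\emph{Main obstacle.} The hardest step is the $d_{\mathrm F}=2$ case for both claims. There Theorem~\ref{thm:no-Gaussian-codeword} is unavailable, and one must show that Gaussianity of individual codewords cannot persist coherently across the whole logical space under a single Gaussian map. In the encoding case this may require that $V$ preserve an algebraic structure, namely that logical pure Gaussian states form the orbit $\mathrm M_k\ket{0^k}$, which a Gaussian channel can map only onto an orbit of the same dimension; I would first test this on the $k=2$ (one logical qubit) instance. In the measurement case, the delicate part is showing that the required non-Gaussian superposition always arises, rather than only in the explicit example.
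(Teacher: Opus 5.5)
\paragraph{Where the proposal fails.} Your argument for $\dF\geq3$ is fine: by extremality, a pure output from a pure Gaussian input is Gaussian, which contradicts Theorem~\ref{thm:no-Gaussian-codeword}. But the case $\dF=2$ is the only part of the proposition not already implied by that theorem, and it is left open. Moreover, the pure-state strategy you propose for it cannot succeed.
\begin{itemize}
\item \emph{Measurement.} In the paper's own $[\![n,1,2]\!]_{\mathrm F}$ code $\cC_n=\operatorname{span}\{\ket{0^n},\ket{1,1,0^{n-2}}\}$, \emph{every} normalized codeword is pure Gaussian. So $P_n\ket{\phi}/\norm{P_n\ket{\phi}}$ is Gaussian for every input, and the non-Gaussian witness $\ket{\phi}$ you want never exists.
\item \emph{Encoding, $k=1$.} The only pure Gaussian logical states are $\ket{0}$ and $\ket{1}$, so there is no orthogonal equal-parity pair at all. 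The parity-preserving isometry $\ket{0}\mapsto\ket{000}$, $\ket{1}\mapsto\ket{111}$ onto $\operatorname{span}\{\ket{000},\ket{111}\}$, a code with $\dF=2$, sends both to Gaussian codewords. Neither of your steps yields a contradiction here.
\item \emph{Encoding, $k\geq2$.} Your ``concrete'' claim is false. Take $\ket{00}\mapsto\ket{00000}$, $\ket{11}\mapsto\ket{11000}$, $\ket{01}\mapsto\ket{00111}$, $\ket{10}\mapsto\ket{11111}$. This is a parity-preserving isometry onto a code with $\dF\geq2$, since no two codewords are at Hamming distance one. Under it, $\ket{00}$, $\ket{11}$ and all their superpositions become pure Gaussian codewords.
\end{itemize}
Pure-state data simply cannot detect the obstruction. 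Your Pfaffian remark points toward mixed inputs, but it is never turned into an argument, and restricted to equal-parity mixtures it is vacuous for $k=1$.

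\paragraph{The missing idea.} You need a \emph{mixed} Gaussian input, namely the maximally mixed logical state $\bI/2^k$, or for the measurement the branch effects. You also need branch-wise control, because mixtures of Gaussian states are not Gaussian and your extremality trick no longer applies. The paper gets this control from Lemma~\ref{lem:rank-one_Choi_refinement}: if Gaussian CP branches sum to $pV(\cdot)V^\dagger$, then each branch equals $p_xV(\cdot)V^\dagger$.
\begin{itemize}
\item \emph{Encoding.} A single Gaussian branch maps $\bI/2^k$ to $p_xP/2^k$, so $P/2^k$ must be a Gaussian state.
\item \emph{Measurement.} Each accepted branch restricts to $p_x\,\mathrm{id}$ on $\cL(\cC)$. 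Its effect $\cM_x^\dagger(\bI)$ is a positive Gaussian operator by Lemma~\ref{lem:Gaussian_branch_effects}, with $0\leq\cM_x^\dagger(\bI)\leq P$, so it equals $p_xP$.
\end{itemize}
The decisive step is then Lemma~\ref{lem:Gaussian_projector_distance_one}. A Gaussian state proportional to a projector of rank $>1$ has all $\nu_j\in\{0,\pm1\}$. Hence it equals $U_R(\ketbra{t}{t}\otimes\bI_{2^r}/2^r)U_R^\dagger$ with $r\geq1$ free modes. The rotated Majorana $\sum_kR_{k,2\ell-1}\gamma_k$ of a free mode acts as a nontrivial flip inside the code, so some single $\gamma_k$ has $P\gamma_kP\notin\mathbb{C}P$, i.e.\ $\dF=1$. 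This treats all $\dF\geq2$ uniformly, without Theorem~\ref{thm:no-Gaussian-codeword}. For instance, in the $\ket{000},\ket{111}$ example, $P/2$ has vanishing covariance matrix but $\Tr(Z_1Z_2P/2)=1$, violating Wick's theorem.
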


The detailed setting and proof are provided in Appendix~\ref{app:proof_Gaussian_encoding_measurement_informal}. 
Since $d_{\mathrm F}\geq2$ is the minimal requirement for detecting all single-Majorana errors, Proposition~\ref{prop:Gaussian_encoding_measurement_informal} together with Theorem~\ref{thm:no-Gaussian-codeword} shows that fermionic Gaussian operations, even supplemented by classical randomness, cannot realize any non-trivial code for either arbitrary single-Majorana error detection or quantum error correction.

\section{Non-Gaussianity Cost}
\label{sec:non-Gaussianity_cost}

Given the no-go results, a natural quantitative question is then how much non-Gaussian resource is required. 
To address this question, we consider the doped Gaussian model, in which Gaussian states and Gaussian unitaries are free, while bounded-weight non-Gaussian unitaries are counted as resource primitives~\cite{mele2025EfficientLearningQuantum,tarabunga2026Fermionicnon-Gaussianity}.
This model also captures existing constructive encoding schemes for Majorana stabilizer codes, which supplement Gaussian operations with bounded-weight non-Gaussian unitaries~\cite{mudassar2024EncodingMajoranaCodes}. Specifically, an $n$-mode $(g,q)$-doped Gaussian circuit for $n,g,q\in\mathbb{N}$ and $q\geq4$ takes the form
\begin{equation}
U
=
G_gV_gG_{g-1}\cdots V_2G_1V_1G_0,
\label{eq:g_doped_circuit}
\end{equation}
where each $G_j$ is a fermionic Gaussian unitary, and each $V_j$ is a non-Gaussian unitary generated by a Hermitian Majorana product of even weight $w_j\leq q$.
We denote by $\mathcal G_{g,q}^{(n)}$ the set of pure $n$-mode states
that can be prepared from a vacuum state using a circuit of the
form Eq.~\eqref{eq:g_doped_circuit} and call a state in $\mathcal G_{g,q}^{(n)}$ a $(g,q)$-doped state. For a
fermionic code $\mathcal C$, we define its $(g,q)$-doped Gaussian fidelity
as
\begin{equation}
F_{g,q}^{(n)}(\mathcal C)
\coloneqq
\sup_{\substack{\ket{\psi}\in\mathcal C\\
\braket{\psi}{\psi}=1}}
\sup_{\ket{\phi}\in\mathcal G_{g,q}^{(n)}}
\abs{\braket{\phi}{\psi}}^2.
\label{eq:def_code_g_doped_fidelity}
\end{equation}
This quantity characterizes the largest fidelity attainable with a codeword using $g$ bounded-weight non-Gaussian unitaries.
The following theorem gives a universal bound on this fidelity and, as
a direct consequence, a lower bound on the non-Gaussian cost of
codeword preparation.

\begin{theorem}[Universal non-Gaussian cost bound]
\label{thm:universal_non_Gaussian_cost}
Let $P$ be the orthogonal projector onto an
$[\![n,k,d_{\mathrm F}]\!]_{\mathrm F}$ fermionic code
$\mathcal C=\operatorname{im}(P)$ with
$\operatorname{rank}(P)=2^k>1$, and $d_{\mathrm F}\geq3$.
Define $r_{\dF,q}\coloneqq\left\lfloor(d_{\mathrm F}-3)/(q-2)\right\rfloor$, $s_{g,q}\coloneqq q\max\{g-r_{\dF,q},0\}$
and
\begin{equation}
t_{g,q}
\coloneqq
\left\lfloor
\frac{
d_{\mathrm F}
-(q-2)\min\{g,r_{\dF,q}\}
-1
}{2}
\right\rfloor .
\end{equation}
Then
\begin{equation}\label{eq:unified_gq_fidelity_bound}
F_{g,q}^{(n)}(\mathcal C)
\leq
\left[
1-H_2^{-1}\left(
\frac{ \max\{0,k-s_{g,q}\} }{\max\{n-s_{g,q},n-k\}}
\right)
\right]^{t_{g,q}}.
\end{equation}
Here $H_2^{-1}$ denotes the inverse of the binary entropy function
restricted to $[0,1/2]$.
Consequently, if a $(g,q)$-doped Gaussian circuit exactly prepares
a codeword in $\mathcal C$ from a vacuum state, then
\begin{equation}\label{eq:exact_non_Gaussian_cost_from_fidelity}
g
\geq
\left\lfloor
\frac{d_{\mathrm F}-3}{q-2}
\right\rfloor
+
\left\lceil
\frac{k}{q}
\right\rceil.
\end{equation}
\end{theorem}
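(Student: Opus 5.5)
Our plan is to reduce the doped setting to the Gaussian one. We exploit two facts. First, a non-Gaussian gate $V_j=\exp(i\theta_j\hat\gamma_{S_j})$ conjugates each Majorana product into a combination of products whose weight grows by at most $q-2$. Second, such a gate acts nontrivially only on a bounded set of modes. Write $U=G_gV_g\cdots V_1G_0$ and push all Gaussian unitaries to the left, so that $U=G\,\tilde V_g\cdots\tilde V_1$ with each $\tilde V_j$ generated by a Gaussian-rotated weight-$w_j$ product. Replacing the code $P$ by $G^\dagger P G$ leaves $d_{\mathrm F}$ invariant, because Gaussian unitaries map weight-$w$ products into the span of weight-$w$ products. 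We may therefore assume $\ket{\phi}=\tilde V_g\cdots\tilde V_1\ket{\phi_0}$ with $\ket{\phi_0}$ a pure Gaussian state.

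We split the $g$ gates into two groups. The first $\min\{g,r_{\dF,q}\}$ gates are spent on ``distance'': conjugating the code by them gives an effective code $P'=W^\dagger PW$. Every product of weight at most $d_{\mathrm F}-1-(q-2)\min\{g,r\}$ is mapped to a combination of products of weight at most $d_{\mathrm F}-1$, which the code cannot distinguish. Hence $P'$ has effective distance $d'\geq d_{\mathrm F}-(q-2)\min\{g,r\}\geq3$. This is why $r=\lfloor(d_{\mathrm F}-3)/(q-2)\rfloor$ appears.

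The remaining $\max\{g-r,0\}$ gates are spent on ``capacity''. Their generators jointly involve at most $s_{g,q}=q\max\{g-r,0\}$ Majorana operators, which after a Gaussian mode rotation lie within $s$ modes. On the complementary $n-s$ modes the doped state therefore still looks Gaussian, in the sense that its reduced state is Gaussian (a mixture would be handled via purification/Schmidt structure). Meanwhile, tracing out $s$ modes from a $k$-qubit code with distance $d'$ leaves a code on $n-s$ modes that still carries about $k-s$ logical qubits of information in the error-correcting sense.

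The core estimate is a quantitative version of Theorem~\ref{thm:no-Gaussian-codeword}. Take a code of effective distance $d'$ carrying $k'$ logical qubits on $n'$ modes, and a Gaussian state whose covariance matrix agrees with the code's fixed weight-$2$ correlations. Since the covariance matrix is fixed, the candidate Gaussian state is essentially unique; call it $\ket{g}$. We would show $|\langle g|\psi\rangle|^2\leq(1-\delta)^{t}$ with $t=\lfloor(d'-1)/2\rfloor$. The plan is to bring the covariance to normal form, so that $\ket{g}$ is a vacuum of modes $b_1,\dots,b_{n'}$. The overlap then equals the expectation of the product of the projectors $b_jb_j^\dagger$. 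Our first try is a chain of conditional probabilities over disjoint groups of Majorana weight at most $d'-1$: using the Knill–Laflamme indistinguishability (i.e.\ $t$-correctability), the occupations of up to $t$ modes have a logical-state-independent distribution, which lets the product factorize into $t$ factors. Each factor must be at most $1-\delta$, where $\delta$ is the average occupation. To bound $\delta$ below, use an entropy count: the $2^{k'}$ codewords must be supported on occupation strings of weight roughly $\delta n'$, so $n'H_2(\delta)\gtrsim k'$, giving $\delta\geq H_2^{-1}(k'/n')$. The denominator $\max\{n-s,n-k\}$ comes from also bounding this occupation entropy count by $n-k$.

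This factorization and entropy step is the main obstacle. Our first approach would be to use the maximally mixed code state $P/2^k$: its $t$-local marginals coincide with those of every codeword, and it has entropy $k$. Subadditivity over the diagonal $b$-occupations then gives $\sum_j H_2(p_j)\geq k'$, and concavity yields an average occupation at least $H_2^{-1}(k'/n')$. The product bound then follows by choosing the $t$ modes with largest occupation together with a correctability-based independence argument.

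Finally, the exact-preparation bound \eqref{eq:exact_non_Gaussian_cost_from_fidelity} follows because $F=1$ forces the right-hand side of \eqref{eq:unified_gq_fidelity_bound} to equal $1$. Since $t\geq1$ whenever $g\geq r$, this requires $k-s\leq0$, i.e.\ $q(g-r)\geq k$. If $g<r$, then $k-s=k>0$ and the bound is strictly below $1$. Hence $g\geq r+\lceil k/q\rceil$.
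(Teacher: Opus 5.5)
\textbf{Verdict: genuine gap.} Your outer architecture matches the paper's. You absorb the output-nearest $\min\{g,r_{\dF,q}\}$ gates into the code by degree growth, so distance $\ge3$ survives. You compress the remaining gates into a core of $s_{g,q}$ modes, absorb the leftover Gaussian $G$ into the code, and bound the overlap with $\ket{\chi}_C\otimes\ket{0}_T$.

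\textbf{The product of $t$ factors.} The gap is exactly in the step you flag. Knill--Laflamme only makes the joint occupation law of any $\le t$ modes \emph{codeword-independent}. It gives no independence \emph{between} modes, so taking the $t$ most occupied modes and multiplying marginals fails. For example, take a block of $t$ modes ($t$ even) frozen in $\sqrt{1-\epsilon}\ket{0^t}+\sqrt{\epsilon}\ket{1^t}$, tensored with any code on the other modes. This leaves the Majorana distance unchanged, yet if these are the most occupied modes, their joint vacuum probability is $1-\epsilon$, not $(1-\epsilon)^t$.

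Your first instinct, a chain of conditional probabilities, is the right one. It needs an adaptive choice of modes plus the fact that postselection preserves the logical dimension:
\begin{itemize}
\item Since $\Pi_\ell$ has Majorana degree $2<d$, we have $Q\Pi_\ell Q=p_\ell Q$. So, if $p_\ell>0$, $\Pi_\ell Q/\sqrt{p_\ell}$ is an isometry on $\operatorname{im}(Q)$; if $p_\ell=0$ the overlap already vanishes.
\item The postselected space is again a rank-$2^k$ code on $n-1$ modes, with distance $\ge d-2$.
\item The target restricted to the remaining modes is still $\ket{\chi}_C\otimes\ket{0}$, and the overlap picks up exactly a factor $\sqrt{p_\ell}$.
\end{itemize}
The single-mode occupation bound is then re-derived for the \emph{conditioned} code to pick the next mode. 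This runs for $t=\lfloor(d'-1)/2\rfloor$ rounds while the distance stays $\ge3$; this is the paper's mode-postselection lemma and recursion. An unconditional average occupation says nothing about the conditional factors.

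\textbf{Handling the core.} Your reduction to ``a code on $n-s$ modes carrying about $k-s$ qubits'' by tracing out the core does not work. A partial trace produces no subspace and no distance, so there is nothing for the single-mode bound to act on. Similarly, restricting to a Gaussian state whose covariance matches the code's is beside the point, since the bound must hold for every tail-vacuum target.

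The paper keeps the full code and counts dimensions. With $a=a_{n-c,k-c}$, we have $\dim\bigl(\cH_C\otimes\operatorname{span}\{\ket{x}:\abs{x}\le a\}\bigr)=2^cV_{n-c}(a)<2^k$. This forces a codeword with tail occupation $>a$. Codeword-independence of each $\langle N_j\rangle$ then yields a tail mode with $\langle N_\ell\rangle\ge(a+1)/(n-c)$ for all codewords.

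Your entropy argument is a valid substitute for this step if run on the full code. Dephase only the tail occupations of $P/2^k$, which gives $k\le H(X_T)+c$. Subadditivity plus concavity then give a tail mode with occupation at least $H_2^{-1}((k-c)/(n-c))$. Because each postselected code is again maximally mixed on $2^k$ dimensions, this can be reused in every round.

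\textbf{Minor points.}
\begin{itemize}
\item The gates' $\le s$ Majorana directions fit into at most $s/2$ modes. It is the Gaussian mode-wise Schmidt decomposition of the initial state that enlarges this to a core of $s$ modes with exact vacuum on the rest; this is the compression lemma the paper cites.
\item $\max\{n-s,n-k\}=n-\min\{k,s\}$, and its $n-k$ branch only matters in the trivial case $s\ge k$.
\end{itemize}
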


The proof is given in
Appendix~\ref{app:proof_universal_non-Gaussian_cost}.
The main idea is to split the circuit into two parts. We absorb the
$\min\{g,r_{\dF,q}\}$ output-nearest non-Gaussian gates into the code projector,
leaving a transformed code with Majorana distance at least three.
After collecting the remaining Gaussian layers at the output, the
residual non-Gaussian gates can affect only a core of at most
$s_{g,q}$ fermionic modes via a compression trick~\cite{mele2025EfficientLearningQuantum}, while the complementary subsystem remains Gaussian. 
If $s_{g,q}<k$, this core is too small to accommodate the full logical dimension of the code. The Gaussian complement then necessarily produces a nonzero loss in the overlap with every codeword. 
Iterating this argument for the number of steps permitted by the remaining Majorana distance yields Eq.~\eqref{eq:unified_gq_fidelity_bound}. 
Exact preparation therefore requires $s_{g,q}\geq k$, which gives Eq.~\eqref{eq:exact_non_Gaussian_cost_from_fidelity}.

Thus, for fixed $q=\cO(1)$, exact codeword preparation requires
\begin{equation}\label{eq:distance_logical_bound}
g=\Omega(d_{\mathrm F}+k)
\end{equation}
non-Gaussian gates. 
This bound quantitatively captures the intrinsic difficulty of fermionic quantum error correction. The code distance $d_{\mathrm F}$ and the number of encoded qubits $k$ are two of its most fundamental parameters, characterizing robustness against Majorana noise and logical information storage capacity, respectively.
Our result shows that improving either capability necessarily requires a corresponding increase in non-Gaussian resources: neither stronger error protection nor larger logical capacity can be achieved at low non-Gaussian cost.

\section{Entanglement Distillation}\label{sec:entanglement_distillation}

The preceding results naturally raise the question of whether the limitations of fermionic Gaussian operations extend beyond quantum error correction to other closely related error-suppression tasks.
A particularly pertinent test case is entanglement distillation, which is intimately connected to quantum error correction and can enable reliable quantum communication through the purification of noisy entangled pairs~\cite{bennett1996mixed}.
Remarkably, the answer reveals a sharp boundary of our obstruction.
While non-trivial fermionic quantum error correction necessarily requires non-Gaussian resources, local Gaussian measurements with postselection can distill fermionic Gaussian entanglement.
For the state family considered below, however, products of local Gaussian channels cannot increase the EPR-pair fidelity, even when supplemented with classical randomness.
Thus, fermionic Gaussianity does not preclude error suppression in general: the obstruction to encoding and protecting arbitrary logical quantum information does not prevent the selective purification of a fixed entanglement resource.

To show this, we 
consider the parametrized fermionic Gaussian pair
\begin{equation}
\ket{\psi_\theta}
=
\cos\theta\ket{0,0}
+
\sin\theta\ket{1,1},
\qquad
0<\theta\leq\frac{\pi}{4},
\end{equation}
and denote the EPR pair by $\ket{\Phi^{+}}=\ket{\psi_{\theta=\pi/4}}.$
The fidelity of a bipartite state $\rho$ with $\ket{\Phi^+}$ is defined as
\begin{equation}
F_{\mathrm{EPR}}(\rho)
=\bra{\Phi^+}\rho\ket{\Phi^+}.
\end{equation}
In particular, $F_{\mathrm{EPR}}(\ketbra{\psi_\theta}{\psi_\theta})=(1+\sin(2\theta))/2$ provides an equivalent ordering of entanglement within the state family $\{\ket{\psi_\theta}\}$. Thus, our target is to consume many copies of $\ket{\psi_\theta}$ to obtain a single copy of $\ket{\psi_{\theta^\prime}}$ with $F_{\mathrm{EPR}}(\ketbra{\psi_{\theta^\prime}}{\psi_{\theta^\prime}})>F_{\mathrm{EPR}}(\ketbra{\psi_\theta}{\psi_\theta})$. To preserve the canonical anticommutation relations across fermionic subsystems, we use the fermionic tensor product $\otimes_{\mathrm F}$ (see details in Appendix~\ref{sec:notation}).

\begin{theorem}[Fermionic Gaussian operations in entanglement distillation]
\label{thm:entanglement_distillation}
First, convex mixtures of local trace-preserving fermionic Gaussian channels cannot increase the EPR-pair fidelity of any finite number of copies of $\ket{\psi_\theta}$.
More precisely, for any finite $m$, probability distribution $\{p_r\}$, and arbitrary local trace-preserving fermionic Gaussian channels $\Lambda_A^{(r)}$ and $\Lambda_B^{(r)}$ mapping $m$ input modes to one output mode each,
\begin{equation}\label{eq:Gaussian_entanglemenet_distillation_upper_bound}
F_{\mathrm{EPR}}
\left[
\sum_r p_r
\bigl(\Lambda_A^{(r)}\otimes_\mathrm{F}\Lambda_B^{(r)}\bigr)
\left(
\ketbra{\psi_\theta}{\psi_\theta}_{A,B}^{\otimes_\mathrm{F} m}
\right)
\right]
\leq
F_{\mathrm{EPR}}(\ketbra{\psi_\theta}{\psi_\theta}).
\end{equation}
By contrast, there exists a local fermionic Gaussian measurement that can probabilistically convert a single copy of $\ket{\psi_\theta}$ into the maximally entangled state vector $\ket{\Phi^{+}}$ conditioned on the successful outcome, with success probability $p_{\mathrm{succ}}=2\sin^{2}\theta$.
\end{theorem}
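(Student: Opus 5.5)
The proof has two independent parts. For the upper bound I would reduce everything to covariance matrices and use Wick's theorem on the two-mode output. For the distillation protocol I would give an explicit local Gaussian filter.

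\emph{Reduction to a single product channel.} Since $F_{\mathrm{EPR}}$ is linear in $\rho$, the left-hand side of Eq.~\eqref{eq:Gaussian_entanglemenet_distillation_upper_bound} is a convex combination of the values for each $r$. It therefore suffices to prove the bound for a single product $\Lambda_A\otimes_{\mathrm F}\Lambda_B$.

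\emph{Expanding the fidelity in Majorana operators.} Label the output Majoranas $\gamma_1,\gamma_2$ (mode $A$) and $\gamma_3,\gamma_4$ (mode $B$). The projector $\ketbra{\Phi^+}{\Phi^+}$ is stabilized by the total parity $-\gamma_1\gamma_2\gamma_3\gamma_4$ and by two quadratic cross terms of the form $\pm i\gamma_a\gamma_b$, with $a\in\{1,2\}$ and $b\in\{3,4\}$. Expanding gives
\begin{equation*}
F_{\mathrm{EPR}}(\rho)=\tfrac14\bigl(1+\langle -\gamma_1\gamma_2\gamma_3\gamma_4\rangle_\rho+\sigma_1\Gamma_{a b}(\rho)+\sigma_2\Gamma_{a' b'}(\rho)\bigr),
\end{equation*}
with fixed signs $\sigma_{1,2}$ and fixed index pairs that I would pin down by direct computation. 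The parity term is at most $1$ for any state, so
\begin{equation*}
F_{\mathrm{EPR}}(\rho)\le \tfrac12+\tfrac12\,\|C_{\mathrm{out}}\|_{\infty},
\end{equation*}
where $C_{\mathrm{out}}$ is the $2\times2$ $A$--$B$ cross block of $\Gamma(\rho)$. Here Wick's theorem is not even needed, since only the bound on the parity term is used. For the input state I would check that $C=\sin(2\theta)\,\bigoplus_{j=1}^m K$ with $K$ a fixed $2\times 2$ orthogonal block, so $\|C\|_\infty=\sin 2\theta$. The local blocks of the input equal $\cos(2\theta)$ times the standard symplectic form.

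\emph{Contraction of the cross block.} I would invoke the characterization of trace-preserving fermionic Gaussian channels~\cite{bravyi2005LagrangianRepresentationFermionica}: each acts on covariance matrices as $\Gamma\mapsto X\Gamma X^{T}+Y$, with $\|X\|_\infty\le 1$ and $Y$ antisymmetric. A local product channel then has $X=X_A\oplus X_B$ and $Y=Y_A\oplus Y_B$ (no cross term), so
\begin{equation*}
C_{\mathrm{out}}=X_A C X_B^{T}.
\end{equation*}
Hence $\|C_{\mathrm{out}}\|_\infty\le\sin 2\theta$, and $F_{\mathrm{EPR}}\le (1+\sin2\theta)/2=F_{\mathrm{EPR}}(\ketbra{\psi_\theta}{\psi_\theta})$.

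The main obstacle is making this step rigorous in the fermionic setting. One must check that the fermionic tensor product $\otimes_{\mathrm F}$ of $m$ copies, with its ordering of $A$ and $B$ modes, indeed yields the block structure above. One must also check that local Gaussian channels, which are parity-preserving and possibly use ancillas, have no $A$--$B$ cross contribution in $Y$. I would verify the latter by dilating each channel as ancilla-plus-unitary-plus-trace and tracking the covariance blocks.

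\emph{The distillation protocol.} Alice first applies the Gaussian particle--hole map $a_A\mapsto a_A^\dagger$, giving $\cos\theta\ket{1,0}+\sin\theta\ket{0,1}$. She then adjoins an ancilla mode $c$ in the vacuum and applies the Gaussian beam splitter $a_A\mapsto \cos\varphi\, a_A+\sin\varphi\, c$. Next she measures the occupation of $c$ and postselects on outcome $0$. Finally she undoes the particle--hole map.

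The unnormalized postselected state is $\cos\theta\cos\varphi\ket{0,0}+\sin\theta\ket{1,1}$. Choosing $\cos\varphi=\tan\theta$, which is admissible because $\theta\le\pi/4$, yields $\ket{\Phi^+}$. The success probability is
\begin{equation*}
p_{\mathrm{succ}}=\cos^2\theta\tan^2\theta+\sin^2\theta=2\sin^2\theta.
\end{equation*}
Every step (adjoining a vacuum ancilla, Gaussian unitaries, occupation measurement) is a fermionic Gaussian operation, so this completes the second claim.
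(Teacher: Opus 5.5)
Your treatment of the upper bound is essentially the paper's argument step for step. It uses linearity of $F_{\mathrm{EPR}}$ over the mixture and the expansion $\ketbra{\Phi^+}{\Phi^+}=\frac14(\bI+X_AX_B-Y_AY_B+Z_AZ_B)$, with the parity term $Z_AZ_B=-\gamma_1\gamma_2\gamma_3\gamma_4$ bounded by one. It then uses the block-diagonal cross block $\mathsf{C}=\bigoplus_\ell\mathsf{C}_\theta$ with $\|\mathsf{C}\|_\infty=\sin 2\theta$, and the contraction $\mathsf{C}_{\mathrm{out}}=\mathsf{M}_A\mathsf{C}\mathsf{M}_B^{\mathsf T}$ from Lemma~\ref{lem:Gaussian_channel_covariance_matrix}.

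For the distillation part you take a different, more operational route. The paper writes the instrument down directly, $K_t=t\ketbra{0}{0}+\ketbra{1}{1}$ and $L_t=\sqrt{1-t^2}\ketbra{1}{0}$ with $t=\tan\theta$. It certifies Gaussianity by computing both fermionic Choi operators and checking $\Gamma^2=-\bI$. You instead assemble the filter from primitives (vacuum ancilla, Gaussian unitaries, occupation measurement). Gaussianity of both branches then holds by construction and matches the main-text definition of a Gaussian measurement, with no Choi computation. The two instruments coincide: your success branch acts on $A$ exactly as $K_t$ with $t=\cos\varphi$, and the $c$-occupied branch corresponds to $L_t$.

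One simplification is worth making. The particle-hole step is a determinant-$(-1)$, parity-flipping transformation, which in the paper's framework needs an odd Majorana factor or an extra dilation. Also, the relative sign of your intermediate $\ket{0,1}$ component depends on how that odd unitary is extended to $AB$, although the sign cancels once you undo it. You can avoid the odd step entirely: conjugating your beam splitter by the particle-hole map is, up to the sign of $\varphi$, just the parity-preserving pairing rotation $\exp[\varphi(a_A^\dagger c^\dagger-c\,a_A)]$. This rotation maps $\ket{0_A0_c}\mapsto\cos\varphi\ket{0_A0_c}+\sin\varphi\ket{1_A1_c}$ and fixes $\ket{1_A0_c}$, so postselecting $c$ in the vacuum gives $K_t$ directly.
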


The proof is presented in Appendix~\ref{app:proof_entanglement_distillation}.
In bosonic systems, Gaussian entanglement distillation is impossible not only under local Gaussian channels, but also when Gaussian measurements, classical communication, and processing of measurement outcomes, including feedforward and postselection, are allowed~\cite{eisert2002distilling,Giedke2002distillation,Fiurasek2002Gaussian}.
By contrast, Gaussian postselection can enable fermionic entanglement distillation, revealing a fundamental distinction between the two settings.
This difference arises despite the close similarity of their covariance matrix transformation laws (see Appendix~\ref{app:Schur_complement_formula}).
In both cases, Gaussian postselection is described by a Schur complement~\cite{eisert2002distilling}, with a characteristic sign difference reflecting the antisymmetric fermionic and symmetric bosonic covariance matrices.

\section{Summary and outlook}\label{sec:outlook}

Our results establish fermionic non-Gaussianity as a fundamental resource for quantum error correction in fermionic systems, playing a role closely analogous to that of entanglement in qubit systems.
In conventional quantum error correction, logical information is encoded nonlocally in global degrees of freedom, so that local noise cannot access or irreversibly destroy it.
Our results show that, in fermionic systems, logical information must similarly be stored in genuine many-body correlations that are invisible to low-weight Majorana observables.
This requirement is fundamentally incompatible with fermionic Gaussianity: by Wick's theorem, all higher-order correlations of a pure Gaussian state are completely determined by its covariance matrix, leaving no independent many-body degrees of freedom in which protected logical information can be encoded.

These results delineate the extent to which the native advantages of fermionic hardware can persist at the fault-tolerant level.
Native fermionic platforms can represent fermionic degrees of freedom directly, avoiding the encoding overhead associated with mappings to qubits, while their natural Gaussian dynamics, such as Majorana braiding and single-particle evolution in cold-atom systems, are experimentally accessible and efficiently classically simulable~\cite{Valiant2002QuantumCircuits,terhal2002classical}.
Quantum error correction, however, requires non-Gaussian control beyond quadratic dynamics, such as controlled atomic collisions~\cite{CollisionalQuantumGates}, adding experimental complexity while leaving the Gaussian framework in which efficient classical simulation is guaranteed.
In Majorana architectures, these operations lie outside the native topologically protected braiding gate set~\cite{PhysRevB.99.144521} and require multi-Majorana interactions or magic-state resources.
This stands in sharp contrast to qubit quantum error correction, where non-trivial codes can remain entirely within the stabilizer framework and can therefore be efficiently simulated at large scale using stabilizer-based tools~\cite{Gidney2021stimfaststabilizer}.
Thus, the resources required for fault tolerance introduce physical implementation overhead and additional challenges for classical simulation, and our bounds provide a quantitative benchmark for assessing whether the underlying hardware advantages survive once quantum error correction is incorporated.

Beyond quantum error correction, our results suggest possible connections to fermionic quantum phases of matter and complexity theory. 
The \emph{no low-energy trivial states} (NLTS) property excludes shallowly preparable states throughout the low-energy sector of a local Hamiltonian~\cite{AnshuBreuckmannNirkhe2023,Herasymenko2024FermionicNLTS}, while the proposed \emph{no low-energy trivial magic} (NLTM) property strengthens this requirement by demanding that such low-energy states exhibit long-range magic, meaning that their nonstabilizerness cannot be removed by shallow local unitaries~\cite{WeiLiu2025LongRangeMagic,Parham2026MagicHierarchy}. Our non-Gaussianity bounds suggest that fermionic codes may exhibit a related resource-sensitive obstruction at the level of codeword preparation. Together with the finite-depth local unitary viewpoint of phase classification~\cite{GuWangWen2015}, this raises the possibility that non-Gaussian preparation cost could provide a complementary perspective on interacting fermionic phases.\\

\begin{acknowledgments}
We thank Boren Gu, Zhide Lu, Chenfeng Cao, Tobias Haug, and Hong-Ye Hu for the fruitful discussions. Y.~T.~and J.~E.~are funded by the BMFTR (PasQuops, QuSol, 
Hybrid++, FermiQP), Berlin Quantum, 
the Munich Quantum Valley, the Quantum Flagship
(Millenion, PasQuans2), 
the DFG 
(SPP 2514, BoLaCo), the Clusters of Excellence (ML4Q, MATH+), the QuantERA (SPDCode),
and the European Research Council (DebuQC).
P.~F.~is funded by the European Research Council (ERC) project QPhysComplex.
Z.-W.~L.~is supported in part by NSFC under Grant No.~12475023, Dushi Program, and startup funding from YMSC.
\end{acknowledgments}

\bibliography{bib_arxiv}

\clearpage

\appendix
\onecolumngrid
\appendixtableofcontents

\section{Preliminaries}\label{app:preliminaries}
We first review the necessary notations for the fermionic systems and the concept of fermionic Gaussianity, providing the definitions for fermionic Gaussian unitaries, states, measurements, and operations.

\subsection{Notations for fermionic systems}\label{sec:notation}
For a system of $n$ fermionic modes, denote the annihilation and creation operators for mode $j\in[n]$ as $a_j$ and $a_j^\dagger$. The canonical anticommutation relations require
\begin{equation}
\left\{a_j,a_k^\dagger\right\}
=
\delta_{j,k}\bI,
\qquad
\left\{a_j,a_k\right\}
=
\left\{a_j^\dagger,a_k^\dagger\right\}
=
0.
\end{equation}
Each fermionic mode can be decomposed into a pair of \emph{Majorana operators},
\begin{equation}
\gamma_{2j-1}
=
a_j+a_j^\dagger,
\quad
\gamma_{2j}
=
-i(a_j-a_j^\dagger),\quad j\in[n].
\end{equation}
The Jordan--Wigner transformation~\cite{jordan1928BerPaulischeQuivalenzverbot} encodes 
an $n$-mode fermionic system into $n$ qubits with Hilbert space $(\mathbb C^2)^{\otimes n}$ and maps the $2n$ Majorana operators to the operators 
\begin{equation}\label{eq:Jordan-Wigner_app}
 \gamma_{2j-1}
 =
 \left(\prod_{k=1}^{j-1} Z_k\right)X_j,
 \quad
 \gamma_{2j}
 =
 \left(\prod_{k=1}^{j-1} Z_k\right)Y_j,
 \quad j\in[n],
\end{equation}
in $\mathcal{L}((\mathbb C^2)^{\otimes n})$, where $\{X_k,Y_k,Z_k\}$ are Pauli operators acting on the $k$-th qubit.
The Majorana operators satisfy $\gamma_j=\gamma_j^\dagger$, $\gamma_j^2=\mathbb I$, and $\{\gamma_j,\gamma_k\}=2\delta_{j,k}\mathbb I$.
For a subset $S=\{j_1,\cdots,j_{\abs{S}}\}\subseteq[2n]$, with
$j_1<\cdots<j_{\abs{S}}$, we define the corresponding \emph{Majorana product} $\gamma_S$ and \emph{Hermitian Majorana
product} $\widehat{\gamma}_S$,
\begin{equation}\label{eq:def_Majorana_product}
\gamma_S=\gamma_{j_1}\cdots\gamma_{j_{\abs{S}}},\qquad\widehat{\gamma}_S=\widehat{\gamma}_S^\dagger=(-i)^{\abs{S}(\abs{S}-1)/2}\gamma_S.
\end{equation}
Especially $\gamma_\emptyset=\widehat{\gamma}_\emptyset=\bI$. The Hermitian Majorana products satisfy 
\begin{equation}\label{eq:Hermitian_Majorana_product_product}
\widehat\gamma_S\widehat\gamma_T=(-1)^{\omega(S,T)}(-i)^{\abs{S}(\abs{S}-1)/2+\abs{T}(\abs{T}-1)/2-\abs{S\triangle T}(\abs{S\triangle T}-1)/2}\widehat\gamma_{S\triangle T},
\end{equation}
where $\omega(S,T)=\abs{\{(s,t)\in S\times T| s>t\}}$ and $S\triangle T=(S\setminus T)\cup(T\setminus S)$.
The \emph{Majorana weight} of $\widehat{\gamma}_S$ is the cardinality of $S$, denoted by
$\operatorname{wt}_{\mathrm F}
\left(
\widehat{\gamma}_S
\right)
=
\abs{S}$, the same definition also applies to $\gamma_S$.
The operators $\{\widehat{\gamma}_S\}_{S\subseteq[2n]}$ form an orthogonal operator basis of $\mathcal{L}((\mathbb C^2)^{\otimes n})$ w.r.t. the Hilbert--Schmidt inner product. Any nonzero operator $O\in\mathcal{L}((\mathbb C^2)^{\otimes n})$ can be decomposed as a linear combination of Hermitian products, and we define its \emph{Majorana degree} $\deg_{\mathrm{F}}(O)$ as the maximum Majorana weight of a Hermitian Majorana product with nonzero coefficient in its decomposition. In this sense, the Hermitian Majorana products are also referred to as Majorana monomials.
The total fermion parity operator is given by
\begin{equation}
\Pi=(-1)^{\sum_{j=1}^na_j^\dagger a_j}=\widehat{\gamma}_{[2n]}=Z^{\otimes n}.
\end{equation}
An operator is called even if it commutes with $\Pi$, equivalently, all its coefficients under the basis of odd-Majorana-weight Hermitian Majorana products vanish. Due to the superselection rule, all physical fermionic density matrices (denoted as $\mathcal{D}_\mathrm{F}$) are parity-preserving, i.e., even fermionic operators. 

Suppose there are two fermionic systems $\cH_1$ and $\cH_2$,
containing $n_1$ and $n_2$ fermionic modes, respectively, with
Majorana operators
$\left\{\gamma_j^{(1)}\right\}_{j\in\cI_1}$ and
$\left\{\gamma_j^{(2)}\right\}_{j\in\cI_2}$, where
$\abs{\cI_i}=2n_i$. Let
$\cH_{1,2}$
denote the Fock space of the composite $(n_1+n_2)$-mode fermionic system, with
Majorana operators
$\left\{\gamma_j^{(1,2)}\right\}_{j\in\cI_1\sqcup\cI_2}$.
As $\mathbb Z_2$-graded $*$-algebras, the operator algebra of the composite system is identified with the graded tensor product, i.e., the fermionic tensor product $\otimes_\mathrm{F}$, of the two subsystem operator algebras,
$\cL(\cH_{1,2})
\cong
\cL(\cH_1)
\otimes_{\mathrm F}
\cL(\cH_2)$.
Notice that the naive ordinary tensor product embedding $O_1\mapsto O_1\otimes\bI$ and $O_2\mapsto\bI\otimes O_2$ does not reproduce the anti-commutation relations of odd operators belonging to distinct fermionic subsystems, and a faithful ordinary tensor product representation can instead be obtained by introducing 
a $*$-algebra isomorphism
\begin{equation}
\cJ:
\cL(\cH_{1,2})
\xrightarrow{\cong}
\cL(\cH_1\otimes\cH_2),
\label{eq:CAR-representation-isomorphism}
\end{equation}
defined on the Majorana operators by
\begin{equation}
\cJ\left(\gamma_j^{(1,2)}\right)
=
\begin{cases}
\gamma_j^{(1)}\otimes\bI_{\cH_2},
& j\in\cI_1,\\[1mm]
\Pi_1\otimes\gamma_j^{(2)},
& j\in\cI_2,
\end{cases}
\label{eq:CAR-ordinary-representation}
\end{equation}
where
$\Pi_1=\widehat{\gamma}_{\cI_1}^{(1)}$
is the parity operator on $\cH_1$ and it ensures that Majorana operators associated with different
fermionic subsystems anti-commute. This is a representation equivalent to, although using a different placement of the parity operator from, the convention adopted in
Ref.~\cite{bravyi2005LagrangianRepresentationFermionica}. The Jordan–Wigner transformation in Eq.~\eqref{eq:Jordan-Wigner_app} can be viewed as an example of $\cJ$: the Pauli $Z$-string preceding an odd operator on mode $j$ is precisely the parity operator of all modes ordered before $j$. In particular, the fermionic tensor product of Majorana products
$\gamma_S^{(1)}$ and $\gamma_T^{(2)}$, with
$S\subseteq\cI_1$ and $T\subseteq\cI_2$, is transformed by
\begin{equation}
\cJ\left(
\gamma_S^{(1)}
\otimes_{\mathrm F}
\gamma_T^{(2)}
\right)
=
\gamma_S^{(1)}
\Pi_1^{\abs{T}}
\otimes
\gamma_T^{(2)}.
\label{eq:unphased-Majorana-ordinary-representation}
\end{equation}
For a nonzero operator $O$ with a definite fermionic parity (also called a parity-homogeneous operator), let
$p(O)\in\{0,1\}$ denote its fermionic parity degree, defined by
$\Pi O\Pi
=
(-1)^{p(O)}O$.
The representation $\cJ$ acts on $O_1\otimes_\mathrm{F}O_2$ for two such operators
$O_1\in\cL(\cH_1)$ and
$O_2\in\cL(\cH_2)$ as
\begin{equation}
\cJ\left(
O_1\otimes_{\mathrm F}O_2
\right)
=
O_1\Pi_1^{p(O_2)}
\otimes O_2.
\label{eq:graded-to-ordinary-representation}
\end{equation}
More generally, for parity-homogeneous operators
$O_1,O_1'\in\cL(\cH_1)$ and
$O_2,O_2'\in\cL(\cH_2)$, the fermionic tensor product obeys
\begin{equation}
\left(
O_1\otimes_{\mathrm F}O_2
\right)
\left(
O_1'\otimes_{\mathrm F}O_2'
\right)
=
(-1)^{p(O_2)p(O_1')}
\left(
O_1O_1'
\right)
\otimes_{\mathrm F}
\left(
O_2O_2'
\right).
\label{eq:graded-tensor-rule}
\end{equation}
Specially, if $O_1$ and $O_2$ are both odd operators, then
\begin{equation}
\left(
O_1\otimes_{\mathrm F}\bI
\right)
\left(
\bI\otimes_{\mathrm F}O_2
\right)
=
-
\left(
\bI\otimes_{\mathrm F}O_2
\right)
\left(
O_1\otimes_{\mathrm F}\bI
\right),
\label{eq:cross-subsystem-anti-commutation}
\end{equation}
which explicitly reproduces the canonical anti-commutation relation
between odd operators associated with distinct fermionic
subsystems.

\subsection{Fermionic Gaussian unitaries}
A parity-preserving fermionic Gaussian unitary on $n$ modes is
generated by an even quadratic Hamiltonian,
\begin{equation}
U_A
=
\exp\left[
\frac{1}{4}
\sum_{j,k=1}^{2n}
A_{j,k}\gamma_j\gamma_k
\right],
\end{equation}
where $A$ is a real antisymmetric  $2n\times2n$ matrix. Its Heisenberg action on Majorana operators is
\begin{equation}
U_A^\dagger\gamma_jU_A
=
\sum_k\left(e^A\right)_{j,k}\gamma_k,
\qquad
e^A\in\operatorname{SO}(2n).
\end{equation}
From the above definition, we denote the $n$-mode generalized fermionic Gaussian unitaries, also known as $n$-qubit generalized matchgate unitaries under the Jordan-Wigner transformation~\cite{Valiant2002QuantumCircuits}, by
$\mathrm{M}_n=\left\{U_R| R\in\mathrm{O}(2n)
\right\}$,
where the adjoint action of each $U_R$ is characterized by an orthogonal matrix $R\in\operatorname{O}(2n)$,
\begin{equation}\label{eq:Gaussian_unitary_action}
U_R^\dagger\gamma_jU_R
=
\sum_kR_{j,k}\gamma_k.
\end{equation}
The implementing unitary is unique up to a global phase.
The determinant-$(-1)$ component is parity-flipping and requires an
odd Majorana factor, or equivalently a parity-preserving Gaussian
dilation using an auxiliary fermionic mode.

\subsection{Fermionic Gaussian states and positive Gaussian operators}
A fermionic state $\sigma$ is Gaussian if it can be written as~\cite{bittel2025OptimalTraceDistanceBoundsa}
\begin{equation}\label{eq:def_Gaussian_state}
\sigma
=
U_R\bigotimes_{j=1}^n\left(\frac{\bI+\nu_jZ}2\right)U_R^\dagger,
\end{equation}
where $U_R$ is the Gaussian unitary specified by $R\in\operatorname{O}(2n)$ according to Eq.~\eqref{eq:Gaussian_unitary_action}, each parameter $\nu_j\in[-1,1]$, and $Z$ is the single-qubit Pauli operator. A Gaussian state $\sigma$ is pure if and only if all $\abs{\nu_j}=1$ in Eq.~\eqref{eq:def_Gaussian_state}, and equivalently, it can be prepared by applying a Gaussian unitary to the vacuum state vector $\ket{0^n}$. We denote the set of all pure Gaussian state vectors by 
\begin{equation}\label{eq:def_pure_Gaussian_state}
\mathcal{G}_{\textup{pure}}^{(n)}
=
\left\{
U\ket{0^n}
\middle|
U\in\mathrm{M}_n
\right\},
\end{equation}
and call a fermionic state \emph{convex Gaussian} if it is a mixture of pure Gaussian states.
Every fermionic Gaussian state, i.e., in the form of Eq.~\eqref{eq:def_Gaussian_state}, is convex Gaussian and can be prepared from a global vacuum by adjoining vacuum Gaussian auxiliary modes, applying a
parity-preserving Gaussian unitary to the system and auxiliary modes, and
tracing out the auxiliary modes.

For any fermionic state
$\rho\in\mathcal{D}_{\mathrm F}$, its covariance matrix
$\Gamma(\rho)$ is defined by
\begin{equation}\label{eq:correlation_matrix}
\Gamma_{j,k}(\rho)
=
-\frac{i}{2}
\Tr(
[\gamma_j,\gamma_k]\rho
),
\qquad
j,k\in[2n].
\end{equation}
The matrix $\Gamma(\rho)$ is real and antisymmetric, i.e.,
$\Gamma(\rho)=-\Gamma(\rho)^\mathsf{T}$ and $\norm{\Gamma(\rho)}_\infty\leq1$ for any $\rho$~\cite{bittel2025OptimalTraceDistanceBoundsa}. Fermionic
Gaussian states are characterized by Wick's theorem~\cite{wick1950EvaluationCollisionMatrix}: all odd-order
Majorana correlation functions vanish, while every even-order
correlation function is determined by the covariance matrix. In
particular, for
$S\subseteq[2n]$ with even $\abs{S}$,
\begin{equation}
\Tr(\widehat{\gamma}_S\sigma)
=
\operatorname{Pf}\left(
\Gamma_S(\sigma)
\right),
\end{equation}
where $\operatorname{Pf}$ denotes the Pfaffian and
$\Gamma_S(\sigma)$ is the principal submatrix of
$\Gamma(\sigma)$ indexed by $S$. A fermionic state $\sigma$ is
pure Gaussian if and only if
$\Gamma(\sigma)^2=-\bI$.
Consequently, a pure fermionic Gaussian state is uniquely determined
by its covariance matrix. When a fermionic state $\rho$ is evolved under a Gaussian unitary $U_R$ in Eq.~\eqref{eq:Gaussian_unitary_action}, its covariance matrix changes by the adjoint action with the corresponding $R\in\mathrm{O}(2n)$ as
\begin{equation}\label{eq:Gaussian_unitary_covariance_matrix}
\Gamma\left(U_R\rho U_R^\dagger\right)=R\Gamma(\rho)R^{\mathsf{T}}.
\end{equation}
A more general definition is the positive fermionic Gaussian operator: a nonzero positive operator $O\geq0$ is Gaussian if and only if $O/\Tr(O)$ is a Gaussian state, and the zero operator is Gaussian.

\subsection{Fermionic Gaussian completely positive maps and channels}\label{app:fermionic_Gaussian_maps}

Let
$\Phi:\cL(\cH_1)\rightarrow\cL(\cH_2)$
be a linear map, where $\cH_1$ is the Fock space of $n$ input
fermionic modes and $\cH_2$ is the Fock space of $m$ output
fermionic modes. We restrict throughout to parity-preserving maps,
namely maps satisfying
\begin{equation}
\Phi\left(
\Pi_1 O\Pi_1
\right)
=
\Pi_2\Phi(O)\Pi_2
\qquad
\forall O\in\cL(\cH_1),
\label{eq:parity_preserving_fermionic_map}
\end{equation}
where $\Pi_1$ and $\Pi_2$ are the fermion parity operators on
$\cH_1$ and $\cH_2$, respectively. Equivalently, $\Phi$ preserves
the $\mathbb Z_2$ grading of the fermionic operator algebra.
A parity-preserving linear map $\Phi$ is completely positive if,
for every finite-dimensional fermionic reference system $\cH_R$, the extended map $\Phi\otimes_{\mathrm F}\operatorname{id}_R:O_1\otimes_\mathrm{F}O_2\mapsto\Phi(O_1)\otimes_\mathrm{F}O_2$ maps positive operators to positive operators. Under the
ordinary tensor product representation of the canonical anticommutation relations (CAR) algebra on the composite fermionic system
introduced in Appendix~\ref{sec:notation}, this definition is equivalent to the usual notion
of complete positivity for matrix algebras
\cite{bravyi2005LagrangianRepresentationFermionica}. The map is
trace-non-increasing if
$\Tr[\Phi(O)]
\leq
\Tr(O)$ for all $O\geq0$,
and trace-preserving if equality holds for every positive operator
$O$.

To introduce the fermionic Choi representation, let
$\cH_{1'}\simeq\cH_1$
be an auxiliary $n$-mode reference Fock space isomorphic to the
input space. Let
$\{a_j\}_{j=1}^{n}$
be the annihilation operators of the modes of $\cH_1$, and let
$\{a_j'\}_{j=1}^{n}$
be those of the reference copy $\cH_{1'}$. Similarly, denote the
corresponding Majorana operators by
$\{\gamma_j\}_{j=1}^{2n}$ and
$\{\gamma_j'\}_{j=1}^{2n}$.
We fix the global fermionic mode ordering
\begin{equation}
a_1<\cdots<a_n<a_n'<\cdots<a_1'.
\label{eq:phase_free_ordering_H1_reference}
\end{equation}
For
$\vec\mu=(\mu_1,\cdots,\mu_n)\in\{0,1\}^n$,
define the occupation bases
\begin{equation}
\begin{aligned}
\ket{\vec\mu}_1
&=
(a_1^\dagger)^{\mu_1}
\cdots
(a_n^\dagger)^{\mu_n}
\ket{0}_1,\\
\ket{\vec\mu}_{1'}
&=
(a_n'^\dagger)^{\mu_n}
\cdots
(a_1'^\dagger)^{\mu_1}
\ket{0}_{1'}.
\end{aligned}
\label{eq:reversed_reference_occupation_basis}
\end{equation}
With this convention, the joint fermionic Fock space is represented
on the ordinary Hilbert-space tensor product
$\cH_1\otimes\cH_{1'}$. The reversed ordering of the reference
basis is chosen so that the fermionic signs generated when the
creation operators of the two subsystems are regrouped cancel
exactly. For a basis component containing $N$ occupied mode pairs,
moving all $a'^\dagger$ operators to the right of all
$a^\dagger$ operators produces the sign
$(-1)^{N(N-1)/2}$, while reversing the order of the occupied
reference creation operators produces the same sign. For example,
\begin{equation}
a_1^\dagger a_1'^\dagger a_2^\dagger a_2'^\dagger
=
-a_1^\dagger a_2^\dagger a_1'^\dagger a_2'^\dagger
=
a_1^\dagger a_2^\dagger a_2'^\dagger a_1'^\dagger.
\end{equation}
In this way, the fermionic maximally entangled state between $\cH_1$ and
$\cH_{1'}$ admits the phase-free representation
\begin{equation}
\begin{aligned}
\ket{\Psi_\mathrm{F}}
&=
2^{-n/2}
\sum_{\vec\mu\in\{0,1\}^n}
\ket{\vec\mu}_1\otimes\ket{\vec\mu}_{1'}\\
&=
2^{-n/2}
\prod_{j=1}^{n}
\left(
\bI+a_j^\dagger a_j'^\dagger
\right)
\ket{0}_{1,1'}\\
&=
\exp\left[
\frac{\pi}{4}
\sum_{j=1}^{n}
\left(
a_j^\dagger a_j'^\dagger-a_j' a_j
\right)
\right]
\ket{0}_{1,1'}\\
&=
\exp\left[
\frac{\pi}{8}
\sum_{j=1}^{n}
\left(
\gamma_{2j-1}\gamma'_{2j-1}
-
\gamma_{2j}\gamma'_{2j}
\right)
\right]
\ket{0}_{1,1'}.
\end{aligned}
\label{eq:fermionic_maximally_entangled_state}
\end{equation}
The last expression is generated from the joint vacuum by a
quadratic Majorana unitary, and therefore
$\ket{\Psi_\mathrm{F}}$ is a pure fermionic Gaussian state.
More precisely, let
$\rho_{\Psi_\mathrm{F}}$
denote the corresponding fermionic maximally entangled state in the
joint CAR algebra of $\cH_1$ and $\cH_{1'}$, whose fixed
reversed-reference ordinary tensor product representation satisfies
\begin{equation}
\cJ_{1,1'}
\left(
\rho_{\Psi_\mathrm{F}}
\right)
=
\ketbra{\Psi_{\mathrm F}}{\Psi_{\mathrm F}},
\label{eq:fermionic_Bell_state_representation}
\end{equation}
where $\cJ_{1,1'}:\cL(\cH_{1,1'})\rightarrow\cL(\cH_1\otimes\cH_{1'})$ is the isomorphism defined in Eq.~\eqref{eq:CAR-representation-isomorphism}. Similar notations apply below. We represent the fermionic extension of the parity-preserving linear map $\Phi$ as
\begin{equation}
\widetilde\Phi_{1'}
\coloneqq
\cJ_{2,1'}
\circ
(\Phi\otimes_{\mathrm F}\operatorname{id}_{1'})
\circ
\cJ_{1,1'}^{-1},
\end{equation}
acting on parity-homogeneous operators $O_1,O_1'\in\cL(\cH_1)$ by
\begin{equation}
\widetilde\Phi_{1'}(O_1\otimes O_1')=\Phi\left(O_1\Pi_1^{p(O_1')}\right)\Pi_2^{p(O_1')}\otimes O_1'.
\end{equation}
The fermionic Choi operator of $\Phi$ is defined, in the fixed
ordinary tensor product representation introduced above, by
\begin{equation}
\begin{aligned}
J_{\mathrm F}(\Phi)
&\coloneqq
\cJ_{2,1'}
\left[
\left(
\Phi\otimes_{\mathrm F}\operatorname{id}_{1'}
\right)
\left(
\rho_{\Psi_\mathrm{F}}
\right)
\right]\\
&=
\widetilde{\Phi}_{1'}
\left(
\ketbra{\Psi_{\mathrm F}}{\Psi_{\mathrm F}}
\right).
\end{aligned}
\label{eq:fermionic-Choi-operator}
\end{equation}

Using Eq.~\eqref{eq:fermionic_maximally_entangled_state} and expanding Eq.~\eqref{eq:fermionic-Choi-operator} in the occupation
basis gives
\begin{equation}\label{eq:fermionic_Choi_basis_expansion}
J_{\mathrm F}(\Phi)
=
2^{-n}
\sum_{\vec\mu,\vec\nu\in\{0,1\}^{n}}
\Phi\left(
\ketbra{\vec\mu}{\vec\nu}_1\Pi_1^{\varepsilon(\vec\mu,\vec\nu)}
\right)
\Pi_2^{\varepsilon(\vec\mu,\vec\nu)}
\otimes
\ketbra{\vec\mu}{\vec\nu}_{1'},
\end{equation}
where $\varepsilon(\vec\mu,\vec\nu)\coloneqq(\abs{\vec\mu}+\abs{\vec\nu})\mod 2$.
Consequently, the Choi correspondence is one-to-one: the action of
$\Phi$ on a parity-homogeneous $O\in\cL(\cH_1)$ can be reconstructed from $J_{\mathrm F}(\Phi)$ by
\begin{equation}
\Phi(O)
=
2^n
\Tr_{1'}
\left[
\left(
\bI_{\cH_2}\otimes \left(O\Pi_1^{p(O)}\right)^{\mathsf{T}_{1'}}
\right)
J_{\mathrm F}(\Phi)
\right]\Pi_2^{p(O)},
\label{eq:inverse_fermionic_Choi_map}
\end{equation}
where $\mathsf{T}_{1'}$ denotes the matrix transpose in the reversed
occupation basis of $\cH_{1'}$, after identifying
$\cH_{1'}\cong\cH_1$.
The map $\Phi$ is completely positive (CP) if and only if
$J_{\mathrm F}(\Phi)\geq0$. If it is
trace-preserving, it is called a fermionic quantum channel.

A nonzero fermionic CP map $\Phi$ is called
\emph{Gaussian} if its fermionic Choi operator
$J_{\mathrm F}(\Phi)$ is a positive fermionic Gaussian
operator, equivalently, in the fixed reversed-reference
representation used above,
$J_{\mathrm F}(\Phi)
/
\Tr[J_{\mathrm F}(\Phi)]$
is a fermionic Gaussian state with respect to the induced Majorana
representation on $\cH_2\otimes\cH_{1'}$. The zero map is included as a trivial Gaussian map.
Thus,
\begin{equation}
\Phi
\text{ is a fermionic Gaussian CP map}
\quad\Longleftrightarrow\quad
J_{\mathrm F}(\Phi)
\text{ is a positive Gaussian operator}.
\label{eq:Gaussian_map_Gaussian_Choi_equivalence}
\end{equation}
For parity-preserving maps, this Choi characterization is equivalent
to the quadratic Grassmann-kernel definition of fermionic Gaussian
maps introduced in Ref.~\cite{bravyi2005LagrangianRepresentationFermionica}. And Lemma~4 in Ref.~\cite{bravyi2005LagrangianRepresentationFermionica} proves that a Gaussian CP map always maps a positive Gaussian operator to a positive Gaussian operator.

A trace-preserving fermionic Gaussian CP map is called a fermionic
Gaussian channel. A Gaussian channel maps (convex) Gaussian states to (convex) Gaussian states. In particular, conjugation by a
parity-preserving Gaussian unitary $U$,
$\Phi_U(O)
=
UOU^\dagger$,
defines a fermionic Gaussian channel. More generally, the Stinespring representation of a Gaussian channel consists of appending a
Gaussian ancillary state, applying a parity-preserving Gaussian
unitary to the enlarged system, and tracing out any subset of modes. The action of a Gaussian channel on the covariance matrix of the input state can be seen from the following lemma, reformulated from Ref.~\cite{bravyi2005LagrangianRepresentationFermionica}.

\begin{lemma}[Action of a Gaussian channel on the covariance matrix,~\cite{bravyi2005LagrangianRepresentationFermionica}]\label{lem:Gaussian_channel_covariance_matrix}
For a fermionic Gaussian channel
$\Phi:\cL(\cH_1)\rightarrow\cL(\cH_2)$ mapping from an $n$-mode fermionic system to an $m$-mode fermionic system, $\Phi$ acts affinely on covariance matrices as
\begin{equation}\label{eq:Gaussian_channel_covariance_matrix}
\Gamma(\rho)\mapsto\Gamma(\Phi(\rho))=\mathsf{M}\Gamma(\rho)\mathsf{M}^{\mathsf{T}}+\mathsf{N},
\end{equation}
for some $\mathsf{M}\in\mathbb{R}^{2m\times2n}$, $\mathsf{N}\in\mathbb{R}^{2m\times2m}$ and $\mathsf{N}^\mathsf{T}=-\mathsf{N}$. Furthermore, $\bI_{2m}-\mathsf{M}\mathsf{M}^\mathsf{T}+i\mathsf{N}\geq0$ and $\norm{\mathsf{M}}_\infty\leq1$.
\end{lemma}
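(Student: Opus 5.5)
We would prove the lemma through the Stinespring picture of Gaussian channels recalled just above. A Gaussian channel $\Phi$ is realized in three steps. First, adjoin an $e$-mode fermionic Gaussian ancilla $\sigma_E$ with covariance matrix $\Gamma_E$. Second, apply a Gaussian unitary $U_R$ with $R\in\operatorname{O}(2(n+e))$ to the joint system. Third, discard all but $m$ modes. We would track the covariance matrix through each step and then read off $\mathsf M$ and $\mathsf N$ explicitly.

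\emph{Step 1 (adjoining the ancilla).} For the joint state $\rho\otimes_{\mathrm F}\sigma_E$ we claim $\Gamma(\rho\otimes_{\mathrm F}\sigma_E)=\Gamma(\rho)\oplus\Gamma_E$.
\begin{itemize}
\item The diagonal blocks are immediate.
\item A cross entry is proportional to $\Tr[(\gamma_j\otimes_{\mathrm F}\gamma'_k)(\rho\otimes_{\mathrm F}\sigma_E)]$. Using Eq.~\eqref{eq:graded-tensor-rule}, this factorizes up to a sign as $\Tr(\gamma_j\rho)\Tr(\gamma'_k\sigma_E)$.
\item Both factors vanish because physical states are even, so the cross blocks are zero.
\end{itemize}

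\emph{Step 2 (unitary).} By Eq.~\eqref{eq:Gaussian_unitary_covariance_matrix}, the covariance matrix becomes $R(\Gamma(\rho)\oplus\Gamma_E)R^{\mathsf T}$. This holds for arbitrary (not necessarily Gaussian) $\rho$, because it only uses the linear Heisenberg action on single Majoranas.

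\emph{Step 3 (discarding modes).} The two-point functions of the kept modes are the corresponding entries of the joint covariance matrix. Discarding modes therefore amounts to taking a principal submatrix $Q(\cdot)Q^{\mathsf T}$, where $Q\in\mathbb R^{2m\times2(n+e)}$ selects the kept Majorana indices and has orthonormal rows. After reordering so that the kept modes come first, this holds without sign subtleties.

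\emph{Reading off $\mathsf M$ and $\mathsf N$.} Write $QR=[\mathsf M\ \mathsf K]$ with $\mathsf M\in\mathbb R^{2m\times2n}$ and $\mathsf K\in\mathbb R^{2m\times 2e}$. Then
\begin{equation*}
\Gamma(\Phi(\rho))=\mathsf M\Gamma(\rho)\mathsf M^{\mathsf T}+\mathsf K\Gamma_E\mathsf K^{\mathsf T},
\end{equation*}
so we set $\mathsf N\coloneqq\mathsf K\Gamma_E\mathsf K^{\mathsf T}$. This $\mathsf N$ is real and antisymmetric, and it is independent of $\rho$, so the map is affine as claimed.

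\emph{The constraints.} Since $QR$ has orthonormal rows, $\mathsf M\mathsf M^{\mathsf T}+\mathsf K\mathsf K^{\mathsf T}=\bI_{2m}$. This immediately gives $\norm{\mathsf M}_\infty\leq1$. It also gives
\begin{equation*}
\bI_{2m}-\mathsf M\mathsf M^{\mathsf T}+i\mathsf N=\mathsf K(\bI_{2e}+i\Gamma_E)\mathsf K^{\mathsf T}.
\end{equation*}
The right-hand side is positive semidefinite because $i\Gamma_E$ is Hermitian with $\norm{\Gamma_E}_\infty\leq1$, hence $\bI+i\Gamma_E\geq0$.

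\emph{Main obstacle.} We expect the hardest step to be justifying that every Gaussian channel, as defined via a Gaussian Choi operator, admits such a Gaussian Stinespring dilation. We would take this from Ref.~\cite{bravyi2005LagrangianRepresentationFermionica}, as the preceding text does. Failing that, we would argue directly from the Choi state: apply the channel to the Gaussian maximally entangled state $\ket{\Psi_{\mathrm F}}$ and compute $\Gamma(\Phi(\rho))$ via the Schur-complement contraction of the Gaussian Choi covariance matrix with $\Gamma(\rho)$ using Eq.~\eqref{eq:inverse_fermionic_Choi_map}. That route is more laborious. A secondary point is tracking fermionic signs in Steps 1 and 3; evenness of all states, together with placing the kept modes first in the ordering, should make these harmless.
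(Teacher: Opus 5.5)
Your proposal is correct and essentially identical to the paper's proof. The paper likewise takes the Gaussian Stinespring dilation from Ref.~\cite{bravyi2005LagrangianRepresentationFermionica}, using a pure ancilla $\tau_{E_{\mathrm{in}}}$ where you allow a general Gaussian $\sigma_E$; it then writes the kept rows of $R$ as $[\mathsf M\ \mathsf L]$, sets $\mathsf N=\mathsf L\Gamma(\tau_{E_{\mathrm{in}}})\mathsf L^{\mathsf T}$, and derives both constraints from $\mathsf M\mathsf M^{\mathsf T}+\mathsf L\mathsf L^{\mathsf T}=\bI_{2m}$ exactly as you do, with your extra checks of the block-diagonal input covariance and the partial-trace step only spelling out what the paper asserts directly.
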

\begin{proof}[Proof of Lemma~\ref{lem:Gaussian_channel_covariance_matrix}]
Consider the Stinespring representation of $\Phi$. We introduce a large enough environment system $E_\mathrm{in}$ consisting of $r\geq m-n$ fermionic modes, such that there exist a pure Gaussian state $\tau_{E_\mathrm{in}}$ and a Gaussian unitary $U_R$ with $R\in\mathrm{O}(2n+2r)$ satisfying
\begin{equation}
\Phi(\rho)=\Tr_{E_\mathrm{out}}\left[U_R\left(\rho\otimes_\mathrm{F}\tau_{E_\mathrm{in}}\right)U_R^\dagger\right],
\end{equation}
where the system $E_\mathrm{out}$ to be traced out consists of $n+r-m$ fermionic modes. The covariance matrix of the input state $\rho\otimes_\mathrm{F}\tau_{E_\mathrm{in}}$ takes the block-diagonal form
\begin{equation}
\Gamma\left(\rho\otimes_\mathrm{F}\tau_{E_\mathrm{in}}\right)=
\begin{bmatrix}
\Gamma(\rho)&0\\
0&\Gamma(\tau_{E_\mathrm{in}})
\end{bmatrix},
\end{equation}
By Eq.~\eqref{eq:Gaussian_unitary_covariance_matrix}, we have $\Gamma\left(U_R\left(\rho\otimes_\mathrm{F}\tau_{E_\mathrm{in}}\right)U_R^\dagger\right)=R\Gamma\left(\rho\otimes_\mathrm{F}\tau_{E_\mathrm{in}}\right)R^{\mathsf{T}}$. After tracing out $E_{\mathrm{out}}$, the remaining $2m$ rows of $R$ corresponding to the output Majorana operators can be split as $R_{\mathrm{out}}=\begin{bmatrix}\mathsf{M}&\mathsf{L}\end{bmatrix}$ for some
$\mathsf{M}\in\mathbb{R}^{2m\times2n}$ and $\mathsf{L}\in\mathbb{R}^{2m\times2r}$. Since $R$ is orthogonal, $R_\mathrm{out}R_\mathrm{out}^\mathsf{T}=\mathsf{M}\mathsf{M}^\mathsf{T}+\mathsf{L}\mathsf{L}^\mathsf{T}=\bI_{2m}$. In this way, the covariance matrix of the output state $\Phi(\rho)$ is
\begin{equation}
\Gamma(\Phi(\rho))=
\begin{bmatrix}\mathsf{M}&\mathsf{L}\end{bmatrix}
\begin{bmatrix}
\Gamma(\rho)&0\\
0&\Gamma(\tau_{E_\mathrm{in}})
\end{bmatrix}
\begin{bmatrix}
\mathsf{M}^\mathsf{T}\\
\mathsf{L}^\mathsf{T}
\end{bmatrix}
=\mathsf{M}\Gamma(\rho)\mathsf{M}^\mathsf{T}+\mathsf{L}\Gamma(\tau_{E_\mathrm{in}})\mathsf{L}^\mathsf{T}.
\end{equation}
Let $\mathsf{N}=\mathsf{L}\Gamma(\tau_{E_\mathrm{in}})\mathsf{L}^\mathsf{T}$, then since $\Gamma(\tau_{E_\mathrm{in}})^\mathsf{T}=-\Gamma(\tau_{E_\mathrm{in}})$ as a fermionic covariance matrix, we have $\mathsf{N}^\mathsf{T}=-\mathsf{N}$. Furthermore,
\begin{equation}
\bI_{2m}-\mathsf{M}\mathsf{M}^\mathsf{T}+i\mathsf{N}=\mathsf{L}\mathsf{L}^\mathsf{T}+i\mathsf{L}\Gamma(\tau_{E_\mathrm{in}})\mathsf{L}^\mathsf{T}=\mathsf{L}(\bI_{2r}+i\Gamma(\tau_{E_{\mathrm{in}}}))\mathsf{L}^\mathsf{T}\geq0,
\end{equation}
where we use the fact that the operator norm of a covariance matrix is upper bounded by one. Also, $\mathsf{M}\mathsf{M}^\mathsf{T}+\mathsf{L}\mathsf{L}^\mathsf{T}=\bI_{2m}$ implies $\norm{\mathsf{M}}_\infty\leq1$.
\end{proof}

\subsection{Fermionic Gaussian measurements}\label{app:fermionic_Gaussian_measurements}
A fermionic measurement with a countable outcome set $\mathcal X$ is described by
a quantum instrument consisting of branches
\begin{equation}
\left\{
\cM_x:
\cL(\cH_1)
\rightarrow
\cL(\cH_2)
\right\}_{x\in\mathcal X},
\label{eq:fermionic_quantum_instrument}
\end{equation}
where every $\cM_x$ is a CP and
trace-non-increasing parity-preserving map, and the sum of all
branches is trace-preserving, i.e.,
$\sum_{x\in\mathcal X}
\cM_x$
is a fermionic quantum channel.
For an input state $\rho$, the probability of obtaining the outcome
$x$ is $\Tr[\cM_x(\rho)]$.
Conditioned on an outcome $x$ with non-zero probability, the normalized
post-measurement state is
\begin{equation}
\rho_x
=
\frac{
\cM_x(\rho)
}{
\Tr[\cM_x(\rho)]
}.
\label{eq:fermionic_conditional_state}
\end{equation}
A fermionic measurement branch $\cM_x$ is called \emph{Gaussian} if it is a
Gaussian CP map, i.e., if its fermionic Choi
operator is a positive Gaussian operator. A fermionic instrument is
called \emph{Gaussian} if every nonzero branch
$\cM_x$ is Gaussian.

The distinction between fine-grained branches and
classically coarse-grained outcomes is important. If
$\mathcal X_y\subseteq\mathcal X$ is a collection of fine-grained
outcomes that are identified with the same classical label $y$, the
corresponding coarse-grained branch is
\begin{equation}
\cM_y^{\textup{CG}}
\coloneqq
\sum_{x\in\mathcal X_y}
\cM_x.
\label{eq:coarse_grained_fermionic_branch}
\end{equation}
Even when every $\cM_x$ is a fermionic Gaussian CP map, the sum
$\cM_y^{\textup{CG}}$ need not itself be a Gaussian CP map, since positive
fermionic Gaussian operators are not generally closed under
addition. Thus throughout this work, when a protocol is said to use
fermionic Gaussian measurements or Gaussian operations only, the
Gaussianity requirement is imposed on every nonzero fine-grained
physical branch rather than on an arbitrary classical
coarse-graining of those branches.

\section{Fermionic non-Gaussianity is necessary in quantum error correction}
In this appendix, we provide proofs and further discussions of the results presented in Section~\ref{app:no_Gaussian_qec}.
\subsection{Proof of Theorem~\ref{thm:no-Gaussian-codeword}}\label{app:proof_no-Gaussian-qec}
As briefly sketched in the main text, the proof of Theorem~\ref{thm:no-Gaussian-codeword} relies on the fact that if an arbitrary pure state has the same covariance matrix as a pure fermionic Gaussian state, then the two states coincide up to a global phase (see Lemma~1 and its proof in Ref.~\cite{bittel2025OptimalTraceDistanceBoundsa}), reformulated as Lemma~\ref{lem:Gaussian-correlation-uniqueness}.
\begin{lemma}[Uniqueness of a pure Gaussian state from its covariance matrix~\cite{bittel2025OptimalTraceDistanceBoundsa}]
\label{lem:Gaussian-correlation-uniqueness}
Let $\ket{\phi}$ be a pure fermionic Gaussian state on $n$ fermionic
modes, and let $\ket{\psi}$ be an arbitrary normalized state vector on
the same system. If their covariance matrices defined in Eq.~\eqref{eq:correlation_matrix} are equal, i.e.,
$\Gamma(\psi)=\Gamma(\phi)$,
then for some $\varphi\in[0,2\pi)$,
$\ket{\psi}=e^{i\varphi}\ket{\phi}$.
\end{lemma}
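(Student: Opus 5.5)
The plan is to use the Gaussian unitary covariance to reduce to the case where $\ket{\phi}$ is the vacuum, and then to read off the claim from the diagonal structure of the vacuum covariance matrix. By Eq.~\eqref{eq:def_pure_Gaussian_state} we can write $\ket{\phi}=U_R\ket{0^n}$ for some $U_R\in\mathrm M_n$ with $R\in\operatorname{O}(2n)$. Set $\ket{\psi'}=U_R^\dagger\ket{\psi}$. By Eq.~\eqref{eq:Gaussian_unitary_covariance_matrix}, applied with the orthogonal matrix associated with $U_R^\dagger$, namely $R^{\mathsf T}$, we have
\begin{equation*}
\Gamma(\psi')=R^{\mathsf T}\Gamma(\psi)R=R^{\mathsf T}\Gamma(\phi)R=\Gamma(0^n).
\end{equation*}
It therefore suffices to prove the lemma for $\ket{\phi}=\ket{0^n}$. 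The conclusion then transfers back by applying $U_R$, since global phases are preserved.

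Next I compute the vacuum covariance matrix. For $j\neq k$ the Majoranas anticommute, so $[\gamma_j,\gamma_k]=2\gamma_j\gamma_k$ and hence $\Gamma_{j,k}(\rho)=-i\Tr(\gamma_j\gamma_k\rho)$. Using the Jordan--Wigner form in Eq.~\eqref{eq:Jordan-Wigner_app}, the $Z$-strings cancel and
\begin{equation*}
-i\gamma_{2j-1}\gamma_{2j}=-iX_jY_j=Z_j .
\end{equation*}
This gives $\Gamma_{2j-1,2j}(0^n)=\bra{0^n}Z_j\ket{0^n}=1$ for every $j\in[n]$. All other entries vanish, since they are expectations of Pauli strings that are off-diagonal in the computational basis.

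The key step is that equality of covariance matrices forces $\bra{\psi'}Z_j\ket{\psi'}=1$ for all $j$. Since $Z_j\leq\bI$ and $\ket{\psi'}$ is normalized, we get $\bra{\psi'}(\bI-Z_j)\ket{\psi'}=0$. As $\bI-Z_j\geq0$, this implies $(\bI-Z_j)\ket{\psi'}=0$, so $\ket{\psi'}$ lies in the $+1$ eigenspace of every $Z_j$. The joint $+1$ eigenspace of $Z_1,\dots,Z_n$ is spanned by $\ket{0^n}$ alone. Hence $\ket{\psi'}=e^{i\varphi}\ket{0^n}$, and therefore $\ket{\psi}=e^{i\varphi}U_R\ket{0^n}=e^{i\varphi}\ket{\phi}$.

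I expect no real obstacle. The only points requiring care are bookkeeping ones: the sign convention, i.e.\ checking that $-i\gamma_{2j-1}\gamma_{2j}=+Z_j$, so that the vacuum has entries $+1$ rather than $-1$; and the case $\det R=-1$. The covariance transformation law holds for all of $\operatorname{O}(2n)$, so that case needs no separate treatment. In fact only the $n$ entries $\Gamma_{2j-1,2j}$ of the covariance matrix are used, which already saturate the bound $\norm{\Gamma}_\infty\leq1$.
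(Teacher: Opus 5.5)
Your proof is correct. The paper gives no proof of its own here; it imports the lemma from Ref.~\cite{bittel2025OptimalTraceDistanceBoundsa}, and your route (rotate $\ket{\phi}$ to the vacuum by a Gaussian unitary, then use that the entries $\Gamma_{2j-1,2j}=\bra{\psi'}Z_j\ket{\psi'}=1$ force every mode to be empty) is the same mechanism as the paper's parent-Hamiltonian Lemma~\ref{lem:parent_Hamiltonian}: there $\Tr(H_\phi\psi)=\frac n2-\frac14\Tr[\Gamma(\phi)^{\mathsf T}\Gamma(\psi)]$ vanishes when $\Gamma(\psi)=\Gamma(\phi)$, so $\bI-\phi\leq H_\phi$ gives $\abs{\braket{\phi}{\psi}}=1$, which is your argument with the $n$ conditions $\bra{\psi'}Z_j\ket{\psi'}=1$ summed into one number-operator condition.
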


Then we use this Lemma to prove Theorem~\ref{thm:no-Gaussian-codeword} in the main text.

\begin{proof}[Proof of Theorem~\ref{thm:no-Gaussian-codeword}]
For a code projector $P$ with $d_{\mathrm F}(P)\geq3$, every Majorana product of weight
strictly smaller than three is detected by the code. In particular,
for every pair $j<k$, there exists a real scalar $\lambda_{j,k}$ such
that
$P\left(-i\gamma_j\gamma_k\right)P
=
\lambda_{j,k}P$.
Consequently, the elements of
the covariance matrix of every normalized state vector
$\ket{\psi}\in\mathcal C=\operatorname{im}(P)$ satisfy
\begin{equation}
\Gamma_{j,k}(\psi)=\bra{\psi}-i\gamma_j\gamma_k\ket{\psi}=\bra{\psi}P\left(-i\gamma_j\gamma_k\right)P\ket{\psi}
=
\lambda_{j,k},
\qquad
j<k.
\label{eq:identical-bilinears}
\end{equation}
All normalized pure states in the logical space therefore have the
same covariance matrix.
Suppose, toward a contradiction, that $\mathcal C$ contains a pure
fermionic Gaussian state vector $\ket{\phi}\in\mathcal G_{\mathrm{pure}}^{(n)}$. Then every normalized pure
logical state vector $\ket{\psi}\in\mathcal C$ satisfies
$\Gamma(\psi)=\Gamma(\phi)$.
By Lemma~\ref{lem:Gaussian-correlation-uniqueness}, this is possible
only if
$\ket{\psi}=e^{i\varphi_\psi}\ket{\phi}$
for some phase $\varphi_\psi$. Hence every vector in $\mathcal C$ is
proportional to $\ket{\phi}$, and therefore
$\dim(\mathcal C)=\operatorname{rank}(P)=1$,
which contradicts the assumption that the code is non-trivial.
\end{proof}

Corollary~\ref{cor:no_convex_Gaussian_support} is a direct consequence of Theorem~\ref{thm:no-Gaussian-codeword}, stating that there is no convex Gaussian state supported on the code space of a non-trivial fermionic quantum error-correcting code. We present its proof 
below.

\begin{corollary}[No convex Gaussian density matrix in the support]\label{cor:no_convex_Gaussian_support}
Let $P$ be the projector onto a fermionic quantum code
$\mathcal{C}=\operatorname{im}(P)$ of $n$ modes with $\operatorname{rank}(P)>1$.
If
$d_{\mathrm F}(P)\geq 3$, then there is no convex Gaussian state $\rho$, such that $P\rho P=\rho$.
\end{corollary}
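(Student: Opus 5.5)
The plan is to reduce the corollary to Theorem~\ref{thm:no-Gaussian-codeword} by showing that any convex Gaussian state supported on $\mathcal C$ has pure Gaussian components lying in $\mathcal C$. Suppose, toward a contradiction, that $\rho$ is convex Gaussian with $P\rho P=\rho$. By definition we can write
\begin{equation}
\rho=\sum_r p_r\ketbra{\phi_r}{\phi_r},\qquad p_r>0,\ \sum_r p_r=1,\ \ket{\phi_r}\in\mathcal G_{\mathrm{pure}}^{(n)}.
\end{equation}

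The key step is to show that every $\ket{\phi_r}$ with $p_r>0$ lies in $\mathcal C$. Since $P\rho P=\rho$ and $\rho\geq0$, we have $\Tr[(\bI-P)\rho]=\Tr(\rho)-\Tr(P\rho P)=0$. Expanding in the decomposition gives
\begin{equation}
\sum_r p_r\norm{(\bI-P)\ket{\phi_r}}^2=0 .
\end{equation}
Each term is nonnegative and every $p_r$ is strictly positive, so $(\bI-P)\ket{\phi_r}=0$ for all $r$, that is, $\ket{\phi_r}\in\mathcal C$. Since $\rho$ is a normalized state, at least one index $r$ occurs. This produces a pure fermionic Gaussian state in $\mathcal C\cap\mathcal G_{\mathrm{pure}}^{(n)}$, which contradicts Theorem~\ref{thm:no-Gaussian-codeword} under the hypotheses $\operatorname{rank}(P)>1$ and $d_{\mathrm F}(P)\geq3$.

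Nothing here is really an obstacle; the argument is just the standard fact that the support of a positive operator contains each vector of any of its ensemble decompositions. The one point to address is that the decomposition might be infinite or continuous, for instance an integral over Gaussian states. Carathéodory's theorem in the finite-dimensional space of density matrices lets us replace it by a finite convex combination of pure Gaussian states. Alternatively, the trace argument above goes through verbatim for a probability measure $\mu$: one gets $\int\norm{(\bI-P)\phi}^2\,d\mu=0$, so $\mu$-almost every component lies in $\mathcal C$, and since $\mu$ is a probability measure at least one component in its support does.
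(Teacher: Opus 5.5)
Your proposal is correct and follows essentially the same route as the paper's proof: both use $\Tr[(\bI-P)\rho]=0$ to force every pure Gaussian component with $p_\ell>0$ into $\operatorname{im}(P)$, which contradicts Theorem~\ref{thm:no-Gaussian-codeword}. Your extra remark covering continuous or infinite decompositions is a valid refinement that the paper does not need to make, since it takes convex Gaussian states to be finite mixtures.
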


\begin{proof}[Proof of Corollary~\ref{cor:no_convex_Gaussian_support}]
For a code projector $P$ with $d_{\mathrm F}(P)\geq3$, suppose there is a convex Gaussian state $\rho$ supported on $\operatorname{im}(P)$, i.e., $P\rho P=\rho$ and there exists a probability distribution $\{p_\ell\}$ and an ensemble of pure fermionic Gaussian states $\{\ket{\phi_\ell}\}_\ell$, such that $\rho=\sum_{\ell}p_\ell\ket{\phi_\ell}\bra{\phi_\ell}$, 
then we have
\begin{equation}
0=\Tr(\rho)-\Tr(P\rho P)=\Tr((\bI-P)\rho)=\sum_\ell p_\ell\bra{\phi_\ell}(\bI-P)\ket{\phi_\ell}.
\end{equation}
This further implies for each $p_\ell>0$, we have
\begin{equation}
\bra{\phi_\ell}(\bI-P)\ket{\phi_\ell}=\norm{(\bI-P)\ket{\phi_\ell}}^2=0,
\end{equation}
so $(\bI-P)\ket{\phi_\ell}=0$ and $\ket{\phi_\ell}\in\operatorname{im}(P)$, which contradicts Theorem~\ref{thm:no-Gaussian-codeword}.
\end{proof}

\subsection{Tradeoff between Gaussianity and approximate correctability}\label{app:proof_approximate_qec}

In this appendix, we quantitatively provide an upper bound for the Gaussian fidelity of any codeword of a fermionic code, in terms of parameters that characterize its approximate correctability.

Consider a fermionic code with encoding isometry $V$ and encoding channel $\cE(\rho)=V\rho V^\dagger$. The orthogonal projector onto the code space $\cC$ is $P=VV^\dagger$. The Gaussian fidelity of an $n$-mode fermionic state vector $\ket{\psi}$, defined as follows, is the maximum fidelity between $\ket{\psi}$ and any pure fermionic Gaussian state~\cite{oszmaniec2014ClassicalSimulationFermionic,dias2024ClassicalSimulationNonGaussian,reardon-smith2024ImprovedSimulationQuantum},
\begin{equation}\label{eq:def_Gaussian_fidelity}
F_0^{(n)}(\ket{\psi})
\coloneqq
\sup_{\ket{\phi}\in\mathcal G_{\textup{pure}}^{(n)}}
\abs{\braket{\phi}{\psi}}^2.
\end{equation}
It directly relates to the minimum trace distance between $\ket{\psi}$ and any pure Gaussian state, and thus serves as a quantifier of the non-Gaussianity of $\ket{\psi}$. We denote the supremum Gaussian fidelity over all codewords in $\cC$, also called the Gaussian fidelity of $\cC$, by
\begin{equation}\label{eq:def_code_Gaussian_fidelity}
F_0^{(n)}(\mathcal C)
\coloneqq
\sup_{\substack{\ket{\psi}\in\mathcal C\\
\braket{\psi}{\psi}=1}}
\sup_{\ket{\phi}\in\mathcal G_{\textup{pure}}^{(n)}}
\abs{\braket{\phi}{\psi}}^2
=\sup_{\ket{\phi}\in\mathcal G_{\textup{pure}}^{(n)}}
\bra{\phi}P\ket{\phi}.
\end{equation}

We quantify the approximate correctability using the formulation established in Refs.~\cite{beny2010AQEC,faist2020ContinuousSymmetriesApproximatea}. We consider a noise channel
$\cN:\cL(\cH_A)\rightarrow\cL(\cH_B)$,
with input Hilbert space $\cH_A$ and output Hilbert space $\cH_B$.
Choose a Stinespring isometry $W_\cN:\cH_A\rightarrow\cH_B\otimes\cH_E$ for $\cN$ with environment system $\cH_E$, such that
\begin{equation}
\cN(\rho)=\Tr_E(W_\cN\rho W_\cN^\dagger).
\end{equation}
The complementary channel associated with this dilation $\widehat{\cN}:\cL(\cH_A)\rightarrow\cL(\cH_E)$ is defined as
\begin{equation}\label{eq:def_complementary_channel}
\widehat\cN(\rho)=\Tr_B(W_\cN\rho W_\cN^\dagger).
\end{equation}
It represents the information about the input transferred to the Stinespring environment. The adjoint of $\widehat{\cN}$ is defined through
\begin{equation}
\Tr(O\widehat{\cN}(\rho))=\Tr(\widehat{\cN}^\dagger(O)\rho)
\end{equation}
for any observable $O\in\cL(\cH_E)$ in the environment system.

Denote the root fidelity between two density matrices as $F(\rho_1,\rho_2)=\norm{\sqrt{\rho_1}\sqrt{\rho_2}}_1$. We define the optimal worst-case entanglement fidelity of the encoded noise $\cN\circ\cE$ as
\begin{equation}\label{eq:def_worst_case_entanglement_fidelity}
\begin{aligned}
f_{\mathrm{worst}}(\mathcal N\circ\mathcal E)
\coloneqq
\sup_{\mathcal R}
\inf_{\ket{\omega}_{L,R}}
F\left(
\left[
(\mathcal R\circ\mathcal N\circ\mathcal E)
\otimes\operatorname{id}_R
\right](\omega_{L,R}),
\omega_{L,R}
\right),
\end{aligned}
\end{equation}
where $\cR:\cL(\cH_B)\rightarrow\cL(\cH_L)$ ranges over all recovery channels, $\cH_L$ is the logical system and $\cH_R$ is an arbitrary reference system isomorphic to $\cH_L$. $\omega_{L,R}=\ketbra{\omega}{\omega}_{L,R}$ is any pure state in the joint system. Thus $f_{\mathrm{worst}}$ quantifies the best achievable worst-case entanglement fidelity after optimizing over recovery channels and taking the worst case over all normalized input states in the joint system, including entangled states. The optimal worst-case recovery error is~\cite{faist2020ContinuousSymmetriesApproximatea}
\begin{equation}\label{eq:def_worst-case_recovery_error}
\varepsilon_{\rm worst}(\cN\circ\cE)=\sqrt{1-f_{\mathrm{worst}}(\mathcal N\circ\mathcal E)^2}.
\end{equation}
It quantifies how well the encoded logical information can be recovered after noise, assuming the recovery channel is chosen optimally. The worst case is taken over all logical input states, including states entangled with an arbitrary reference system, so it captures the most adversarial loss of quantum information.

Let system $\cH_A$ be the Fock space of $n$ fermionic modes. For a real antisymmetric matrix
$\mathsf{A}=-\mathsf{A}^{\mathsf T}\in\mathbb R^{2n\times2n}$, define the corresponding Hermitian quadratic Majorana observable
\begin{equation}\label{eq:def_quadratic_majorana_observable}
\mathsf Q(\mathsf{A})
\coloneqq
i\sum_{j,k=1}^{2n}
\mathsf{A}_{j,k}\gamma_j\gamma_k.
\end{equation}
We define the
\emph{quadratic reconstruction constant} for $\cN$ as
\begin{equation}\label{eq:def_quadratic_reconstruction_constant}
\begin{aligned}
\mu_2(\cN)
\coloneqq
\sup_{\substack{
\mathsf{A}=-\mathsf{A}^{\mathsf T}\\
\norm{\mathsf{A}}_\infty=1
}}
\inf_{\substack{
O=O^\dagger,c\in\mathbb R\\
\widehat{\cN}^{\dagger}(O)
=
\mathsf Q(\mathsf{A})+c\bI
}}
\norm{O}_\infty,
\end{aligned}
\end{equation}
where the inner infimum is $\infty$ if the constraint set is empty. The value of $\mu_2(\mathcal N)$ is independent of the particular Stinespring dilation used to define the complementary channel $\widehat{\cN}$ and thus $\widehat{\cN}^\dagger$, since different dilations are related by an isometry on the environment and the operator norm is preserved under the corresponding embedding. The quadratic reconstruction constant quantifies the accessibility of the environment to the quadratic information of the system, and can be understood as follows. If an observable $O\in\cL(\cH_E)$ satisfies the constraints, then
\begin{equation}
\Tr(O\widehat{\cN}(\rho))=\Tr(\mathsf{Q}(\mathsf{A})\rho)+c,
\end{equation}
and $O$ measures the expectation value of a quadratic observable $\mathsf{Q}(\mathsf{A})$ in system $\cH_A$ from the environment system $\cH_E$, up to a constant offset. Therefore, $\mu_2$ is the worst-case operator norm amplification needed for the environment $\cH_E$ to reconstruct all normalized quadratic Majorana directions. A small value of $\mu_2$ implies a strong and well-conditioned leakage of the covariance information to the environment, while $\mu_2=\infty$ means that at least one quadratic direction is not reconstructible.

We now state the central theorem, which establishes a quantitative tradeoff between the Gaussian fidelity of a fermionic code space, the channel’s quadratic reconstruction constant, and the optimal worst-case recovery error.
\begin{theorem}[Gaussianity--approximate-correctability tradeoff]\label{thm:general_noise_Gaussianity_tradeoff}
Let $\cC$ be a fermionic code on $n$ modes with $\dim(\cC)\geq2$, and let
$\cE:\sigma_L\mapsto V\sigma_LV^\dagger$ be an isometric encoding into $\cC$.
For every noise channel $\cN$ satisfying $\mu_2(\cN)<\infty$,
the Gaussian fidelity of the code obeys
\begin{equation}\label{eq:general_noise_Gaussianity_correctability_tradeoff}
F_0^{(n)}(\cC)
\leq
1-\frac1n
+
\frac{\mu_2(\cN)}{n}
\epsilon_{\mathrm{worst}}
(\cN\circ\cE).
\end{equation}
\end{theorem}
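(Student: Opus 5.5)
The plan is to turn a large Gaussian fidelity into two orthogonal codewords that a quadratic observable tells apart, and then show that approximate correctability forbids this once the environment can reconstruct that observable.

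Fix a pure Gaussian state $\ket{\phi}\in\mathcal G_{\mathrm{pure}}^{(n)}$ with $F\coloneqq\bra{\phi}P\ket{\phi}$ close to $F_0^{(n)}(\cC)$; we may take the supremum to be attained by compactness. Assume $F>0$, otherwise there is nothing to prove.

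\emph{Step 1 (a quadratic witness).} Since $\ket{\phi}=U_R\ket{0^n}$, define $K\coloneqq U_R\bigl(\sum_{j=1}^n Z_j\bigr)U_R^\dagger=\sum_j U_R(-i\gamma_{2j-1}\gamma_{2j})U_R^\dagger$. By Eq.~\eqref{eq:Gaussian_unitary_action}, $K=\mathsf Q(\mathsf A_0)$ for a real antisymmetric $\mathsf A_0$ with $\norm{\mathsf A_0}_\infty=1/2$. Equivalently, up to sign conventions, $\mathsf A_0$ is proportional to the covariance matrix $\Gamma(\phi)$. The spectrum of $K$ is $n-2N_R$, where $N_R$ is the particle number in the rotated modes. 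Hence $\ket{\phi}$ is the unique eigenvector with eigenvalue $n$, and $K\leq(n-2)\bI$ on $\ket{\phi}^\perp$.

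\emph{Step 2 (two orthogonal codewords).} Let $\ket{\psi_1}=P\ket{\phi}/\sqrt F$ and let $\ket{\psi_2}\in\cC$ be any normalized vector orthogonal to $\ket{\psi_1}$, which exists because $\dim\cC\geq2$. Then $\braket{\phi}{\psi_2}=\bra{\phi}P\ket{\psi_2}=\sqrt F\braket{\psi_1}{\psi_2}=0$, so Step 1 gives $\bra{\psi_2}K\ket{\psi_2}\leq n-2$. On the other hand, $\abs{\braket{\phi}{\psi_1}}^2=F$ and $K\geq-n\bI$, so $\bra{\psi_1}K\ket{\psi_1}\geq nF-n(1-F)$. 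Subtracting,
\begin{equation}
\bra{\psi_1}K\ket{\psi_1}-\bra{\psi_2}K\ket{\psi_2}\geq 2\bigl(1-n(1-F)\bigr).
\end{equation}

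\emph{Step 3 (correctability forbids distinguishability).} Use the complementary-channel characterization of approximate QEC from Refs.~\cite{beny2010AQEC,faist2020ContinuousSymmetriesApproximatea}. If $\epsilon_{\mathrm{worst}}(\cN\circ\cE)=\epsilon$, there is a constant channel $\mathcal T(\sigma)=\Tr(\sigma)\zeta$ with $\norm{\widehat{\cN}\circ\cE-\mathcal T}_\diamond\leq2\epsilon$. Consequently, for codewords $\rho_1,\rho_2$ and any environment observable $O$,
\begin{equation}
\abs{\Tr(O\widehat\cN(\rho_1))-\Tr(O\widehat\cN(\rho_2))}\leq 2\norm{O}_\infty\epsilon .
\end{equation}
Now take $\mathsf A=2\mathsf A_0$, which has norm one. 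For any admissible $O$ in the definition of $\mu_2(\cN)$, the constant $c$ cancels in the difference, so
\begin{equation}
2\abs{\langle K\rangle_{\psi_1}-\langle K\rangle_{\psi_2}}\leq 2\norm{O}_\infty\epsilon .
\end{equation}
Taking the infimum over admissible $O$ and using $\mu_2(\cN)$ as the supremum over $\mathsf A$, and combining with Step 2, we get $4(1-n(1-F))\leq2\mu_2(\cN)\epsilon$. Rearranged, this is $F\leq1-\frac1n+\frac{\mu_2(\cN)}{2n}\epsilon$, which is at least as strong as Eq.~\eqref{eq:general_noise_Gaussianity_correctability_tradeoff}.

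The main obstacle is Step 3, specifically the diamond-norm closeness to a constant channel from the worst-case recovery error. I would invoke the complementary-channel duality in Ref.~\cite{faist2020ContinuousSymmetriesApproximatea}, stated with purified distance, and then convert it to trace distance via Fuchs--van de Graaf. This conversion loses at most a factor of two, which the slack above absorbs. A secondary point is the bookkeeping of the sign and factor conventions relating $\mathsf Q(\mathsf A)$, $\Gamma(\phi)$ and $K$, so that $\norm{\mathsf A}_\infty=1$ exactly.
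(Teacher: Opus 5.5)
Your route is the paper's own. Your witness satisfies $K=n\bI-2H_\phi$, where $H_\phi$ is the parent Hamiltonian of Lemma~\ref{lem:parent_Hamiltonian}. The two codewords $P\ket{\phi}/\sqrt F$ and $\ket{\psi_2}$ are the same as the paper's. You also close the argument the same way, with an environment observable that reconstructs $\mathsf Q(\pm\Gamma(\phi))$ plus complementary-channel indistinguishability. Steps 1 and 2 are correct. Step 3, however, has a factor-of-two error, so the claimed improvement to $\mu_2(\cN)/(2n)$ is not justified.

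The diamond-norm bound $\norm{\widehat\cN\circ\cE-\mathcal T}_\diamond\leq 2\epsilon$ only gives $\norm{\widehat\cN(\rho_i)-\zeta}_1\leq 2\epsilon$ for each codeword separately. The triangle inequality through $\zeta$ then gives $\norm{\widehat\cN(\rho_1)-\widehat\cN(\rho_2)}_1\leq 4\epsilon$, which is the statement $\delta\leq 2\varepsilon_{\mathrm{worst}}$ of Lemma~\ref{lem:complementary_environment_indistinguishability}. Hence you only get $\abs{\Tr(O\widehat\cN(\rho_1))-\Tr(O\widehat\cN(\rho_2))}\leq 4\norm{O}_\infty\epsilon$.

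The version with $2\norm{O}_\infty\epsilon$ is false in general. Take single-mode dephasing $\rho\mapsto(1-p)\rho+pZ\rho Z$ with trivial encoding. Identity recovery shows $\varepsilon_{\mathrm{worst}}\leq\sqrt p$. Yet in the Kraus dilation $\{\sqrt{1-p}\,\bI,\sqrt p\,Z\}$, the norm-one environment observable $\sigma_x$ separates $\ket{0}$ and $\ket{1}$ by $4\sqrt{p(1-p)}$, which exceeds $2\sqrt p$ for $p<3/4$. Recentering $O$ cannot recover the factor: since $\widehat\cN^\dagger(\bI)=\bI$, the constraint defining $\mu_2(\cN)$ already allows shifting $O$ by multiples of the identity. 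Also, Fuchs--van de Graaf ($\delta\leq P$) loses nothing here; the factor of two is lost in the triangle inequality.

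With the correct constant, your chain becomes $4\bigl(1-n(1-F)\bigr)\leq 4\bigl(\mu_2(\cN)+\eta\bigr)\epsilon$. This is exactly Eq.~\eqref{eq:general_noise_Gaussianity_correctability_tradeoff}. So the theorem as stated follows once Step 3 is fixed, but the "slack" you reserved is fully consumed, and there is no strengthening of the paper's constant.
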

The proof of Theorem~\ref{thm:general_noise_Gaussianity_tradeoff} is left to the end of this appendix. This tradeoff shows that, whenever quadratic information about the input is accessible with controlled operator norm amplification to the environment, good approximate correctability forces the code space to remain quantitatively separated from the Gaussian manifold. In the exact-correction limit, the channel dependence disappears entirely, yielding $F_0^{(n)}(\cC)\leq 1-1/n$.

We then discuss three examples of fermionic noise models and provide the upper bounds for their quadratic reconstruction constants $\mu_2$, so that we can apply Theorem~\ref{thm:general_noise_Gaussianity_tradeoff} to them conveniently. Their proofs are left to the end of this appendix.

The first example is motivated by the quasiparticle noise (Qp) model~\cite{knapp2018ModelingNoiseError}, which describes stochastic single-Majorana quasiparticle poisoning together with pair-wise Majorana dephasing events in Majorana-zero-mode architectures. Such stochastic Majorana models were introduced to connect microscopic error processes in topological superconductor devices with fault-tolerance analyses.
\begin{proposition}[Quadratic reconstruction constant bound of quasiparticle noise]\label{prop:mu_2_Qp}
For a quasiparticle noise on $n$ fermionic modes, let $p_{\mathrm{qp}}\geq0$, $p_{\mathrm{pair}}\geq0$, and $0<p_{\mathrm{qp}}+p_{\mathrm{pair}}\leq1$, the noise channel is
\begin{equation}\label{eq:qp_noise_channel}
\cN_{\mathrm{Qp}}(\rho)
=(1-p_{\mathrm{qp}}-p_{\mathrm{pair}})\rho
+\frac{p_{\mathrm{qp}}}{2n}
\sum_{j=1}^{2n}\gamma_j\rho\gamma_j+\frac{p_{\mathrm{pair}}}{4n^2}
\sum_{k,j=1}^{2n}
\gamma_k\gamma_j\rho\gamma_j\gamma_k.
\end{equation}
Then the quadratic reconstruction constant satisfies
\begin{equation}
\mu_2(\cN_{\mathrm{Qp}})\leq\frac{2n}{p_{\mathrm{qp}}+p_{\mathrm{pair}}}.
\end{equation}
\end{proposition}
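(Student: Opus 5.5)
The plan is to exhibit, for every normalized real antisymmetric $\mathsf A$, an explicit Hermitian environment observable $O$ with $\widehat{\cN}_{\mathrm{Qp}}^{\dagger}(O)=\mathsf Q(\mathsf A)$ (so $c=0$) and $\norm{O}_\infty\leq 2n/(p_{\mathrm{qp}}+p_{\mathrm{pair}})$. Since $\mu_2$ does not depend on the choice of Stinespring dilation, we work with the canonical Kraus dilation read off from Eq.~\eqref{eq:qp_noise_channel}. Its Kraus operators are
\[
K_0=\sqrt{1-p_{\mathrm{qp}}-p_{\mathrm{pair}}}\,\bI,\qquad
K_j=\sqrt{\tfrac{p_{\mathrm{qp}}}{2n}}\,\gamma_j,\qquad
K_{(l,j)}=\tfrac{\sqrt{p_{\mathrm{pair}}}}{2n}\,\gamma_l\gamma_j ,
\]
with $j,l\in[2n]$. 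The environment $\cH_E$ has an orthonormal basis $\{\ket{a}\}$ labelled by these Kraus indices.

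With $W=\sum_a K_a\otimes\ket{a}_E$, the complementary channel is $\widehat\cN(\rho)=\sum_{a,b}\Tr(K_a\rho K_b^\dagger)\ketbra{a}{b}$. Its adjoint is therefore
\[
\widehat\cN^\dagger(O)=\sum_{a,b}O_{b a}K_b^\dagger K_a ,
\]
with the index convention fixed carefully at this step. The following products of Kraus operators appear:
\begin{itemize}
\item single-Majorana block: $K_j^\dagger K_k=\tfrac{p_{\mathrm{qp}}}{2n}\gamma_j\gamma_k$;
\item pair block with a common first index $l$: $K_{(l,j)}^\dagger K_{(l,k)}=\tfrac{p_{\mathrm{pair}}}{4n^2}\gamma_j\gamma_l\gamma_l\gamma_k=\tfrac{p_{\mathrm{pair}}}{4n^2}\gamma_j\gamma_k$.
\end{itemize}
Summing the second item over the $2n$ values of $l$ gives $\tfrac{p_{\mathrm{pair}}}{2n}\gamma_j\gamma_k$. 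Both blocks thus produce the same quadratic operator, with total weight $(p_{\mathrm{qp}}+p_{\mathrm{pair}})/(2n)$.

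This suggests the ansatz
\[
O = X\oplus (X\otimes \bI_{2n})\oplus 0 ,\qquad X\coloneqq \frac{2n}{p_{\mathrm{qp}}+p_{\mathrm{pair}}}\, i\mathsf A ,
\]
up to transposition depending on the index convention. Here $X$ acts on the single-Majorana block, $X\otimes\bI_{2n}$ acts on the pair block (on the second index, identity on $l$), and $O$ is zero on $\ket{0}$ and on all cross-block entries. Then:
\begin{itemize}
\item $X$ is Hermitian because $\mathsf A$ is real antisymmetric, so $O$ is Hermitian.
\item $\widehat\cN^\dagger(O)=\tfrac{p_{\mathrm{qp}}+p_{\mathrm{pair}}}{2n}\sum_{j,k}X_{jk}\gamma_j\gamma_k=i\sum_{j,k}\mathsf A_{jk}\gamma_j\gamma_k=\mathsf Q(\mathsf A)$.
\item The diagonal terms $j=k$ would only give multiples of $\bI$, and they vanish anyway since $\mathsf A_{jj}=0$.
\item As a block-diagonal direct sum, $\norm{O}_\infty=\norm{X}_\infty=\tfrac{2n}{p_{\mathrm{qp}}+p_{\mathrm{pair}}}\norm{\mathsf A}_\infty$, which equals the claimed bound when $\norm{\mathsf A}_\infty=1$.
\end{itemize}
Taking the infimum over admissible $O$ and then the supremum over $\mathsf A$ yields the proposition. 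The degenerate case $p_{\mathrm{qp}}=0$ or $p_{\mathrm{pair}}=0$ is covered, since the corresponding block simply drops out.

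The main obstacle is bookkeeping rather than conception. First, the index and transpose convention in $\widehat\cN^\dagger$ must be fixed so that $O_{ba}$ multiplies $K_b^\dagger K_a$ with the right orientation; otherwise one obtains $-\mathsf Q(\mathsf A)$, which is harmless after replacing $X$ by $X^{\mathsf T}$. Second, one must check that the chosen pair-Kraus ordering $\gamma_l\gamma_j$ makes the $\gamma_l$ factors cancel. If the Kraus operators were written as $\gamma_j\gamma_l$ instead, one would contract on the other index and get an extra sign when $j\neq l$; I would fix the ordering to match Eq.~\eqref{eq:qp_noise_channel} exactly. A further check is that the Kraus family need not be linearly independent, which is fine since any Stinespring isometry is admissible.
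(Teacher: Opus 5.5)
Your proposal is correct and is essentially the paper's own proof. The paper uses the same Kraus representation (pair operators $\frac{\sqrt{p_{\mathrm{pair}}}}{2n}\gamma_k\gamma_j$) and the same block-diagonal observable $O=0\oplus iC\mathsf A\oplus\bigoplus_{k=1}^{2n} iC\mathsf A$ with $C=2n/(p_{\mathrm{qp}}+p_{\mathrm{pair}})$, inserted into its Kraus-space reconstruction lemma; it relies on your cancellation $(\gamma_k\gamma_j)^\dagger\gamma_k\gamma_l=\gamma_j\gamma_l$ summed over the $2n$ values of $k$, and on $\norm{O}_\infty=C\norm{\mathsf A}_\infty=C$.
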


The second example is global depolarizing noise, a standard benchmark model for incoherent noise in digital quantum simulations, and has also
been considered in fermionic classical-shadow
protocols~\cite{wu2024ErrormitigatedFermionicClassical}.

\begin{proposition}[Quadratic reconstruction constant bound of global depolarizing noise]
\label{prop:mu_2_depolarizing}
For an $n$-mode fermionic system, let $p\in(0,1]$ and consider the
global depolarizing channel
\begin{equation}\label{eq:global_depolarizing_noise_channel}
\cD_p(\rho)
=
(1-p)\rho
+
p\frac{\bI}{2^n}.
\end{equation}
Then the quadratic reconstruction constant satisfies
\begin{equation}\label{eq:mu_2_depolarizing_bound}
\mu_2(\cD_p)
\leq
\frac{2n}{p}.
\end{equation}
\end{proposition}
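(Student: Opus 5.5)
The plan is to write down an explicit Stinespring dilation of $\cD_p$ in the Majorana basis, compute $\widehat{\cD}_p^{\dagger}$, and then exhibit, for each normalized $\mathsf A$, a Hermitian environment observable $O$ with $\widehat{\cD}_p^\dagger(O)=\mathsf Q(\mathsf A)$ (so $c=0$) and $\norm{O}_\infty\leq\norm{\mathsf Q(\mathsf A)}_\infty/p$. The bound Eq.~\eqref{eq:mu_2_depolarizing_bound} then follows from $\norm{\mathsf Q(\mathsf A)}_\infty\leq 2n$ whenever $\norm{\mathsf A}_\infty=1$.

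\textbf{Step 1 (Kraus form and dilation).} Since $\{\widehat\gamma_S\}$ is an orthogonal Hermitian unitary basis, the Majorana twirl gives $4^{-n}\sum_{S\subseteq[2n]}\widehat\gamma_S\rho\widehat\gamma_S=\Tr(\rho)\bI/2^n$. Hence $\cD_p$ has Kraus operators
\begin{equation*}
K_0=\sqrt{1-p}\,\bI,\qquad K_S=\sqrt{p}\,2^{-n}\widehat\gamma_S,\quad S\subseteq[2n].
\end{equation*}
The environment has an orthonormal basis $\{\ket{0}\}\cup\{\ket{S}\}_S$, and the isometry is $W=\sum_a K_a\otimes\ket{a}$. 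The adjoint of the complementary channel is then
\begin{equation*}
\widehat{\cD}_p^\dagger(O)=\sum_{a,b}\bra{b}O\ket{a}\,K_b^\dagger K_a .
\end{equation*}

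\textbf{Step 2 (construction of $O$).} We put $O$ to zero on $\ket 0$. On the $4^n$-dimensional block spanned by $\{\ket S\}$, identify $\ket{S}$ with the normalized operator $2^{-n/2}\widehat\gamma_S\in\cL(\cH_A)$, so the block becomes $\cL(\cH_A)$ with the Hilbert--Schmidt inner product. Take $O$ to be $p^{-1}$ times the left-multiplication superoperator $X\mapsto \mathsf Q(\mathsf A)X$; since $\mathsf Q(\mathsf A)$ is Hermitian, this superoperator is Hermitian. Then
\begin{equation*}
\widehat{\cD}_p^\dagger(O)=\frac{1}{4^n}\sum_{S,T}\bigl\langle\widehat\gamma_T,\mathsf Q(\mathsf A)\widehat\gamma_S\bigr\rangle\,\widehat\gamma_T\widehat\gamma_S
=\frac{1}{2^n}\sum_S \mathsf Q(\mathsf A)\widehat\gamma_S\widehat\gamma_S=\mathsf Q(\mathsf A).
\end{equation*}
The first equality absorbs the $2^{-n}$ normalizations of the basis vectors together with the prefactor $p\cdot 4^{-n}$ from $K_T^\dagger K_S$ against the $p^{-1}$ in $O$. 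The second equality uses the basis expansion $\sum_T 2^{-n}\langle\widehat\gamma_T,Y\rangle\widehat\gamma_T=Y$ with $Y=\mathsf Q(\mathsf A)\widehat\gamma_S$. The last equality uses $\widehat\gamma_S^2=\bI$ and the count of $4^n$ terms, which gives $2^{-n}\cdot 4^n\cdot 2^{-n}=1$. I expect the main bookkeeping risk to be exactly here: the normalizations and whether $\widehat\gamma_T$ or $\widehat\gamma_T^\dagger$ appears. Both are harmless because the $\widehat\gamma_T$ are Hermitian and the basis is orthonormal after rescaling, but the constants must be tracked carefully so that the factor comes out to exactly $1/p$.

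\textbf{Step 3 (norms).} Left multiplication by an operator $M$ has superoperator norm $\norm{M}_\infty$ on Hilbert--Schmidt space, so $\norm{O}_\infty=\norm{\mathsf Q(\mathsf A)}_\infty/p$. Bring $\mathsf A$ to its real canonical form $\mathsf A=R\bigoplus_{l}\begin{bmatrix}0&\lambda_l\\-\lambda_l&0\end{bmatrix}R^{\mathsf T}$ with $R\in\mathrm{SO}(2n)$ and $\abs{\lambda_l}\leq\norm{\mathsf A}_\infty=1$. A Gaussian unitary $U_R$ then conjugates $\mathsf Q(\mathsf A)$ into $\sum_{l=1}^n 2\lambda_l\, i\gamma_{2l-1}\gamma_{2l}$. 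Each term $i\gamma_{2l-1}\gamma_{2l}$ is a Hermitian involution of unit norm, so by the triangle inequality $\norm{\mathsf Q(\mathsf A)}_\infty\leq 2n$. Since $O$ is feasible for the infimum in Eq.~\eqref{eq:def_quadratic_reconstruction_constant} with $c=0$, taking the supremum over $\mathsf A$ gives $\mu_2(\cD_p)\leq 2n/p$.
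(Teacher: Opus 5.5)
Your proof is correct and is essentially the paper's argument. Your Majorana-basis Kraus operators $\sqrt{p}\,2^{-n}\widehat\gamma_S$ are a unitary rotation of the paper's matrix-unit Kraus operators $\sqrt{p/2^n}\ketbra{u}{v}$, and under that rotation your observable ($p^{-1}$ times left multiplication by $\mathsf Q(\mathsf A)$) becomes exactly the paper's $0\oplus\frac1p\bigoplus_u\mathsf Q(\mathsf A)$, with the same canonical-form bound $\norm{\mathsf Q(\mathsf A)}_\infty\le 2n$. The only slip is in the constants of your Step 2 display: since $\bra{T}O\ket{S}=p^{-1}2^{-n}\Tr(\widehat\gamma_T\mathsf Q(\mathsf A)\widehat\gamma_S)$ and $K_T^\dagger K_S=p\,4^{-n}\widehat\gamma_T\widehat\gamma_S$, the first expression carries $8^{-n}$ rather than $4^{-n}$, and the middle one is $4^{-n}\sum_S\mathsf Q(\mathsf A)\widehat\gamma_S^2=\mathsf Q(\mathsf A)$, so $\widehat{\cD}_p^\dagger(O)=\mathsf Q(\mathsf A)$ still holds.
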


The third example is the class of fermionic Gaussian noise channels, which describe open-system fermionic dynamics generated by Gaussian system--environment interactions with Gaussian ancillary states. This class includes physically relevant processes such as fermionic attenuation and particle loss~\cite{greplova2018DegradabilityFermionicGaussian}, and is naturally characterized by its linear action on covariance matrices, see details and Lemma~\ref{lem:Gaussian_channel_covariance_matrix} in Appendix~\ref{app:fermionic_Gaussian_maps}.

\begin{proposition}[Quadratic reconstruction constant bound of fermionic Gaussian noise]\label{prop:mu_2_Gaussian}
Let
$\cN_\mathrm{G}:\cL(\cH_A)\rightarrow\cL(\cH_B)$ be a fermionic Gaussian channel mapping from an $n$-mode fermionic system to an $m$-mode fermionic system, whose action on covariance matrices is (see Lemma~\ref{lem:Gaussian_channel_covariance_matrix})
\begin{equation}\label{eq:Gaussian_channel_covariance_matrix_mu_2}
\Gamma(\rho)\mapsto\Gamma(\cN_\mathrm{G}(\rho))=\mathsf{M}\Gamma(\rho)\mathsf{M}^{\mathsf{T}}+\mathsf{N},
\end{equation}
for some $\mathsf{M}\in\mathbb{R}^{2m\times2n}$, $\mathsf{N}\in\mathbb{R}^{2m\times2m}$ and $\mathsf{N}^\mathsf{T}=-\mathsf{N}$.
If
$\norm{\mathsf{M}}_\infty<1$,
then the quadratic reconstruction constant satisfies
\begin{equation}\label{eq:mu_2_Gaussian_noise_bound}
\mu_2(\cN_{\mathrm{G}})
\leq
\Tr[
\left(
\bI_{2n}-\mathsf{M}^{\mathsf T}\mathsf{M}
\right)^{-1}
]
\leq
\frac{2n}{1-\norm{\mathsf{M}}_\infty^2}.
\end{equation}
\end{proposition}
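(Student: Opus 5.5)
The plan is to construct the reconstructing environment observable explicitly from a Gaussian Stinespring dilation of $\cN_{\mathrm G}$, as in the proof of Lemma~\ref{lem:Gaussian_channel_covariance_matrix}. We write $\cN_{\mathrm G}(\rho)=\Tr_{E_{\mathrm{out}}}[U_R(\rho\otimes_{\mathrm F}\tau_{E_{\mathrm{in}}})U_R^\dagger]$ with $\tau_{E_{\mathrm{in}}}$ pure Gaussian and $R\in\mathrm O(2n+2r)$. Since $\mu_2$ is independent of the chosen dilation, we may use $E_{\mathrm{out}}$ as the environment of the complementary channel $\widehat{\cN}_{\mathrm G}$. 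We split the rows of $R$ into the $2m$ output rows $[\mathsf M\ \mathsf L]$ and the $2(n+r-m)$ environment rows $[\mathsf K\ \mathsf J]$. Orthogonality of the columns of $R$ restricted to the first $2n$ columns gives
\begin{equation}
\mathsf M^{\mathsf T}\mathsf M+\mathsf K^{\mathsf T}\mathsf K=\bI_{2n}.
\end{equation}
Hence $\mathsf G\coloneqq\mathsf K^{\mathsf T}\mathsf K=\bI_{2n}-\mathsf M^{\mathsf T}\mathsf M$ is positive definite whenever $\norm{\mathsf M}_\infty<1$, and so $\mathsf K$ has full column rank $2n$.

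Next we compute the Heisenberg action of $\widehat{\cN}_{\mathrm G}$ on an environment quadratic observable $\mathsf Q_E(\mathsf B)=i\sum_{a,b}\mathsf B_{a,b}\gamma^E_a\gamma^E_b$. Under $U_R$, each environment Majorana operator becomes $\gamma^E_a\mapsto\sum_j\mathsf K_{a,j}\gamma_j+\sum_l\mathsf J_{a,l}\eta_l$, where $\eta_l$ are the Majorana operators of $E_{\mathrm{in}}$. Hence $U_R^\dagger\mathsf Q_E(\mathsf B)U_R$ splits into three kinds of terms:
\begin{itemize}
\item system--system terms, giving $\mathsf Q(\mathsf K^{\mathsf T}\mathsf B\mathsf K)$;
\item mixed terms $\gamma_j\eta_l$;
\item environment--environment terms $\eta_l\eta_{l'}$.
\end{itemize}
We then take the partial expectation in $\tau_{E_{\mathrm{in}}}$, which is how $\widehat{\cN}_{\mathrm G}^\dagger$ acts. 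The mixed terms vanish because $\tau_{E_{\mathrm{in}}}$ is even, so its odd moments vanish; the fermionic tensor-product sign conventions need care here, but only parity enters. The environment--environment terms collapse to a real constant $c\bI$. Therefore
\begin{equation}
\widehat{\cN}_{\mathrm G}^\dagger(\mathsf Q_E(\mathsf B))=\mathsf Q(\mathsf K^{\mathsf T}\mathsf B\mathsf K)+c\bI.
\end{equation}

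Given any real antisymmetric $\mathsf A$ with $\norm{\mathsf A}_\infty=1$, we choose
\begin{equation}
\mathsf B=\mathsf K\mathsf G^{-1}\mathsf A\mathsf G^{-1}\mathsf K^{\mathsf T}.
\end{equation}
This $\mathsf B$ is real antisymmetric and satisfies $\mathsf K^{\mathsf T}\mathsf B\mathsf K=\mathsf A$, so $O=\mathsf Q_E(\mathsf B)$ is feasible in Eq.~\eqref{eq:def_quadratic_reconstruction_constant}. To bound its norm, we bring $\mathsf B$ to block-diagonal normal form by an orthogonal transformation, which is implemented by a Gaussian unitary. This gives $\mathsf Q_E(\mathsf B)=\sum_l 2b_l(-i\tilde\gamma_{2l-1}\tilde\gamma_{2l})$, so
\begin{equation}
\norm{O}_\infty\leq 2\sum_l\abs{b_l}=\norm{\mathsf B}_1 ,
\end{equation}
because each singular value $\abs{b_l}$ of $\mathsf B$ appears twice. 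With $\mathsf X=\mathsf K\mathsf G^{-1}$, Hölder's inequality then gives
\begin{equation}
\norm{\mathsf B}_1\leq\norm{\mathsf A}_\infty\norm{\mathsf X}_2^2=\Tr(\mathsf G^{-1}\mathsf K^{\mathsf T}\mathsf K\mathsf G^{-1})=\Tr(\mathsf G^{-1}).
\end{equation}
This yields the first inequality in Eq.~\eqref{eq:mu_2_Gaussian_noise_bound}. The second follows because every eigenvalue of $\mathsf G$ is at least $1-\norm{\mathsf M}_\infty^2$, and $\mathsf G$ has dimension $2n$.

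The main obstacle is the Heisenberg-picture step: justifying rigorously that the complementary channel's adjoint maps the environment quadratic to exactly $\mathsf Q(\mathsf K^{\mathsf T}\mathsf B\mathsf K)$ plus a constant. This requires handling the ordinary-tensor representation $\cJ$, with the parity strings it attaches to environment operators, when taking the partial trace. I would first verify the claim on monomials $\gamma^E_a\gamma^E_b$, using the evenness of $\tau_{E_{\mathrm{in}}}$ and of quadratic operators, so that the parity factors cancel.
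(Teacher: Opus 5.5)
Your proposal is correct and is essentially the paper's proof: the same Gaussian dilation and the same column-orthogonality relation $\mathsf M^{\mathsf T}\mathsf M+\mathsf K^{\mathsf T}\mathsf K=\bI_{2n}$, and your $\mathsf B=\mathsf K\mathsf G^{-1}\mathsf A\mathsf G^{-1}\mathsf K^{\mathsf T}$ is exactly the paper's $\mathsf A_E=(\mathsf C^+)^{\mathsf T}\mathsf A\mathsf C^+$, bounded by the same H\"older step. The only difference is that the paper argues in the Schr\"odinger picture, via the complementary output covariance $\mathsf C\Gamma(\rho)\mathsf C^{\mathsf T}+\mathsf N_E$ and the identity $\Tr[\mathsf Q(\mathsf B)\rho]=-\Tr[\mathsf B^{\mathsf T}\Gamma(\rho)]$; this sidesteps the parity-string bookkeeping you flag, although your parity argument for the Heisenberg-picture step is also sound.
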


Before we prove Theorem~\ref{thm:general_noise_Gaussianity_tradeoff}, we review two useful lemmas.

\begin{lemma}[Parent Hamiltonian for pure Gaussian states]\label{lem:parent_Hamiltonian}
Let $\ket{\phi}\in\cG_{\textup{pure}}^{(n)}$ be an $n$-mode fermionic Gaussian state vector and denote $\phi=\ketbra{\phi}{\phi}$, define the quadratic Hamiltonian (cf. Eq.~\eqref{eq:def_quadratic_majorana_observable})
\begin{equation}
H_\phi\coloneqq\frac n2\bI+\frac i4\sum_{j,k=1}^{2n}\Gamma_{j,k}(\phi)\gamma_j\gamma_k
=\frac n2\bI+\mathsf{Q}\left(\frac{\Gamma(\phi)}4\right),
\end{equation}
where $\Gamma(\phi)$ is the covariance matrix defined in Eq.~\eqref{eq:correlation_matrix}. Then
\begin{equation}
\ker(H_\phi)=\operatorname{span}(\ket{\phi}),\qquad
\operatorname{spec}(H_\phi)=\{0,1,\cdots,n\},
\end{equation}
and
\begin{equation}\label{eq:parent_Hamiltonian_bounds}
\bI-\phi\leq H_\phi\leq n(\bI-\phi).
\end{equation}
\end{lemma}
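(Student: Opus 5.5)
The plan is to put $\ket{\phi}$ into canonical form with a Gaussian unitary and then read off the spectrum of $H_\phi$ from a sum of commuting mode-number operators. Write $\ket{\phi}=U_R\ket{0^n}$ with $U_R\in\mathrm M_n$.

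First compute the vacuum covariance matrix. From $\gamma_{2j-1}\gamma_{2j}=i(\bI-2a_j^\dagger a_j)$ we get $\bra{0^n}(-i\gamma_{2j-1}\gamma_{2j})\ket{0^n}=1$, and all other off-diagonal entries vanish. Hence $\Gamma(0^n)=\bigoplus_j\begin{bmatrix}0&1\\-1&0\end{bmatrix}$. By Eq.~\eqref{eq:Gaussian_unitary_covariance_matrix}, $\Gamma(\phi)=R\Gamma(0^n)R^{\mathsf T}$.

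Next use Eq.~\eqref{eq:Gaussian_unitary_action}, $U_R^\dagger\gamma_jU_R=\sum_kR_{j,k}\gamma_k$. Conjugating $\gamma_j$ with $U_R$ replaces it by $\sum_l R_{l,j}\gamma_l$, so
\begin{equation}
U_R\Big(\sum_{j,k}\Gamma_{j,k}(0^n)\gamma_j\gamma_k\Big)U_R^\dagger=\sum_{l,m}\big(R\Gamma(0^n)R^{\mathsf T}\big)_{l,m}\gamma_l\gamma_m=\sum_{l,m}\Gamma_{l,m}(\phi)\gamma_l\gamma_m .
\end{equation}
This gives $H_\phi=U_RH_{0}U_R^\dagger$, where $H_0$ is the parent Hamiltonian of the vacuum. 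The sign and transpose conventions in this step are the only place needing care; I would check them directly on the vacuum block.

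For the vacuum itself, $\frac i4\sum_{j,k}\Gamma_{j,k}\gamma_j\gamma_k=\frac i4\sum_j 2\gamma_{2j-1}\gamma_{2j}$, where the factor $2$ comes from the antisymmetric pair. Substituting $\gamma_{2j-1}\gamma_{2j}=i(\bI-2a_j^\dagger a_j)$ gives $-\frac12\sum_j(\bI-2a_j^\dagger a_j)$. Therefore
\begin{equation}
H_0=\frac n2\bI-\frac n2\bI+\sum_j a_j^\dagger a_j=N,
\end{equation}
the total number operator. Its spectrum is $\{0,1,\dots,n\}$, and its kernel is spanned by $\ket{0^n}$. Conjugating by $U_R$ transfers both facts to $H_\phi$, giving $\ker H_\phi=\operatorname{span}(\ket\phi)$ and $\operatorname{spec}(H_\phi)=\{0,\dots,n\}$.

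The operator inequalities in Eq.~\eqref{eq:parent_Hamiltonian_bounds} follow from this spectrum. $H_\phi$ is zero on $\ket\phi$ and acts on $(\bI-\phi)\cH$, the orthogonal complement of $\ket\phi$, with eigenvalues in $[1,n]$. It also commutes with $\phi$. Comparing eigenvalues blockwise therefore gives $\bI-\phi\le H_\phi\le n(\bI-\phi)$.
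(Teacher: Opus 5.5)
Your proposal is correct and follows essentially the same route as the paper: write $\ket{\phi}=U_R\ket{0^n}$, use $\Gamma(\phi)=RJ^{(n)}R^{\mathsf T}$ to get $H_\phi=U_RNU_R^\dagger$ with $N$ the total number operator, and transfer the kernel, spectrum, and operator bounds by unitary conjugation. Your sign and transpose conventions are right, in particular $U_R\gamma_jU_R^\dagger=\sum_l R_{l,j}\gamma_l$ and $\gamma_{2j-1}\gamma_{2j}=i(\bI-2a_j^\dagger a_j)$.
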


The construction above coincides with the positive parent Hamiltonian associated with the fermionic Gaussian fidelity witness of Ref.~\cite{gluza2018FidelityWitnessesFermionic}: for the witness $\mathcal W$ constructed for the target state $\ket{\phi}$ in their
Eq.~(7), we have $H_\phi=\bI-\mathcal W$.

\begin{proof}[Proof of Lemma~\ref{lem:parent_Hamiltonian}]
For the pure Gaussian state $\ket{\phi}\in\cG_{\textup{pure}}^{(n)}$, there exists a Gaussian unitary $U_R$ with some $R\in\mathrm{O}(2n)$ such that $\ket{\phi}=U_R\ket{0^n}$. Recall that the covariance matrix of $\ketbra{0^n}{0^n}$ is
\begin{equation}
J^{(n)}=\Gamma\left(\ketbra{0^n}{0^n}\right)=\bigoplus_{j=1}^n
\begin{bmatrix}
0&1\\
-1&0
\end{bmatrix},
\end{equation}
and $\Gamma(\phi)=RJ^{(n)}R^\mathsf{T}$ by Eq.~\eqref{eq:Gaussian_unitary_covariance_matrix}. Recall that the total particle number operator is
\begin{equation}
N=\sum_{j=1}^n\ket{1}_j\bra{1}_j=\frac n2\bI+\frac i2\sum_{j=1}^n\gamma_{2j-1}\gamma_{2j}=\frac n2\bI+\frac i4\sum_{\mu,\nu=1}^{2n}J^{(n)}_{\mu,\nu}\gamma_\mu\gamma_\nu.
\end{equation}
The canonical occupation number basis is an eigenbasis of $N$, and
\begin{equation}
\ker(N)=\operatorname{span}(\ket{0^n}),\qquad
\operatorname{spec}(N)=\{0,1,\cdots,n\},
\end{equation}
we have
\begin{equation}
\begin{aligned}
H_\phi&=\frac n2\bI+\frac i4\sum_{j,k=1}^{2n}\Gamma_{j,k}(\phi)\gamma_j\gamma_k
\\&=\frac n2\bI+\frac i4\sum_{j,k=1}^{2n}\left(RJ^{(n)}R^\mathsf{T}\right)_{j,k}\gamma_j\gamma_k
\\&=\frac n2\bI+\frac i4\sum_{\mu,\nu=1}^{2n}J^{(n)}_{\mu,\nu}\sum_{j,k=1}^{2n}R_{j,\mu}R_{k,\nu}\gamma_j\gamma_k
\\&=\frac n2\bI+\frac i4\sum_{\mu,\nu=1}^{2n}J^{(n)}_{\mu,\nu}U_R\gamma_\mu\gamma_\nu U_R^\dagger
\\&=U_RNU_R^\dagger,
\end{aligned}
\end{equation}
where we use Eq.~\eqref{eq:Gaussian_unitary_action}. So
\begin{equation}
\ker(H_\phi)=\operatorname{span}(U_R\ket{0^n})=\operatorname{span}(\ket{\phi}),\qquad
\operatorname{spec}(H_\phi)=\{0,1,\cdots,n\}.
\end{equation}
Finally, $H_\phi$ has eigenvalue zero on $\ket{\phi}$ and eigenvalues in $\{1,\cdots,n\}$ on its orthogonal complement.
Since
$\bI-\ket{\phi}\bra{\phi}$ is the orthogonal projector onto
that complement, it follows that
\begin{equation}
    \bI-\ket{\phi}\bra{\phi}
    \leq
    H_\phi
    \leq
    n\left(\bI-\ket{\phi}\bra{\phi}\right).
\end{equation}
This completes the proof.
\end{proof}

The second lemma shows that approximate correctability limits how well the complementary environment can distinguish different logical states, see Lemma~28 and its proof in Ref.~\cite{faist2020ContinuousSymmetriesApproximatea}.
\begin{lemma}[Complementary environment indistinguishability,~\cite{faist2020ContinuousSymmetriesApproximatea}]\label{lem:complementary_environment_indistinguishability}
For any encoding channel $\cE$ and noise channel $\cN$, any complementary channel $\widehat{\cN\circ\cE}$ of $\cN\circ\cE$, and any two logical states $\sigma_L,\sigma_L'$, we have
\begin{equation}
\delta\left(\widehat{\cN\circ\cE}(\sigma_L),\widehat{\cN\circ\cE}(\sigma_L')\right)\leq2\varepsilon_{\textup{worst}}\left(\cN\circ\cE\right),
\end{equation}
where $\delta(\rho_1,\rho_2)=\norm{\rho_1-\rho_2}_1/2$ is the trace distance between two density matrices and $\varepsilon_{\textup{worst}}$ is defined in Eq.~\eqref{eq:def_worst-case_recovery_error}.
\end{lemma}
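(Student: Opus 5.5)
The plan is to derive the bound from the complementary-channel duality of approximate quantum error correction due to B\'eny and Oreshkov~\cite{beny2010AQEC}. Write $\mathcal T\coloneqq\cN\circ\cE:\cL(\cH_L)\rightarrow\cL(\cH_B)$, with Stinespring isometry $W_{\mathcal T}=W_\cN V$. Its complement $\widehat{\mathcal T}:\cL(\cH_L)\rightarrow\cL(\cH_E)$ then coincides with $\widehat{\cN}\circ\cE$ for this dilation.

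The first step is to fix the dilation. Any two complementary channels of $\mathcal T$ are related by a partial isometry on the environment that is isometric on the relevant supports. Trace distance is invariant under such an embedding, so it suffices to prove the claim for one fixed dilation.

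The key step invokes the duality theorem. It states that the optimal worst-case entanglement fidelity of recovery equals the optimal worst-case entanglement fidelity between the complementary channel and a completely depolarizing (constant) channel:
\begin{equation}
f_{\mathrm{worst}}(\mathcal T)
=
\sup_{\zeta}
\inf_{\ket{\omega}_{L,R}}
F\left(
(\widehat{\mathcal T}\otimes\operatorname{id}_R)(\omega_{L,R}),
\zeta_E\otimes\Tr_L(\omega_{L,R})
\right),
\end{equation}
where $\zeta$ ranges over states on $\cH_E$. Its proof rests on Uhlmann's theorem applied to the joint purification on $B E R$, together with a minimax exchange (Sion's theorem) between the convex, compact set of recovery maps or states $\zeta$ and the set of reference states. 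I expect this exchange to be the main obstacle. In finite dimensions, compactness lets us take a $\zeta^\star$ attaining the supremum, so there is no need for an approximating sequence. Rather than reprove it, I would cite~\cite{beny2010AQEC} or Ref.~\cite{faist2020ContinuousSymmetriesApproximatea}.

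With $\zeta^\star$ in hand, take any logical state $\sigma_L$ and purify it to $\ket{\omega}_{L,R}$. Fidelity is monotone under $\Tr_R$, so
\begin{equation}
F\bigl(\widehat{\mathcal T}(\sigma_L),\zeta^\star\bigr)\geq f_{\mathrm{worst}}(\mathcal T).
\end{equation}
The Fuchs--van de Graaf inequality $\delta(\rho,\sigma)\leq\sqrt{1-F(\rho,\sigma)^2}$ then gives
\begin{equation}
\delta\bigl(\widehat{\mathcal T}(\sigma_L),\zeta^\star\bigr)\leq\sqrt{1-f_{\mathrm{worst}}(\mathcal T)^2}=\varepsilon_{\mathrm{worst}}(\mathcal T).
\end{equation}
The same bound holds with $\sigma_L'$ in place of $\sigma_L$. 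The triangle inequality for the trace distance through the common state $\zeta^\star$ then yields $\delta\bigl(\widehat{\mathcal T}(\sigma_L),\widehat{\mathcal T}(\sigma_L')\bigr)\leq2\varepsilon_{\mathrm{worst}}(\mathcal T)$, as claimed.
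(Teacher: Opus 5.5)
Your proof is correct and follows essentially the argument the paper relies on (it defers to Lemma~28 of Ref.~\cite{faist2020ContinuousSymmetriesApproximatea}): the B\'eny--Oreshkov complementary-channel duality gives a fixed environment state $\zeta^\star$ with $\delta\bigl(\widehat{\cN\circ\cE}(\sigma_L),\zeta^\star\bigr)\leq\varepsilon_{\mathrm{worst}}(\cN\circ\cE)$ for every logical input, since trace distance is bounded by purified distance, and the triangle inequality through $\zeta^\star$ gives the factor $2$. One cosmetic point: writing the dilation as $W_\cN V$ presumes an isometric encoder, whereas the lemma allows any encoding channel, but your argument goes through verbatim for an arbitrary Stinespring dilation of $\cN\circ\cE$.
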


Now we prove Theorem~\ref{thm:general_noise_Gaussianity_tradeoff}.
\begin{proof}[Proof of Theorem~\ref{thm:general_noise_Gaussianity_tradeoff}]
Denote $F=F_0^{(n)}(\cC)$ and since the pure fermionic Gaussian manifold is compact in finite dimension, there exists $\ket{\phi}\in\cG_{\textup{pure}}^{(n)}$ such that $\bra{\phi}P\ket{\phi}=F$, where $P=VV^\dagger$ is the orthogonal projector onto $\cC$ (cf. Eq.~\eqref{eq:def_code_Gaussian_fidelity}). Notice that $F\geq\operatorname{rank}(P)/2^n>0$ and define the codeword
\begin{equation}
\ket{\psi_1}=\frac{P\ket{\phi}}{\sqrt{F}},
\end{equation}
so $\abs{\bra{\phi}\ket{\psi_1}}^2=F$.
Since $\dim(\cC)>1$, choose a codeword $\ket{\psi_2}$ orthogonal to $\ket{\psi_1}$, so that
\begin{equation}
\bra{\phi}\ket{\psi_2}=\bra{\phi}P\ket{\psi_2}=\sqrt{F}\bra{\psi_1}\ket{\psi_2}=0.
\end{equation}
Apply Lemma~\ref{lem:parent_Hamiltonian} to $\ket{\phi}$ and construct the corresponding Hamiltonian $H_\phi$. Write $\mathsf{A}_\phi=\Gamma(\phi)/4$, so $\norm{\mathsf{A}_\phi}_\infty=1/4$ and
\begin{equation}
H_\phi=\frac n2\bI+\mathsf{Q}(\mathsf{A}_\phi).
\end{equation}
Denote $\psi_{1,2}=\ketbra{\psi_{1,2}}{\psi_{1,2}}$, Eq.~\eqref{eq:parent_Hamiltonian_bounds} implies
\begin{equation}\label{eq:H_phi_lower_bound}
\Tr(H_\phi\psi_2)\geq1,\qquad
\Tr(H_\phi\psi_1)\leq n(1-F).
\end{equation}

For the noise channel $\cN$, denote its complementary channel associated with a minimal Stinespring dilation with isometry $W_\cN$ as $\widehat{\cN}$, see Eq.~\eqref{eq:def_complementary_channel}. 
Fix an arbitrary $\eta>0$, since $\mu_2(\cN)<\infty$, there exist an observable $\widetilde{O}\in\cL(\cH_E)$ and $\widetilde{c}\in\mathbb{R}$, such that (cf. Eq.~\eqref{eq:def_quadratic_reconstruction_constant})
\begin{equation}
\widehat{\cN}^\dagger(\widetilde{O})=\mathsf{Q}(\Gamma(\phi))+\widetilde{c}\bI,\qquad
\norm{\widetilde{O}}_\infty\leq\mu_2(\cN)+\eta.
\end{equation}
Let $O_\phi=\widetilde{O}/4$ and $c_\phi=\widetilde{c}/4$, then
\begin{equation}
\widehat{\cN}^\dagger(O_\phi)=\mathsf{Q}(\mathsf{A}_\phi)+c_\phi\bI,\qquad
\norm{O_\phi}_\infty\leq\frac{\mu_2(\cN)+\eta}4.
\end{equation}
Using H\"{o}lder's inequality, we have
\begin{equation}
\begin{aligned}
\Tr(H_\phi\psi_2)-\Tr(H_\phi\psi_1)&=\Tr(\mathsf{Q}(\mathsf{A}_\phi)(\psi_2-\psi_1))
\\&=\Tr(O_\phi\left(\widehat{\cN}(\psi_2)-\widehat{\cN}(\psi_1)\right))
\\&\leq\norm{O_\phi}_\infty\norm{\widehat{\cN}(\psi_2)-\widehat{\cN}(\psi_1)}_1
\\&\leq\frac{\mu_2(\cN)+\eta}2\delta\left(\widehat{\cN}(\psi_2),\widehat{\cN}(\psi_1)\right).
\end{aligned}
\end{equation}
Using the encoding isometry $V$ fixed above, define $\sigma_L^{(j)}=V^\dagger\psi_jV$. Since $P\psi_jP=\psi_j$, these are normalized logical states and $\cE\left(\sigma_L^{(j)}\right)=\psi_j$. So $W_\cN V$ is a Stinespring isometry for $\cN\circ\cE$, and $\widehat{\cN}\circ\cE$ is a complementary channel of $\cN\circ\cE$. Applying Lemma~\ref{lem:complementary_environment_indistinguishability} and inserting Eq.~\eqref{eq:H_phi_lower_bound}, we obtain
\begin{equation}
\begin{aligned}
1-n(1-F)&\leq\Tr(H_\phi\psi_2)-\Tr(H_\phi\psi_1)
\\&\leq\frac{\mu_2(\cN)+\eta}2\delta\left(\widehat{\cN}(\psi_2),\widehat{\cN}(\psi_1)\right)
\\&\leq\left(\mu_2(\cN)+\eta\right)\varepsilon_{\textup{worst}}\left(\cN\circ\cE\right).
\end{aligned}
\end{equation}
Since $\eta>0$ is arbitrary, we obtain
\begin{equation}
F\leq
1-\frac1n
+
\frac{\mu_2(\cN)}{n}
\varepsilon_{\mathrm{worst}}
(\cN\circ\cE),
\end{equation}
which proves Eq.~\eqref{eq:general_noise_Gaussianity_correctability_tradeoff}.
\end{proof}

Here are two lemmas used in the proofs of Propositions~\ref{prop:mu_2_Qp},~\ref{prop:mu_2_depolarizing}, and~\ref{prop:mu_2_Gaussian}. The first lemma shows the operator norm of the quadratic Majorana observable defined in Eq.~\eqref{eq:def_quadratic_majorana_observable} equals the trace norm of the corresponding antisymmetric matrix.
\begin{lemma}[Operator norm of quadratic Majorana observables]\label{lem:quadratic_Majorana_norm}
For any real antisymmetric matrix
$\mathsf A\in\mathbb R^{2r\times 2r}$ and the quadratic Majorana observable
\begin{equation}\label{eq:def_quadratic_majorana_observable_norm}
\mathsf Q(\mathsf{A})
\coloneqq
i\sum_{j,k=1}^{2r}
\mathsf{A}_{j,k}\gamma_j\gamma_k,
\end{equation}
we have
\begin{equation}
\norm{\mathsf Q(\mathsf A)}_\infty
=
\norm{\mathsf A}_1.
\end{equation}
\end{lemma}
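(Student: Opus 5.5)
The plan is to bring $\mathsf A$ into its real normal form with an orthogonal change of Majorana basis, implement that change by a Gaussian unitary, and read off the spectrum of $\mathsf Q(\mathsf A)$ in the occupation basis.

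\emph{Normal form.} Every real antisymmetric $\mathsf A\in\mathbb R^{2r\times 2r}$ admits a decomposition $\mathsf A=R^{\mathsf T}\bigl(\bigoplus_{\ell=1}^r \lambda_\ell \bigl[\begin{smallmatrix}0&1\\-1&0\end{smallmatrix}\bigr]\bigr)R$ with $R\in\mathrm O(2r)$ and $\lambda_\ell\geq0$. The singular values of $\mathsf A$ are then $\lambda_1,\lambda_1,\dots,\lambda_r,\lambda_r$, so $\norm{\mathsf A}_1=2\sum_\ell\lambda_\ell$.

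\emph{Gaussian change of basis.} Let $U_R\in\mathrm M_r$ be the unitary associated with $R$ as in Eq.~\eqref{eq:Gaussian_unitary_action}, so that $\gamma'_\mu\coloneqq U_R^\dagger\gamma_\mu U_R=\sum_k R_{\mu,k}\gamma_k$. Substituting the normal form into $\mathsf Q(\mathsf A)$ and repeating the index manipulation from the proof of Lemma~\ref{lem:parent_Hamiltonian} gives
\begin{equation}
\sum_{j,k}\mathsf A_{j,k}\gamma_j\gamma_k=\sum_{\mu,\nu}D_{\mu,\nu}\gamma'_\mu\gamma'_\nu,
\qquad
D=\bigoplus_{\ell=1}^r \lambda_\ell \begin{bmatrix}0&1\\-1&0\end{bmatrix}.
\end{equation}
Hence
\begin{equation}
\mathsf Q(\mathsf A)=U_R^\dagger\Bigl(2i\sum_{\ell=1}^r\lambda_\ell\gamma_{2\ell-1}\gamma_{2\ell}\Bigr)U_R .
\end{equation}
Since $U_R$ is unitary, $\norm{\mathsf Q(\mathsf A)}_\infty=\norm{2i\sum_\ell\lambda_\ell\gamma_{2\ell-1}\gamma_{2\ell}}_\infty$.

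\emph{Spectrum in the occupation basis.} The operators $i\gamma_{2\ell-1}\gamma_{2\ell}$ pairwise commute. Each is Hermitian with square $\bI$, and in the occupation basis it equals $-Z_\ell$ (Jordan–Wigner, consistent with $N=\frac n2\bI+\frac i2\sum\gamma_{2j-1}\gamma_{2j}$). The operator $2i\sum_\ell\lambda_\ell\gamma_{2\ell-1}\gamma_{2\ell}$ is therefore diagonal with eigenvalues $2\sum_\ell\pm\lambda_\ell$ over all sign choices. Its largest absolute eigenvalue is $2\sum_\ell\lambda_\ell$, attained at the vacuum or the fully occupied state, and this equals $\norm{\mathsf A}_1$, completing the proof.

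The only delicate step is the bookkeeping of the factor $2$ in the change of basis: $\mathsf Q$ sums over both orderings $(j,k)$ and $(k,j)$, and $\mathsf A$ is antisymmetric. Both must be tracked when matching the eigenvalues against the trace norm, which counts each $\lambda_\ell$ twice.
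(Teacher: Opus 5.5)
Your proof is correct and follows essentially the same route as the paper. Both arguments bring $\mathsf A$ to its real canonical form by an orthogonal change of Majorana basis, rewrite $\mathsf Q(\mathsf A)$ as $2\sum_\ell\lambda_\ell\, i\gamma'_{2\ell-1}\gamma'_{2\ell}$, and compare the eigenvalues $2\sum_\ell\pm\lambda_\ell$ with the singular values $\lambda_\ell,\lambda_\ell$ of $\mathsf A$. The only difference is that you implement the rotation by the Gaussian unitary $U_R$, so the commuting involutions are literally $-Z_\ell$ in the Jordan--Wigner basis; this makes rigorous the paper's brief remark that every sign pattern, and hence the maximal value $2\sum_\ell\lambda_\ell$, actually occurs.
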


\begin{proof}[Proof of Lemma~\ref{lem:quadratic_Majorana_norm}]
By the real canonical form of antisymmetric matrices, there exists
$R\in\mathrm O(2r)$ such that
\begin{equation}
R^{\mathsf T}\mathsf A R
=
\bigoplus_{j=1}^{r}
\begin{bmatrix}
0&a_j\\
-a_j&0
\end{bmatrix},
\qquad
a_j\geq0.
\end{equation}
Defining rotated Majorana operators
$\widetilde{\gamma}_a=\sum_b R_{b,a}\gamma_b$, we obtain
\begin{equation}
\mathsf Q(\mathsf A)
=
2\sum_{j=1}^{r}
a_ji\widetilde{\gamma}_{2j-1}\widetilde{\gamma}_{2j}.
\end{equation}
The operators
$i\widetilde{\gamma}_{2j-1}\widetilde{\gamma}_{2j}$
are mutually commuting Hermitian involutions and can be regarded as Pauli-$Z_j$ operators of the rotated canonical fermionic modes up to a sign convention. So the eigenvalues of
$\mathsf Q(\mathsf A)$ are
$2\sum_j s_j a_j$ with $s_j\in\{\pm1\}$. Hence
\begin{equation}
\norm{\mathsf Q(\mathsf A)}_\infty
=
2\sum_{j=1}^{r}a_j.
\end{equation}
Notice that the singular values of $\mathsf A$ are
$a_1,a_1,\ldots,a_r,a_r$, so we prove
\begin{equation}
\norm{\mathsf Q(\mathsf A)}_\infty
=
\norm{\mathsf A}_1.
\end{equation}
\end{proof}

The second lemma gives an upper bound for the quadratic reconstruction constant $\mu_2$ of a noise channel in its Kraus representation.
\begin{lemma}[Kraus-space reconstruction of quadratic observables]\label{lem:Kraus_space_reconstruction}
Let $\cN:\cL(\cH_A)\longrightarrow\cL(\cH_B)$ be a finite-dimensional quantum channel, where $\cH_A$ is the Fock space of $n$ fermionic modes, and let
\begin{equation}
\cN(\rho)=\sum_{\alpha=1}^{r}K_\alpha\rho K_\alpha^\textnormal{\textdagger}
\end{equation}
be an arbitrary Kraus representation of $\cN$, with $\sum_{\alpha=1}^{r}K_\alpha^\textnormal{\textdagger} K_\alpha=\bI_A$. Introduce an $r$-dimensional environment $\cH_E=\operatorname{span}\{\ket{\alpha}\}_{\alpha=1}^{r}$. Suppose that there exists a constant $C<\infty$ such that, for every real antisymmetric matrix $\mathsf A$ with $\norm{\mathsf A}_\infty=1$, there exist an observable $O\in\cL(\cH_E)$ and $c\in\mathbb{R}$ satisfying
\begin{equation}\label{eq:Kraus_quadratic_reconstruction}
\sum_{\alpha,\beta=1}^{r}\bra{\beta}O\ket{\alpha}K_\beta^\textnormal{\textdagger} K_\alpha=\mathsf Q(\mathsf A)+c\bI_A,
\end{equation}
with $\norm{O}_\infty\leq C$ and $\mathsf{Q}(\mathsf{A})$ defined in Eq.~\eqref{eq:def_quadratic_majorana_observable}.
Then the quadratic reconstruction constant obeys
\begin{equation}
\mu_2(\cN)\leq C.
\end{equation}
Conversely, if there exists a real antisymmetric matrix $\mathsf A$ with $\norm{\mathsf A}_\infty=1$ such that
\begin{equation}
\mathsf Q(\mathsf A)
\notin
\operatorname{span}
\left\{
K_\beta^\textnormal{\textdagger} K_\alpha\middle\mid
\alpha,\beta\in[r]
\right\}
+\mathbb R\bI_A,
\end{equation}
then
\begin{equation}
\mu_2(\cN)=\infty.
\end{equation}
\end{lemma}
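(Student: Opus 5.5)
We prove the two directions of Lemma~\ref{lem:Kraus_space_reconstruction} separately, working throughout with the canonical Stinespring isometry attached to the given Kraus representation and then computing the adjoint of the complementary channel explicitly. Concretely, we take $W_\cN:\cH_A\to\cH_B\otimes\cH_E$, $W_\cN\ket{v}=\sum_\alpha K_\alpha\ket{v}\otimes\ket{\alpha}$. This is an isometry because $\sum_\alpha K_\alpha^\dagger K_\alpha=\bI_A$, and $\Tr_E(W_\cN\rho W_\cN^\dagger)=\cN(\rho)$. The complementary channel is
\[
\widehat\cN(\rho)=\sum_{\alpha,\beta}\Tr(K_\alpha\rho K_\beta^\dagger)\ketbra{\alpha}{\beta}.
\]
For any $O\in\cL(\cH_E)$ this gives
\[
\Tr(O\widehat\cN(\rho))=\sum_{\alpha,\beta}\bra{\beta}O\ket{\alpha}\Tr(K_\beta^\dagger K_\alpha\rho),
\]
so that
\[
\widehat\cN^\dagger(O)=\sum_{\alpha,\beta}\bra{\beta}O\ket{\alpha}K_\beta^\dagger K_\alpha.
\]
This is exactly the left-hand side of Eq.~\eqref{eq:Kraus_quadratic_reconstruction}.

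For the first direction, fix any real antisymmetric $\mathsf A$ with $\norm{\mathsf A}_\infty=1$. The hypothesis supplies a Hermitian $O$ with $\norm{O}_\infty\le C$ and a scalar $c$ such that $\widehat\cN^\dagger(O)=\mathsf Q(\mathsf A)+c\bI$. Hence the inner infimum in Eq.~\eqref{eq:def_quadratic_reconstruction_constant} is at most $C$ for every admissible $\mathsf A$, and taking the supremum over $\mathsf A$ gives $\mu_2(\cN)\le C$.

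The only subtlety is that $\mu_2$ was defined using some fixed, possibly minimal, dilation rather than this particular Kraus dilation. We handle it with the dilation-independence remark made after the definition. Any two Stinespring dilations are related by a partial isometry $Y$ on the environment, $W'=(\bI\otimes Y)W$. Under this relation $O\mapsto Y^\dagger O Y$ maps feasible $O'$ to feasible $O$ without increasing the norm. Conversely, $O\mapsto YOY^\dagger$ works when the dilation $W$ is minimal, since $W$'s environment support then lies in the initial space of $Y$. Both maps preserve Hermiticity. So the feasibility values coincide, and the bound transfers to the dilation used in the definition. I expect this bookkeeping to be the main, though mild, obstacle.

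For the converse, suppose $\mathsf Q(\mathsf A)\notin\operatorname{span}\{K_\beta^\dagger K_\alpha\}+\mathbb R\bI_A$ for some admissible $\mathsf A$. By the formula above, every $\widehat\cN^\dagger(O)$ lies in $\operatorname{span}\{K_\beta^\dagger K_\alpha\}$. Therefore $\widehat\cN^\dagger(O)=\mathsf Q(\mathsf A)+c\bI$ has no solution $(O,c)$, the inner infimum is empty and hence equals $\infty$, and so $\mu_2(\cN)=\infty$.

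Under a different dilation, $\widehat\cN'^\dagger(O')=\widehat\cN^\dagger(Y^\dagger O'Y)$ has the same range, so this conclusion is also independent of the dilation chosen.
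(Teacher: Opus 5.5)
Your proof is correct and follows the paper's argument. Like the paper, you build the Stinespring isometry $\sum_\alpha K_\alpha\otimes\ket{\alpha}$ from the given Kraus operators, compute $\widehat{\cN}^\dagger(O)=\sum_{\alpha,\beta}\bra{\beta}O\ket{\alpha}K_\beta^\dagger K_\alpha$, read off $\mu_2(\cN)\leq C$ from the definition, and get the converse because the range of $\widehat{\cN}^\dagger$ lies in $\operatorname{span}\{K_\beta^\dagger K_\alpha\}$. Your extra dilation-independence bookkeeping only makes explicit what the paper asserts in its remark after the definition of $\mu_2$, and your minimality caveat is not needed, since $(\bI\otimes Y)W$ being an isometry already forces $(\bI\otimes Y^\dagger Y)W=W$.
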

\begin{proof}[Proof of Lemma~\ref{lem:Kraus_space_reconstruction}]
We first derive the complementary adjoint associated with the given
Kraus representation. The Stinespring isometry is
$W
=
\sum_{\alpha=1}^{r}
K_\alpha\otimes\ket{\alpha}$, and
tracing out the system output gives the complementary channel
\begin{equation}\label{eq:Kraus_complementary_channel}
\widehat{\cN}(\rho)=\Tr_B(W\rho W^\dagger)
=
\sum_{\alpha,\beta=1}^{r}
\Tr(
K_\alpha\rho K_\beta^\dagger
)
\ketbra{\alpha}{\beta}.
\end{equation}
For an arbitrary observable $O\in\cL(\cH_E)$, denote $\bra{\beta}O\ket{\alpha}=O_{\beta,\alpha}$ and we obtain
\begin{equation}
\begin{aligned}
\Tr(
O\widehat{\cN}(\rho)
)
&=
\sum_{\alpha,\beta=1}^{r}
O_{\beta,\alpha}
\Tr(
K_\alpha\rho K_\beta^\dagger
)\\
&=
\Tr[
\left(
\sum_{\alpha,\beta=1}^{r}
O_{\beta,\alpha}
K_\beta^\dagger K_\alpha
\right)
\rho
].
\end{aligned}
\end{equation}
By the definition of the adjoint channel, 
\begin{equation}\label{eq:Kraus_complementary_adjoint}
\widehat{\cN}^\dagger(O)=\sum_{\alpha,\beta=1}^{r}
O_{\beta,\alpha}
K_\beta^\dagger K_\alpha.
\end{equation}
Suppose that there exists a constant $C<\infty$ such that, for every real antisymmetric matrix $\mathsf A$ with $\norm{\mathsf A}_\infty=1$, there exist an  observable $O\in\cL(\cH_E)$ with $\norm{O}_\infty\leq C$ and $c\in\mathbb{R}$ satisfying Eq.~\eqref{eq:Kraus_quadratic_reconstruction},
then
\begin{equation}
\widehat{\cN}^\dagger(O)
=
\mathsf Q(\mathsf A)+c\bI_A.
\end{equation}
Taking the supremum over all
$\mathsf A=-\mathsf A^{\mathsf T}$ with
$\norm{\mathsf A}_\infty=1$ gives
\begin{equation}
\mu_2(\cN)\leq C.
\end{equation}

Finally, Eq.~\eqref{eq:Kraus_complementary_adjoint} shows that the range of $\widehat{\cN}^\dagger$ is contained in $\operatorname{span}
\left\{
K_\beta^\dagger K_\alpha\middle\mid
\alpha,\beta\in[r]
\right\}$.
Hence, if for some normalized antisymmetric $\mathsf A$ there is no
real scalar $c$ such that
\begin{equation}
\mathsf Q(\mathsf A)+c\bI_A
\in
\operatorname{span}
\left\{
K_\beta^\dagger K_\alpha\middle\mid
\alpha,\beta\in[r]
\right\},
\end{equation}
then no environment observable can reconstruct this quadratic
direction, even up to an additive scalar. The corresponding inner
infimum in the definition of $\mu_2(\cN)$ is therefore infinite,
which implies
\begin{equation}
\mu_2(\cN)=\infty.
\end{equation}
\end{proof}

We then provide the proofs of Propositions~\ref{prop:mu_2_Qp},~\ref{prop:mu_2_depolarizing},and~\ref{prop:mu_2_Gaussian}.
\begin{proof}[Proof of Proposition~\ref{prop:mu_2_Qp}]
We apply Lemma~\ref{lem:Kraus_space_reconstruction} and follow the setting therein. Consider the Kraus representation of $\cN_{\mathrm{Qp}}$ with Kraus operators
\begin{equation}\label{eq:Qp_Kraus_operators}
\begin{aligned}
K_0
&=
\sqrt{1-p_{\mathrm{qp}}-p_{\mathrm{pair}}}\bI,\\
K_j
&=
\sqrt{\frac{p_{\mathrm{qp}}}{2n}}\gamma_j,
\qquad j\in[2n],\\
K_{k,j}
&=
\frac{\sqrt{p_{\mathrm{pair}}}}{2n}
\gamma_k\gamma_j,
\qquad k,j\in[2n].
\end{aligned}
\end{equation}
Introduce the corresponding environment system
\begin{equation}
\cH_E=\operatorname{span}\{\ket{0}\}\oplus
\operatorname{span}\left\{\ket{j}_{\mathrm{qp}}\middle\mid j\in[2n]\right\}\oplus\bigoplus_{k=1}^{2n}
\operatorname{span}\left\{\ket{k,j}_{\mathrm{pair}}\middle\mid j\in[2n]\right\},
\end{equation}
and set
$C=2n/(p_{\mathrm{qp}}+p_{\mathrm{pair}})$.
For any real antisymmetric matrix $\mathsf{A}$ with $\norm{\mathsf{A}}_\infty=1$, define the Hermitian operator $O\in\cL(\cH_E)$ as
\begin{equation}
\begin{aligned}
O&=iC\sum_{j,l=1}^{2n}A_{j,l}\ketbra{j}{l}_{\mathrm{qp}}+iC\sum_{k,j,l=1}^{2n}A_{j,l}\ketbra{k,j}{k,l}_{\mathrm{pair}}
\\&=0\oplus iC\mathsf{A}\oplus\bigoplus_{k=1}^{2n}iC\mathsf{A},
\end{aligned}
\end{equation}
then $\norm{O}_\infty=\norm{iC\mathsf{A}}_\infty=C$. It remains to verify Eq.~\eqref{eq:Kraus_quadratic_reconstruction} (cf. Eq.~\eqref{eq:def_quadratic_majorana_observable}):
\begin{equation}
\begin{aligned}
\sum_{j,l=1}^{2n}\bra{j}O\ket{l}_{\mathrm{qp}}K_j^\dagger K_l+
\sum_{k,j,l=1}^{2n}\bra{k,j}O\ket{k,l}_{\mathrm{pair}}K_{k,j}^\dagger K_{k,l}
&=iC\frac{p_{\mathrm{qp}}}{2n}\sum_{j,l=1}^{2n}\mathsf{A}_{j,l}\gamma_j\gamma_l
+iC\frac{p_{\mathrm{pair}}}{4n^2}\sum_{k,j,l=1}^{2n}\mathsf{A}_{j,l}(\gamma_k\gamma_j)^\dagger\gamma_k\gamma_l
\\&=C\left(\frac{p_{\mathrm{qp}}}{2n}+\frac{p_{\mathrm{pair}}}{2n}\right)\mathsf{Q}(\mathsf{A})
\\&=\mathsf{Q}(\mathsf{A}).
\end{aligned}
\end{equation}
So we have
\begin{equation}
\mu_2(\cN_{\mathrm{Qp}})\leq C=\frac{2n}{p_{\mathrm{qp}}+p_{\mathrm{pair}}}.
\end{equation}
\end{proof}

\begin{proof}[Proof of Proposition~\ref{prop:mu_2_depolarizing}]
We apply Lemma~\ref{lem:Kraus_space_reconstruction} and follow the
setting therein. Let $d=2^n$ and fix an orthonormal basis
$\{\ket{u}\}_{u=1}^{d}$ of the fermionic Hilbert space.
Consider the Kraus representation of $\cD_p$ with Kraus operators
\begin{equation}\label{eq:depolarizing_Kraus_operators}
K_0=\sqrt{1-p}\bI,
\qquad
K_{u,v}
=
\sqrt{\frac{p}{d}}\ketbra{u}{v},
\qquad
u,v\in[d].
\end{equation}
Introduce the corresponding environment system
\begin{equation}
\cH_E
=
\operatorname{span}\{\ket{0}\}
\oplus
\operatorname{span}
\left\{
\ket{u,v}\middle\mid u,v\in[d]
\right\},
\end{equation}
and set $C=2n/p$.
For any real antisymmetric matrix $\mathsf A$ with
$\norm{\mathsf A}_\infty=1$, define the Hermitian operator
$O\in\cL(\cH_E)$ by (cf. Eq.~\eqref{eq:def_quadratic_majorana_observable})
\begin{equation}\label{eq:depolarizing_environment_observable}
\begin{aligned}
O
&=
\frac{1}{p}
\sum_{u=1}^{d}
\sum_{v,w=1}^{d}
\bra{v}\mathsf Q(\mathsf A)\ket{w}
\ketbra{u,v}{u,w}
\\
&=
0\oplus
\frac{1}{p}
\bigoplus_{u=1}^{d}
\mathsf Q(\mathsf A).
\end{aligned}
\end{equation}
Using Lemma~\ref{lem:quadratic_Majorana_norm},
we have
\begin{equation}
\norm{O}_\infty
=
\frac{1}{p}
\norm{\mathsf Q(\mathsf A)}_\infty
=
\frac{1}{p}\norm{\mathsf A}_1
\leq
\frac{2n}{p}\norm{\mathsf A}_\infty
=
C.
\end{equation}

It remains to verify
Eq.~\eqref{eq:Kraus_quadratic_reconstruction}:
\begin{equation}
\sum_{u,v,w=1}^{d}
\bra{u,v}O\ket{u,w}
K_{u,v}^{\dagger}K_{u,w}
=
\frac{1}{p}\frac{p}{d}
\sum_{u,v,w=1}^{d}
\bra{v}\mathsf Q(\mathsf A)\ket{w}
\ketbra{v}{w}
=
\mathsf Q(\mathsf A).
\end{equation}
Thus, for every real antisymmetric $\mathsf A$ with
$\norm{\mathsf A}_\infty=1$, there exists an environment observable
$O$ satisfying the quadratic reconstruction condition with
$\norm{O}_\infty\leq C$. Therefore,
\begin{equation}
\mu_2(\cD_p)
\leq
C
=
\frac{2n}{p}.
\end{equation}
\end{proof}

\begin{proof}[Proof of Proposition~\ref{prop:mu_2_Gaussian}]
We use the same notations and Gaussian dilation as in the proof of
Lemma~\ref{lem:Gaussian_channel_covariance_matrix}.  Write the
orthogonal matrix associated with the Gaussian unitary $U_R$ as
\begin{equation}
R=
\begin{bmatrix}
\mathsf M & \mathsf L\\
\mathsf C & \mathsf D
\end{bmatrix},
\end{equation}
where the first $2m$ rows,
$\begin{bmatrix}\mathsf M&\mathsf L\end{bmatrix}$,
correspond to the output system $\cH_B$, while the remaining rows,
$\begin{bmatrix}\mathsf C&\mathsf D\end{bmatrix}$,
correspond to $E_{\mathrm{out}}$.
Since $R$ is orthogonal, its first $2n$ columns satisfy
\begin{equation}\label{eq:Gaussian_complement_MC_relation}
\mathsf M^{\mathsf T}\mathsf M
+
\mathsf C^{\mathsf T}\mathsf C
=
\bI_{2n}.
\end{equation}
The covariance
matrix of the complementary output on $E_{\mathrm{out}}$ is
\begin{equation}\label{eq:Gaussian_complement_covariance_action}
\Gamma\left(
\widehat{\cN}_{\mathrm G}(\rho)
\right)
=
\mathsf C\Gamma(\rho)\mathsf C^{\mathsf T}
+
\mathsf N_E,
\end{equation}
where
$\mathsf N_E
=
\mathsf D
\Gamma(\tau_{E_{\mathrm{in}}})
\mathsf D^{\mathsf T}$ using the notation in the proof of Lemma~\ref{lem:Gaussian_channel_covariance_matrix}.

Since $\norm{\mathsf M}_\infty<1$,
Eq.~\eqref{eq:Gaussian_complement_MC_relation} implies
$\mathsf C^{\mathsf T}\mathsf C
=
\bI_{2n}-\mathsf M^{\mathsf T}\mathsf M
>0$.
Therefore, $\mathsf C$ has full column rank and admits the left inverse
$\mathsf C^+
=
(\mathsf C^{\mathsf T}\mathsf C)^{-1}
\mathsf C^{\mathsf T}$.
For any real antisymmetric matrix $\mathsf A$ with
$\norm{\mathsf A}_\infty=1$, define
$\mathsf A_E
=
(\mathsf C^+)^{\mathsf T}
\mathsf A
\mathsf C^+$.
Then $\mathsf A_E^{\mathsf T}=-\mathsf A_E$ and
$\mathsf C^{\mathsf T}\mathsf A_E\mathsf C
=
(\mathsf C^+\mathsf C)^{\mathsf T}
\mathsf A
(\mathsf C^+\mathsf C)
=
\mathsf A$.
Define the Hermitian operator
$O=\mathsf{Q}(\mathsf{A}_E)\in\cL(\cH_E)$ as in Eq.~\eqref{eq:def_quadratic_majorana_observable} with Majorana operators in the $E_{\textup{out}}$.
Using Eq.~\eqref{eq:Gaussian_complement_covariance_action} and
$\Tr[\mathsf Q(\mathsf B)\rho]
=-\Tr[\mathsf B^{\mathsf T}\Gamma(\rho)]$ for any real
antisymmetric matrix $\mathsf B$, we have
\begin{equation}
\begin{aligned}
\Tr[
O\widehat{\cN}_{\mathrm G}(\rho)
]
&=
-\Tr[
\mathsf A_E^{\mathsf T}
\left(
\mathsf C\Gamma(\rho)\mathsf C^{\mathsf T}
+\mathsf N_E
\right)
]
\\
&=
-\Tr[
\left(
\mathsf C^{\mathsf T}
\mathsf A_E
\mathsf C
\right)^{\mathsf T}
\Gamma(\rho)
]
-
\Tr(
\mathsf A_E^{\mathsf T}\mathsf N_E
)
\\
&=
-\Tr[
\mathsf A^{\mathsf T}\Gamma(\rho)
]
+c
\\
&=
\Tr[
\mathsf Q(\mathsf A)\rho
]
+c,
\end{aligned}
\end{equation}
where
$c
=-\Tr(
\mathsf A_E^{\mathsf T}\mathsf N_E
)
\in\mathbb R$ is a constant.
Since this equality holds for every input state $\rho$, we obtain
\begin{equation}
\widehat{\cN}_{\mathrm G}^{\dagger}(O)
=
\mathsf Q(\mathsf A)+c\bI.
\end{equation}

It remains to bound $\norm{O}_\infty$. 
Using Lemma~\ref{lem:quadratic_Majorana_norm} and H\"older's inequality,
\begin{equation}
\begin{aligned}
\norm{O}_\infty
&=
\norm{
(\mathsf C^+)^{\mathsf T}
\mathsf A
\mathsf C^+
}_1
\\
&\leq
\norm{\mathsf C^+}_2^2
\norm{\mathsf A}_\infty
\\
&=
\Tr[
\mathsf C^+
(\mathsf C^+)^{\mathsf T}
]
\\
&=
\Tr[
(\mathsf C^{\mathsf T}\mathsf C)^{-1}
]
\\
&=
\Tr[
\left(
\bI_{2n}-\mathsf M^{\mathsf T}\mathsf M
\right)^{-1}
].
\end{aligned}
\end{equation}
Since the construction works for every real antisymmetric
$\mathsf A$ with $\norm{\mathsf A}_\infty=1$, the definition in Eq.~\eqref{eq:def_quadratic_reconstruction_constant} gives
\begin{equation}
\mu_2(\cN_{\mathrm G})
\leq
\Tr[
\left(
\bI_{2n}-\mathsf M^{\mathsf T}\mathsf M
\right)^{-1}
].
\end{equation}

Finally, every eigenvalue of
$\mathsf M^{\mathsf T}\mathsf M$ is upper bounded by
$\norm{\mathsf M}_\infty^2$. Hence
\begin{equation}
\left(
\bI_{2n}-\mathsf M^{\mathsf T}\mathsf M
\right)^{-1}
\leq
\frac{1}{1-\norm{\mathsf M}_\infty^2}\bI_{2n},
\end{equation}
which yields
\begin{equation}
\Tr[
\left(
\bI_{2n}-\mathsf M^{\mathsf T}\mathsf M
\right)^{-1}
]
\leq
\frac{2n}{1-\norm{\mathsf M}_\infty^2}.
\end{equation}
This proves Eq.~\eqref{eq:mu_2_Gaussian_noise_bound}.
\end{proof}

\subsection{Gaussian codewords with Majorana distance two}\label{app:proof_distance_two_Gaussian_code}
The $4$-Majorana tetron code constructed in Ref.~\cite{litinski2018QuantumComputingMajorana} is an example of an $n=2$ mode fermionic code with Majorana distance $2$, whose codewords are all pure Gaussian states. Here we generalize this construction to a family of $[\![n,1,2]\!]_\mathrm{F}$ 
codes for which every codeword
is Gaussian for any $n\geq2$.
We define the two-dimensional code space
\begin{equation}
\mathcal C_n
\coloneqq
\operatorname{span}
\left\{
\ket{0_L},
\ket{1_L}
\right\},
\end{equation}
where
$\ket{0_L}
\coloneqq
\ket{0^n}$,
$\ket{1_L}
\coloneqq
\ket{1,1,0^{n-2}}$.
Its orthogonal projector is
$P_n
=
\ket{0_L}\bra{0_L}
+
\ket{1_L}\bra{1_L}$.
Both basis vectors have even Hamming weight and hence even fermion
parity. Therefore,
$\Pi P_n
=
P_n\Pi
=
P_n$.
On the one hand, $P_n$ can detect every Majorana operator with weight one. Since $\forall j\in[2n]$,
\begin{equation}
P_n\gamma_jP_n=P_n\Pi\gamma_j\Pi P_n=-P_n\gamma_jP_n,
\end{equation}
so $P_n\gamma_jP_n=0$ and $d_{\mathrm F}(P_n)\geq 2$.
On the other hand, $P_n$ cannot detect the weight-two error
\begin{equation}
\widehat{\gamma}_{\{1,2\}}
=
-i\gamma_1\gamma_2
=
Z_1.
\end{equation}
It acts non-trivially within the code space,
\begin{equation}
Z_1\ket{0_L}
=
\ket{0_L},
\qquad
Z_1\ket{1_L}
=
-\ket{1_L},
\end{equation}
so that
\begin{equation}
P_nZ_1P_n
=
\ket{0_L}\bra{0_L}
-
\ket{1_L}\bra{1_L}
\notin
\mathbb C P_n,
\end{equation}
which implies $d_{\mathrm F}(P_n)\leq 2$.
So we conclude
$d_{\mathrm F}(P_n)=2$.
It remains to prove that every normalized vector in $\mathcal C_n$
is a pure fermionic Gaussian state. Up to a global phase, an arbitrary
normalized codeword has the form
\begin{equation}
\ket{\psi_{\theta,\phi}}
=
\left(
\cos\theta\ket{0,0}
+
e^{i\phi}\sin\theta\ket{1,1}
\right)_{1,2}
\otimes
\ket{0^{n-2}}_{3,\cdots,n},
\label{eq:distance_two_general_codeword}
\end{equation}
where
$0\leq\theta\leq\frac{\pi}{2}$ and
$0\leq\phi<2\pi$. We evaluate the covariance matrix of $\psi_{\theta,\phi}=\ketbra{\psi_{\theta,\phi}}{\psi_{\theta,\phi}}$. Notice that the covariance matrix of $\ketbra{0}{0}$ is 
\begin{equation}
J=\begin{bmatrix}
0&1\\
-1&0
\end{bmatrix}.
\end{equation}
The tensor product structure in Eq.~\eqref{eq:distance_two_general_codeword} gives
\begin{equation}\label{eq:Gamma_psi_theta_phi}
\Gamma(\psi_{\theta,\phi})
=
\Gamma\left(\ketbra{\psi_{\theta,\phi}^{(1,2)}}{\psi_{\theta,\phi}^{(1,2)}}\right)
\oplus
J^{\oplus(n-2)},
\end{equation}
where
\begin{equation}
\ketbra{\psi_{\theta,\phi}^{(1,2)}}{\psi_{\theta,\phi}^{(1,2)}}=\Tr_{3,\cdots,n}\left(\ketbra{\psi_{\theta,\phi}}{\psi_{\theta,\phi}}\right),
\end{equation}
and
\begin{equation}
\begin{aligned}
\Gamma\left(\ketbra{\psi_{\theta,\phi}^{(1,2)}}{\psi_{\theta,\phi}^{(1,2)}}\right)_{k,l}=\left(
\cos\theta\bra{0,0}
+
e^{-i\phi}\sin\theta\bra{1,1}
\right)(-i\gamma_k\gamma_l)\left(
\cos\theta\ket{0,0}
+
e^{i\phi}\sin\theta\ket{1,1}
\right),&\qquad&
1\leq k<l\leq4,\\
\Gamma\left(\ketbra{\psi_{\theta,\phi}^{(1,2)}}{\psi_{\theta,\phi}^{(1,2)}}\right)_{k,k}=0,&&
1\leq k\leq4.
\end{aligned}
\end{equation}
Notice that the Majorana operators and products under Jordan--Wigner transformation are
\begin{equation}
\gamma_1=X_1,
\qquad
\gamma_2=Y_1,
\qquad
\gamma_3=Z_1X_2,
\qquad
\gamma_4=Z_1Y_2,
\end{equation}
and
\begin{equation}
\begin{aligned}
-i\gamma_1\gamma_2&=Z_1,
&\qquad
-i\gamma_2\gamma_3&=X_1X_2,
\\
-i\gamma_1\gamma_3&=-Y_1X_2,
&-i\gamma_2\gamma_4&=X_1Y_2
,
\\
-i\gamma_1\gamma_4&=-Y_1Y_2,
&
-i\gamma_3\gamma_4&=Z_2.
\label{eq:first_two_mode_Majorana_bilinears}
\end{aligned}
\end{equation}
A direct evaluation gives
\begin{equation}
\Gamma\left(\ketbra{\psi_{\theta,\phi}^{(1,2)}}{\psi_{\theta,\phi}^{(1,2)}}\right)=
\begin{bmatrix}
0 & \cos(2\theta) & -\sin(2\theta)\sin\phi & \sin(2\theta)\cos\phi\\
-\cos(2\theta) & 0 & \sin(2\theta)\cos\phi & \sin(2\theta)\sin\phi\\
\sin(2\theta)\sin\phi & -\sin(2\theta)\cos\phi & 0 & \cos(2\theta)\\
-\sin(2\theta)\cos\phi & -\sin(2\theta)\sin\phi & -\cos(2\theta) & 0
\end{bmatrix},
\end{equation}
and
\begin{equation}
\Gamma\left(\ketbra{\psi_{\theta,\phi}^{(1,2)}}{\psi_{\theta,\phi}^{(1,2)}}\right)^2=-\bI_4.
\end{equation}
Since $J^2=-\bI_2$, from Eq.~\eqref{eq:Gamma_psi_theta_phi}, we have
\begin{equation}
\Gamma(\psi_{\theta,\phi})^2=(-\bI_4)\oplus(-\bI_{2n-4})=-\bI_{2n},
\end{equation}
which implies $\psi_{\theta,\phi}$ is a pure Gaussian state~\cite{bittel2025OptimalTraceDistanceBoundsa}.

\subsection{Proof of Proposition~\ref{prop:Gaussian_encoding_measurement_informal}}\label{app:proof_Gaussian_encoding_measurement_informal}

In this appendix, we prove Proposition~\ref{prop:Gaussian_encoding_measurement_informal}. We first present the detailed settings and several useful lemmas, leaving their proofs to the end of this appendix.
In what follows,
we consider an $[\![n,k,\dF]\!]_\mathrm{F}$ fermionic code with code projector $P$. Denote the logical Hilbert space, i.e., the Fock space of $k$ fermionic modes, as $\cH_k$, and the physical Hilbert space, i.e., the Fock space of $n$ fermionic modes, as $\cH_n$. An encoding isometry of $P$ is a linear map
$V:\cH_k\rightarrow \cH_n,$ such that $V^\dagger V=\bI_{\cH_k}$ and $VV^\dagger=P$. The code space $\cC=\operatorname{im}(P)=V\cH_k\subseteq\cH_n$. The exact universal encoding channel is defined as follows:
\begin{definition}[Exact universal encoding map]\label{def:exact_universal_encoding_map}
A completely positive trace-preserving map $\cE$ is called an exact and universal deterministic encoding channel associated with the encoding isometry $V$ if
\begin{equation}
\cE(O)=VOV^\dagger,\qquad\text{ for every }O\in\mathcal{L}(\cH_k),
\end{equation}
where $\mathcal{L}(\cH_k)$ is the operator space of $\cH_k$. Furthermore, if the encoder is probabilistic, we identify a set of accepted outcome labels $\mathcal{X}_{\mathrm{acc}}$, which is a subset of all outcome labels $\mathcal{X}$. Consider a family of completely positive trace-non-increasing maps $\{\cE_x\}_{x\in\mathcal{X}}$ from $\cL(\cH_k)$ to $\cL(\cH_n)$ such that $\sum_{x\in\mathcal{X}}\cE_x$ is trace-preserving, then the accepted fine-grained branches $\{\cE_x\}_{x\in\mathcal{X}_{\mathrm{acc}}}$ are called exact universal probabilistic encoders if there exists a state-independent probability $p>0$, such that
\begin{equation}\label{eq:def_exact_universal_encoding}
\sum_{x\in\mathcal{X}_{\mathrm{acc}}}\cE_x(O)=pVOV^\dagger,\qquad\text{ for every }O\in\mathcal{L}(\cH_k).
\end{equation}
\end{definition}
From the above definition, universal means that the same channel $\cE$ must encode every logical state without knowledge of the input. It therefore excludes state-by-state preparation protocols whose control parameters depend on the target logical state. We then specify the definition of an exact non-destructive measurement of a code space projector.
\begin{definition}[Exact non-destructive measurement of a code space projector]\label{def:exact_non-destructive_measurement}
We identify a set of accepted outcome labels $\mathcal{X}_{\mathrm{acc}}$, which is a subset of all outcome labels $\mathcal{X}$. Consider a family of completely positive trace-non-increasing maps $\{\cM_x\}_{x\in\mathcal{X}}$ from $\cL(\cH_n)$ to $\cL(\cH_n)$ such that $\sum_{x\in\mathcal{X}}\cM_x$ is trace-preserving, then the accepted fine-grained branches $\{\cM_x\}_{x\in\mathcal{X}_{\mathrm{acc}}}$ are called an exact non-destructive measurement (on the accepted code space) of a code space projector $P$ if for every 
$O\in P\cL(\cH_n) P$,
\begin{equation}\label{eq:def_exact_non-destructive_measurement}
\sum_{x\in\mathcal{X}_{\mathrm{acc}}}\cM_x^\textnormal{\textdagger}(\bI_{\cH_n})=P\text{ and }\sum_{x\in\mathcal{X}_{\mathrm{acc}}}\cM_x(O)=O.
\end{equation}
\end{definition}
The non-destructive requirement is stronger than merely producing a
correct classical membership flag. A fine-grained measurement may
distinguish states inside and outside $\operatorname{im}(P)$ while at the same
time resolving additional observables within the code space. Such a
measurement can destroy off-diagonal logical coherences and therefore
need not satisfy the second equation of
Eq.~\eqref{eq:def_exact_non-destructive_measurement}. It constitutes
destructive code space detection, not a non-destructive measurement of
the code space projector.

In the following lemma, we show that Gaussianity of the maximally mixed code state forces a non-trivial fermionic code to have raw Majorana distance one.
\begin{lemma}[Gaussian normalized code projector implies distance one]\label{lem:Gaussian_projector_distance_one}
Let $P$ be the projector onto a fermionic quantum code
$\mathcal{C}=\operatorname{im}(P)$ of $n$ modes with $\operatorname{rank}(P)>1$. If $P$ is a positive Gaussian operator, equivalently, the maximally mixed state on $\mathcal{C}$, i.e., $P/\operatorname{rank}(P)$, is Gaussian, then $d_{\mathrm{F}}(P)=1$.
\end{lemma}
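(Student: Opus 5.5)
Write $\sigma=P/\operatorname{rank}(P)$ and use the normal form of Gaussian states in Eq.~\eqref{eq:def_Gaussian_state}: $\sigma=U_R\bigotimes_{j=1}^n\frac{\bI+\nu_jZ}{2}U_R^\dagger$ with $R\in\mathrm O(2n)$ and $\nu_j\in[-1,1]$. Because $\sigma$ is proportional to a projector, all of its nonzero eigenvalues are equal. The eigenvalues of the product form are $\prod_j\frac{1+s_j\nu_j}{2}$ with $s_j=\pm1$. If some $\nu_j\in(-1,1)\setminus\{0\}$, then flipping $s_j$ changes a nonzero eigenvalue into a different nonzero eigenvalue, which is impossible. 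Hence each $\nu_j\in\{-1,0,+1\}$. Let $m$ be the number of indices with $\nu_j=0$. Then $\operatorname{rank}(P)=2^m$, and $\operatorname{rank}(P)>1$ forces $m\geq1$.

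In the rotated frame, let $\widetilde\gamma_a=U_R\gamma_aU_R^\dagger$ be the rotated Majorana operators; each is a real linear combination of the $\gamma_k$ with coefficients given by the corresponding column of $R$. The code projector becomes $P=U_R\bigl(\bigotimes_{j:\nu_j=0}\bI_j\otimes\bigotimes_{j:\nu_j=\pm1}\ketbra{s}{s}_j\bigr)U_R^\dagger$. Pick a mode $j_0$ with $\nu_{j_0}=0$. Its rotated Majoranas $\widetilde\gamma_{2j_0-1}$ and $\widetilde\gamma_{2j_0}$ act only on the free mode $j_0$, up to Jordan--Wigner parity strings on the other modes. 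On the frozen modes these strings are just signs $\pm1$, since those modes sit in number eigenstates, while on the free modes they act as $Z$ operators. It follows that $\widetilde\gamma_{2j_0-1}$ commutes with $P$ up to sign, maps $\operatorname{im}(P)$ into itself, and acts there as a nontrivial operator ($X_{j_0}$ times a $Z$-string times a sign). Therefore $P\widetilde\gamma_{2j_0-1}P=\widetilde\gamma_{2j_0-1}P\notin\mathbb CP$. The cleanest way to make this rigorous is to conjugate everything back by $U_R$ and check directly that $U_R^\dagger P U_R$ commutes with $\gamma_{2j_0-1}$ up to sign, and that $\gamma_{2j_0-1}$ restricted to its image is not a scalar, for instance because it anticommutes with $Z_{j_0}$, which also preserves the image.

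Now expand $\widetilde\gamma_{2j_0-1}=\sum_k c_k\gamma_k$ with real coefficients $c_k$. By linearity, $P\widetilde\gamma_{2j_0-1}P=\sum_kc_kP\gamma_kP$. If every $P\gamma_kP$ were in $\mathbb CP$, the sum would also lie in $\mathbb CP$, contradicting the previous paragraph. So some weight-one $\gamma_k$ satisfies $P\gamma_kP\notin\mathbb CP$, which gives $d_{\mathrm F}(P)\le 1$. Since the minimum in the definition of $d_{\mathrm F}$ is over nonempty $S$, we get $d_{\mathrm F}(P)\ge1$, and hence $d_{\mathrm F}(P)=1$.

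The main obstacle is bookkeeping rather than ideas: one must verify that $\widetilde\gamma_{2j_0-1}$ really preserves $\operatorname{im}(P)$ and acts non-scalarly there despite the parity strings. If $\det R=-1$, there is also the issue that $U_R$ is parity-flipping; this changes nothing, because conjugation still maps Majorana operators to real linear combinations of Majorana operators. An alternative route that avoids the eigenvalue argument uses Wick's theorem. The covariance matrix $\Gamma$ of $\sigma$ satisfies $\Gamma^2\neq-\bI$ because $\sigma$ is mixed, so $\Gamma$ has a kernel vector $v$ when all $\nu_j\in\{0,\pm1\}$. Then $\gamma_v=\sum_k v_k\gamma_k$ anticommutes with the parent quadratic form of the frozen part and maps the code into itself. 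The eigenvalue argument above is more direct, and I would use it.
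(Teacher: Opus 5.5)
Your proposal is correct and is essentially the paper's proof: the same eigenvalue argument forces $\nu_j\in\{-1,0,+1\}$ and $\operatorname{rank}(P)=2^m$ with $m\geq1$. The same rotated Majorana $U_R\gamma_{2j_0-1}U_R^\dagger=\sum_k R_{k,2j_0-1}\gamma_k$ on a free mode is then shown to act non-scalarly on the code, and linearity gives a weight-one $\gamma_k$ with $P\gamma_kP\notin\mathbb{C}P$. The differences are cosmetic: the paper relabels modes so that the Jordan--Wigner string of the chosen free mode passes only through frozen modes, and exhibits an explicit swap of two code vectors. You instead allow free modes in the string, which is harmless since those $Z$'s commute with the projector, and certify non-scalarity via anticommutation with $Z_{j_0}$.

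One small slip in your unused Wick-theorem alternative: a Majorana $\gamma_v$ with $v$ in the kernel of the covariance matrix commutes, rather than anticommutes, with the quadratic form built from the frozen modes. That commutation is what makes it preserve the code space.
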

We also show that a completely positive map with a rank-one Choi operator admits no non-trivial completely positive refinement, so that every fine-grained branch preserves exactly the same logical transformation and can differ only in its state-independent occurrence probability, which applies to the exact universal probabilistic encoders defined above.
\begin{lemma}[Rank-one Choi refinement]\label{lem:rank-one_Choi_refinement}
Let $\sum_x\Lambda_x$ be a countable sum of completely positive maps on a finite-dimensional input space. Suppose for any input operator $O$,
\begin{equation}
\sum_x\Lambda_x(O)=pVOV^\textnormal\textdagger
\end{equation}
for some $p>0$ and isometry $V$, then there exist coefficients $p_x\geq0$ with $\sum_xp_x=p$, such that
\begin{equation}
\Lambda_x(O)=p_xVOV^\textnormal\textdagger
\end{equation}
for every $x$ and every $O$.
\end{lemma}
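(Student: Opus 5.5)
The proof goes through Choi operators. Extend by linearity and let $J_x\coloneqq J(\Lambda_x)\geq0$ be the (ordinary, or equivalently fermionic, since the correspondence is linear and one-to-one) Choi operator of each completely positive map $\Lambda_x$, computed with respect to the maximally entangled vector $\ket{\Omega}$ on input and reference. By linearity of the Choi correspondence, the hypothesis $\sum_x\Lambda_x(O)=pVOV^\dagger$ for all $O$ is equivalent to
\begin{equation}
\sum_x J_x = p\,(V\otimes\bI)\ketbra{\Omega}{\Omega}(V^\dagger\otimes\bI)\eqqcolon p\,\ketbra{v}{v},
\end{equation}
where $\ket{v}=(V\otimes\bI)\ket{\Omega}$ is a unit vector because $V$ is an isometry. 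The right-hand side is therefore a rank-one positive operator.

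Next, each summand is positive and dominated by the sum: $0\leq J_x\leq\sum_y J_y=p\ketbra{v}{v}$. For a countable sum I will note that convergence in finite dimension makes the partial sums increase monotonically, so the operator inequality still holds. A positive operator dominated by a rank-one projector multiple has support contained in $\operatorname{span}\{\ket{v}\}$. To see this, take any $\ket{w}\perp\ket{v}$; then $\bra{w}J_x\ket{w}\leq0$, which forces $J_x\ket{w}=0$ because $J_x\geq0$. Consequently $J_x=p_x\ketbra{v}{v}$ with $p_x=\bra{v}J_x\ket{v}\geq0$.

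Taking the expectation in $\ket{v}$ of the summed identity gives $\sum_x p_x=p$. Inverting the Choi correspondence branch by branch, which is linear and injective, then yields $\Lambda_x(O)=p_xVOV^\dagger$ for every $O$.

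The only delicate point is the countable sum. Here I need convergence of $\sum_x J_x$ in operator norm, so that domination by the full sum is legitimate. This is automatic in finite dimension, since the sum converges (the hypothesis gives a finite limit) and all terms are positive. Because the argument holds for any faithful linear Choi-type isomorphism, the fermionic sign conventions play no role.
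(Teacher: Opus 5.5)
Your proof is correct and follows the paper's argument: pass to ordinary Choi operators, note that $\sum_x J(\Lambda_x)$ equals the rank-one operator $p\,(V\otimes\bI)\ketbra{\Omega}{\Omega}(V^\dagger\otimes\bI)$, use positivity and domination to conclude that each $J(\Lambda_x)$ is a nonnegative multiple of it with $\sum_x p_x=p$, and invert the Choi correspondence. You also spell out the support-containment step and the convergence of the countable sum, which the paper leaves implicit.
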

Another useful lemma states that a Gaussian measurement branch can only have a positive Gaussian operator as its associated effect, i.e., the positive operator that determines the probability of obtaining a particular measurement outcome, which constrains the support geometry of any exactly accepted code space.
\begin{lemma}[Gaussian branch effects]\label{lem:Gaussian_branch_effects}
Let $\Phi:\cL(\cH_1)\rightarrow\cL(\cH_2)$ be a fermionic Gaussian completely positive and trace-non-increasing map, where $\cH_1$ is the Fock space of $n_1$ input fermionic modes and
$\cH_2$ is the Fock space of $n_2$ output fermionic modes. The Hilbert-Schmidt adjoint $\Phi^\textnormal\textdagger:\cL(\cH_2)\rightarrow\cL(\cH_1)$ is defined by
\begin{equation}
\Tr(A^\textnormal\textdagger\Phi(B))=\Tr(\Phi^\textnormal\textdagger(A)^\textnormal\textdagger B),
\end{equation}
for any $A\in\cL(\cH_2)$ and $B\in\cL(\cH_1)$. Then the effect $E_\Phi=\Phi^\textnormal\textdagger(\bI_{\cH_2})$ of $\Phi$ is a positive fermionic Gaussian operator.
\end{lemma}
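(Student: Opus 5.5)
We need to prove Lemma~\ref{lem:Gaussian_branch_effects}: the effect $E_\Phi=\Phi^\dagger(\bI_{\cH_2})$ of a Gaussian CP trace-non-increasing map is a positive Gaussian operator. The plan is to express the effect as a partial trace of the fermionic Choi operator. Then Gaussianity follows because discarding modes preserves Gaussianity of positive operators, combined with a transpose/parity bookkeeping step.

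\textbf{Step 1 (Choi reduction).} Take the inverse Choi formula, Eq.~\eqref{eq:inverse_fermionic_Choi_map}. For even $O$, it gives $\Tr[\Phi(O)]=2^{n_1}\Tr[(\bI_{\cH_2}\otimes O^{\mathsf T_{1'}})J_{\mathrm F}(\Phi)]$. For odd $O$, both sides vanish, since $\Phi$ is parity preserving and maps odd operators to traceless operators. Hence, on the even sector,
\begin{equation}
E_\Phi=2^{n_1}\left(\Tr_{2}J_{\mathrm F}(\Phi)\right)^{\mathsf T},
\end{equation}
with the transpose taken in the reversed occupation basis after identifying $\cH_{1'}\cong\cH_1$. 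Both $E_\Phi$ and $\Tr_2 J_{\mathrm F}(\Phi)$ are even operators, because $\Phi^\dagger$ preserves parity and $J_{\mathrm F}(\Phi)$ is even. So the identity holds as operators and not only on the even sector.

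\textbf{Step 2 (partial trace is Gaussian).} $J_{\mathrm F}(\Phi)$ is a positive Gaussian operator on $\cH_2\otimes\cH_{1'}$ by Eq.~\eqref{eq:Gaussian_map_Gaussian_Choi_equivalence}. In the ordering used, the modes of $\cH_2$ precede those of $\cH_{1'}$. The representation $\cJ_{2,1'}$ therefore places the parity string on the $\cH_{1'}$ side only for odd operators. For even Gaussian states, the ordinary partial trace $\Tr_2$ coincides with the fermionic partial trace of the modes of $\cH_2$. Discarding modes is a Gaussian channel, so $\Tr_2 J_{\mathrm F}(\Phi)$ is a positive Gaussian operator on $\cH_{1'}$. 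Its covariance matrix is the principal $2n_1\times 2n_1$ block. If $\Phi=0$, or if the trace vanishes, then $E_\Phi=0$, which is Gaussian by convention.

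\textbf{Step 3 (transpose and reversed ordering).} I expect this to be the main obstacle. We must show that the matrix transpose in the reversed occupation basis, transported back to $\cH_1$ via $\ket{\vec\mu}_{1'}\mapsto\ket{\vec\mu}_1$, maps Gaussian states to Gaussian states.

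First, the identification $\ket{\vec\mu}_{1'}\leftrightarrow\ket{\vec\mu}_1$ differs from the natural mode ordering only by the reordering $a_n'<\dots<a_1'$. The reordering sign $(-1)^{N(N-1)/2}$ is diagonal in the occupation basis. It is implemented by a reflection of mode labels, which is a Gaussian unitary in $\mathrm M_{n_1}$, possibly of determinant $-1$. Conjugation by it preserves Gaussianity.

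Second, for the transpose in the occupation basis I will use Jordan--Wigner: $X^{\mathsf T}=X$, $Y^{\mathsf T}=-Y$, $Z^{\mathsf T}=Z$. Hence $\gamma_{2j-1}^{\mathsf T}=\gamma_{2j-1}$ and $\gamma_{2j}^{\mathsf T}=-\gamma_{2j}$. Since transposition reverses products, for even $\rho$ we get
\begin{equation}
(\gamma_S)^{\mathsf T}=(-1)^{\abs{S}(\abs{S}-1)/2}\,(-1)^{\abs{S\cap 2\mathbb N}}\gamma_S .
\end{equation}
This equals the action of the reflection $R=\operatorname{diag}(1,-1,1,-1,\dots)$ composed with the order reversal. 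On quadratic monomials the reversal contributes $-1$ and the reflection contributes its own sign. The net effect on the covariance matrix is therefore $\Gamma\mapsto -R\Gamma R^{\mathsf T}$, whose square is still $-\bI$ in the pure case. More generally, on the normal form $\sigma=U_R\bigotimes_j(\bI+\nu_jZ)/2\,U_R^\dagger$ we have $\sigma^{\mathsf T}=\overline{U_R}\bigotimes_j(\bI+\nu_jZ)/2\,U_R^{\mathsf T}$. Here $\overline{U_R}$ is again a Gaussian unitary, namely $U_{R'}$ with $R'$ obtained by conjugating $R$ with the reflection, because complex conjugation maps quadratic Majorana generators to quadratic ones. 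So the transpose preserves Gaussianity.

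Combining Steps 1–3, $E_\Phi$ is $2^{n_1}$ times a Gaussian state, up to a Gaussian unitary conjugation. Hence $E_\Phi$ is a positive Gaussian operator.
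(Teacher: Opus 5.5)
Your overall route is the paper's. You identify $\Tr_2 J_{\mathrm F}(\Phi)$ with $2^{-n_1}E_\Phi^{\mathsf T}$ on the reference copy; you get this from the inverse Choi formula, while the paper expands $J_{\mathrm F}(\Phi)$ in the occupation basis and drops the odd-$\varepsilon$ terms. You then use that discarding modes keeps a positive operator Gaussian, and that $\gamma_{2j-1}^{\mathsf T}=\gamma_{2j-1}$, $\gamma_{2j}^{\mathsf T}=-\gamma_{2j}$ for the transpose. Your normal-form argument $\sigma^{\mathsf T}=\overline{U_R}\,D\,\overline{U_R}^{\dagger}$ settles that last point cleanly. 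The step that is wrong as written is your treatment of the reversed reference ordering, which the paper itself passes over by saying both copies carry ``the same CAR algebra''. The sign $(-1)^{N(N-1)/2}$ is not a Gaussian unitary. It is also not implemented by a reflection of mode labels, since a mode relabeling permutes occupation basis vectors rather than acting diagonally. Already for $n_1=2$ the sign is $\mathrm{diag}(1,1,1,-1)=\mathrm{CZ}$ in the Jordan--Wigner basis. Then $\mathrm{CZ}\,\gamma_1\,\mathrm{CZ}=X_1Z_2=-i\gamma_1\gamma_3\gamma_4$ is cubic, so the sign does not act linearly on Majorana operators.

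The needed conclusion is still true, but for a parity reason. On the even sector $(-1)^{N(N-1)/2}=i^N$, and on the odd sector it equals $-i\,i^N$. A relative phase between the two parity sectors drops out when you conjugate a parity-block-diagonal operator. Since $\Tr_2 J_{\mathrm F}(\Phi)$ is even, conjugation by the sign acts on it exactly like conjugation by the Gaussian unitary $i^N=\prod_j e^{i\pi n_j/2}$.

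Equivalently, under $\ket{\vec\mu}_{1'}\leftrightarrow\ket{\vec\mu}_1$ the induced Majoranas become reversed-string operators. Concretely, $(\prod_{k>j}Z_k)X_j=i\Pi\gamma_{2j}$ and $(\prod_{k>j}Z_k)Y_j=-i\Pi\gamma_{2j-1}$. In every even monomial the factors of $\Pi$ cancel, leaving the per-mode rotation $\gamma_{2j-1}\mapsto\gamma_{2j}$, $\gamma_{2j}\mapsto-\gamma_{2j-1}$. With this replacement your proof goes through.

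A smaller slip: in $\cJ_{2,1'}$ the parity string sits on the $\cH_2$ factor, not on $\cH_{1'}$. The ordinary and fermionic partial traces over $\cH_2$ nevertheless agree on even operators, so your Step~2 is unaffected.
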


We now present the proof of Proposition~\ref{prop:Gaussian_encoding_measurement_informal}.
\begin{proof}[Proof of Proposition~\ref{prop:Gaussian_encoding_measurement_informal}]
For any non-trivial $[\![n,k,\dF]\!]_\mathrm{F}$ fermionic code with projector $P$, $k
\geq 1$, and $\dF\geq2$, denote the logical space as $\cH_k$ and the physical space as $\cH_n$. We prove the statements for encoding and non-destructive measurement separately.
\begin{itemize}
\item \textbf{Encoding:} Assume there exists an exact universal encoding map in Definition~\ref{def:exact_universal_encoding_map} consisting of Gaussian branches $\left\{\cE_x\right\}_{x\in\mathcal{X}_{\mathrm{acc}}}$ and $p>0$, such that
\begin{equation}
\sum_{x\in\mathcal{X}_{\mathrm{acc}}}\cE_x(O)=pVOV^\dagger,\qquad\text{ for every }O\in\mathcal{L}(\cH_k).
\end{equation}
Then by Lemma~\ref{lem:rank-one_Choi_refinement}, there exist $p_x\geq0$, such that $\sum_{x\in\mathcal{X}_{\mathrm{acc}}}p_x=p$ and
\begin{equation}
\cE_x(O)=p_xVOV^\dagger,\qquad\text{ for every }O\in\mathcal{L}(\cH_k).
\end{equation}
Select index $x'\in\mathcal{X}_{\mathrm{acc}}$ such that $p_{x'}>0$ and consider the action of $\cE_{x'}$ on the maximally mixed state on $\cH_k$, which gives
\begin{equation}
\cE_{x'}\left(\frac{\bI_{\cH_k}}{2^k}\right)=\frac{p_{x'}}{2^k}V\bI_{\cH_k}V^\dagger=p_{x'}\frac{P}{2^k}.
\end{equation}
Since $\cE_{x'}$ is Gaussian, the maximally mixed state on the code space $P/2^k$ is a Gaussian state, and by Lemma~\ref{lem:Gaussian_projector_distance_one} we obtain $\dF=1$, contradicting
$\dF\geq2$.
\item \textbf{Non-destructive measurement:} Assume there exists an exact non-destructive measurement of $P$ in Definition~\ref{def:exact_non-destructive_measurement} consisting of Gaussian completely positive and trace-non-increasing maps $\left\{\cM_x\right\}_{x\in\mathcal{X}_{\mathrm{acc}}}$, such that for every $O\in P\cL(\cH_n)P$,
\begin{equation}
\sum_{x\in\mathcal{X}_{\mathrm{acc}}}\cM_x^\textnormal{\textdagger}(\bI_{\cH_n})=P\text{ and }\sum_{x\in\mathcal{X}_{\mathrm{acc}}}\cM_x(O)=O.
\end{equation}
The second equation implies the summation is the identity channel restricted to $\cC=\operatorname{im}(P)$, i.e., 
\begin{equation}
\left.\left(\sum_{x\in\mathcal{X}_{\mathrm{acc}}}\cM_x\right)\right|_{\cL(\cC)}=\bI_{\cL(\cC)}.
\end{equation}
Then by Lemma~\ref{lem:rank-one_Choi_refinement}, there exist $\{p_x\}\geq0$, such that $\sum_{x\in\mathcal{X}_{\mathrm{acc}}}p_x=1$ and
\begin{equation}
\cM_x(O)=p_xO,\qquad\text{ for every }O\in\cL(\cC).
\end{equation}
For each $x\in\mathcal{X}_{\mathrm{acc}}$, let $E_x=\cM_x^\dagger(\bI_{\cH_n})$,
then it is a positive Gaussian operator by Lemma~\ref{lem:Gaussian_branch_effects}. Also, for every $O\in P\cL(\cH_n)P$, we have $\Tr(E_xO)=\Tr(\cM_x(O))=p_x\Tr(O)$. Consider $O=PE_xP-p_xP$, then $PO^\dagger P=O^\dagger$, and
\begin{equation}
\Tr(OO^\dagger)=\Tr(PE_xPO^\dagger-p_xPO^\dagger)=\Tr(E_xO^\dagger)-p_x\Tr(O^\dagger)=0,
\end{equation}
which implies $PE_xP=p_xP$. Notice that $\sum_{x\in\mathcal{X}_{\mathrm{acc}}}E_x=\sum_{x\in\mathcal{X}_{\mathrm{acc}}}\cM_x^\textnormal{\textdagger}(\bI_{\cH_n})=P$, so for each $x$, $0\leq E_x\leq P$. Then for any $\ket{\psi}\in\ker(P)$, $0\leq\bra{\psi}E_x\ket{\psi}\leq\bra{\psi}P\ket{\psi}=0$, so the support of $E_x$ is contained in $\cC$ and thus $E_x=PE_xP=p_xP$. Finally, at least one $p_x>0$ implies $P$ is a positive Gaussian operator and thus $\dF=1$ by Lemma~\ref{lem:Gaussian_projector_distance_one}, contradicting $\dF\geq2$.
\end{itemize}
\end{proof}

The following are the proofs of Lemmas~\ref{lem:Gaussian_projector_distance_one},~\ref{lem:rank-one_Choi_refinement}, and~\ref{lem:Gaussian_branch_effects}.

\begin{proof}[Proof of Lemma~\ref{lem:Gaussian_projector_distance_one}]
For a code projector $P$ of $n$ fermionic modes with $K=\operatorname{rank}(P)>1$, if the maximally mixed state on $\operatorname{im}(P)$ is Gaussian, we write it into the form in Eq.~\eqref{eq:def_Gaussian_state} as
\begin{equation}\label{eq:projector_Gaussian_form}
\frac{P}{K}
=
U_R\bigotimes_{j=1}^n\left(\frac{\bI+\nu_jZ_j}2\right)U_R^\dagger,
\end{equation}
for some Gaussian unitary $U_R$ with $R\in\operatorname{O}(2n)$ and each $\nu_j\in[-1,1]$.
On one hand, $P$ is an orthogonal projector, so $P/K$ has eigenvalues $1/K$ with multiplicity $K$ and $0$ of multiplicity $2^n-K$. On the other hand, Eq.~\eqref{eq:projector_Gaussian_form} implies the eigenvalues are
\begin{equation}
p_{\vec{s}}=\prod_{j=1}^n\frac{1+s_j\nu_j}2,\qquad s_j\in\{\pm1\},\qquad \vec{s}=(s_1,\cdots,s_n).
\end{equation}
Suppose that $0<\abs{\nu_j}<1$ for some $j$. Fix all $s_k$ with $k\neq j$ such that $s_k\nu_k\neq-1$, the two possible occupations of mode $j$
then give two nonzero eigenvalues proportional to
$(1+\nu_j)/2$ and $(1-\nu_j)/2$, which are unequal. This
contradicts the fact that all nonzero eigenvalues of $P/K$ should be $1/K$.
Therefore
\begin{equation}
\nu_j\in\{-1,0,+1\}
\qquad
\forall j\in[n].
\end{equation}
Let $r$ be the number of indices for which $\nu_j=0$. A factor
with $\abs{\nu_j}=1$ has rank one, whereas a factor with $\nu_j=0$
equals $\bI/2$ and has rank two. Hence 
$K=2^r$ with $r\geq1$ since $K>1$.
After relabeling the canonical
modes, we may assume that
\begin{equation}
\nu_j
\begin{cases}
\in\{\pm1\},\qquad &j=1,\cdots,n-r,\\
=0,&j=n-r+1,\cdots,n.
\end{cases}
\end{equation}
In this way, $P/K$ and $P$ take the form
\begin{equation}
\frac PK=U_R
\left(
\ketbra{t}{t}
\otimes \frac{\bI_{2^r}}{2^r}
\right)
U_R^{\dagger},\qquad
P=U_RP_0U_R^\dagger,
\qquad
P_0=
\ketbra{t}{t}
\otimes \bI_{2^r},
\end{equation}
where $\ket{t}$ fixes the occupations of the
$n-r$ nonzero canonical modes.
Choose the first unconstrained canonical mode
$\ell=n-r+1$ and the Majorana operator
$\gamma_{2\ell-1}
=
\left(\prod_{j<\ell}Z_j\right)X_\ell$.
Since all modes preceding $\ell$ are fixed by $\ket{t}$, there is a
sign $\sigma_t\in\{\pm1\}$ such that
$\left(\prod_{j<\ell}Z_j\right)\ket{t}
=
\sigma_t\ket{t}$.
Hence, for any normalized state $\ket{\chi}$ on the remaining
$r-1$ unconstrained modes,
\begin{equation}
\begin{aligned}
\gamma_{2\ell-1}
\left(
\ket{t}\otimes\ket{0_\ell}\otimes\ket{\chi}
\right)
&=
\sigma_t
\ket{t}\otimes\ket{1_\ell}\otimes\ket{\chi},
\\
\gamma_{2\ell-1}
\left(
\ket{t}\otimes\ket{1_\ell}\otimes\ket{\chi}
\right)
&=
\sigma_t
\ket{t}\otimes\ket{0_\ell}\otimes\ket{\chi}.
\end{aligned}
\end{equation}
Consequently,
\begin{equation}
P_0\gamma_{2\ell-1}P_0
\notin
\mathbb C P_0,
\end{equation}
since it exchanges the above two orthogonal vectors in
$\operatorname{im}(P_0)$ and therefore cannot act as a scalar on that
subspace.
From Eq.~\eqref{eq:Gaussian_unitary_action}, define 
\begin{equation}
\widetilde{\gamma}_{2\ell-1}=U_R\gamma_{2\ell-1}U_R^\dagger=\sum_{k=1}^{2n}R_{k,2\ell-1}\gamma_k,
\end{equation}
we have
\begin{equation}
P\widetilde{\gamma}_{2\ell-1}P=\left(U_RP_0U_R^\dagger\right)\left(U_R\gamma_{2\ell-1}U_R^\dagger\right)\left(U_RP_0U_R^\dagger\right)=U_RP_0\gamma_{2\ell-1}P_0U_R^\dagger\qquad\notin \qquad U_R(\mathbb{C}P_0)U_R^\dagger=\mathbb{C}P,
\end{equation}
which implies there exists $k\in[2n]$, such that $P\gamma_kP\notin\mathbb{C}P$ and thus $d_{\mathrm{F}}(P)=1$.
\end{proof}

\begin{proof}[Proof of Lemma~\ref{lem:rank-one_Choi_refinement}]
Notice that the ordinary, unnormalized Choi operator of $\Lambda=\sum_x\Lambda_x$ is a positive rank-one operator $J(\Lambda)=p\dketbra{V}{V}$, which equals $\sum_xJ(\Lambda_x)$. Here $\dket{V}=\sum_aV\ket{a}\otimes\ket{a}$. Since each $J(\Lambda_x)$ is positive, and its support is contained in the one-dimensional support of $J(\Lambda)$, there exists $p_x\geq0$ such that 
$J(\Lambda_x)=p_x\dketbra{V}{V}$ with $\sum_xp_x=p$. Then by Choi duality, we have for each $x$ and any input $O$, $\Lambda_x(O)=p_xVOV^\dagger$.
\end{proof}

\begin{proof}[Proof of Lemma~\ref{lem:Gaussian_branch_effects}]
We first show $E_\Phi$ is a positive operator, for any $\ket{\psi_1}\in\cH_1$, since $\Phi$ is completely positive, we have
\begin{equation}
\Tr(E_\Phi\ketbra{\psi_1}{\psi_1})=\Tr(\Phi^\textnormal\textdagger(\bI_{\cH_2})\ketbra{\psi_1}{\psi_1})=\Tr(\Phi(\ketbra{\psi_1}{\psi_1}))\geq0.
\end{equation}
Noticing that $\Phi$ is parity-preserving, so for any $O\in\cL(\cH_1)$, 
\begin{equation}
\begin{aligned}
\Tr(\Pi_1E_\Phi\Pi_1O)&=\Tr(E_\Phi\Pi_1O\Pi_1)
\\&=\Tr(\Phi(\Pi_1O\Pi_1))
\\&=\Tr(\Pi_2\Phi(O)\Pi_2)
\\&=\Tr(\Phi(O))
\\&=\Tr(E_\Phi O),
\end{aligned}
\end{equation}
which implies $\Pi_1E_\Phi\Pi_1=E_\Phi$ and $E_\Phi$ is an even operator.
Consider the fermionic Choi operator of $\Phi$ defined in Eq.~\eqref{eq:fermionic_Choi_basis_expansion},
\begin{equation}
J_\mathrm{F}(\Phi)
=
2^{-{n_1}}
\sum_{\vec\mu,\vec\nu\in\{0,1\}^{n_1}}
\Phi\left(
\ketbra{\vec\mu}{\vec\nu}_1\Pi_1^{\varepsilon(\vec\mu,\vec\nu)}
\right)
\Pi_2^{\varepsilon(\vec\mu,\vec\nu)}
\otimes
\ketbra{\vec\mu}{\vec\nu}_{1'},
\end{equation}
where $\varepsilon(\vec\mu,\vec\nu)\coloneqq(\abs{\vec\mu}+\abs{\vec\nu})\mod 2$. $J_\mathrm{F}(\Phi)$ is either zero or a positive Gaussian operator since $\Phi$ is a fermionic Gaussian completely positive map.
Taking the partial trace over system $\cH_2$ leads to
\begin{equation}
\begin{aligned}
\Tr_2(J_\mathrm{F}(\Phi))=&2^{-n_1}\sum_{\vec\mu,\vec\nu\in\{0,1\}^{n_1}}\Tr(\Phi\left(
\ketbra{\vec\mu}{\vec\nu}_1\Pi_1^{\varepsilon(\vec\mu,\vec\nu)}
\right)
\Pi_2^{\varepsilon(\vec\mu,\vec\nu)})\ketbra{\vec\mu}{\vec\nu}_{1'}
\\=&2^{-n_1}\sum_{\substack{\vec\mu,\vec\nu\in\{0,1\}^{n_1}\\\varepsilon(\vec\mu,\vec\nu)=1}}\Tr(\Phi\left(
\ketbra{\vec\mu}{\vec\nu}_1\Pi_1
\right)
\Pi_2)\ketbra{\vec\mu}{\vec\nu}_{1'}
\\&+2^{-n_1}\sum_{\substack{\vec\mu,\vec\nu\in\{0,1\}^{n_1}\\\varepsilon(\vec\mu,\vec\nu)=0}}\Tr(\Phi\left(
\ketbra{\vec\mu}{\vec\nu}_1
\right)
)\ketbra{\vec\mu}{\vec\nu}_{1'}
\\=&2^{-n_1}\sum_{\vec\mu,\vec\nu\in\{0,1\}^{n_1}}{}_1\bra{\vec\nu}E_\Phi\ket{\vec\mu}_1\ketbra{\vec\mu}{\vec\nu}_{1'}
\\=&2^{-n_1}\left(E_\Phi^{\mathsf{T}}\right)_{1'}.
\end{aligned}
\end{equation}
In the second equality, we split the summation into two parts according to the value of $\varepsilon(\vec\mu,\vec\nu)$. The odd term vanishes since $p(O_1O_2)\equiv p(O_1)+p(O_2)\pmod 2$ for parity-homogeneous $O_1$ and $O_2$, $\Phi$ is parity-preserving, and odd operators are traceless. The summation indices of the even term can be relaxed since $E_\Phi$ is an even operator and thus $\bra{\vec\nu}E_\Phi\ket{\vec\mu}=0$ for $\varepsilon(\vec\mu,\vec\nu)=1$.
By Wick's theorem, taking a partial trace of a positive Gaussian operator yields a positive Gaussian operator, so $\left(E_\Phi^{\mathsf{T}}\right)_{1'}$ is Gaussian. Since we use the same fermionic CAR algebra for both $\cH_1$ and $\cH_{1'}$, $E_\Phi^{\mathsf{T}}$ is a positive Gaussian operator. Noticing that the transposition under the occupation basis acting on Majorana operators gives
\begin{equation}
\gamma_{2j-1}^{\mathsf{T}}=\gamma_{2j-1},\qquad\gamma_{2j}^{\mathsf{T}}=-\gamma_{2j},
\end{equation}
so it preserves Gaussianity and $E_\Phi$ is Gaussian.
\end{proof}

\section{Non-Gaussianity cost of fermionic encoding}\label{app:proof_universal_non-Gaussian_cost}
In this appendix, we introduce notations and lemmas in Appendix~\ref{app:notations_lemmas} and then prove Theorem~\ref{thm:universal_non_Gaussian_cost} in the main text in Appendix~\ref{app:g-doped_Gaussian_fidelity}.

\subsection{Notations and lemmas}\label{app:notations_lemmas}

We generalize the reference family in Eq.~\eqref{eq:def_Gaussian_fidelity} from pure Gaussian states to the family of non-Gaussian states.
We consider a family of $w$-local gates, each of which is generated by a Hermitian Majorana product with an even Majorana weight $w>2$:
\begin{equation}
V_K(\theta)=\exp(i\theta\widehat{\gamma}_K), \qquad K\subseteq[2n]\text{ with } \abs{K}=w.
\end{equation}
In particular, quartic gates, i.e., the $w=4$ family, can be regarded as typical non-Gaussian elementary gates since $V_K(\theta)$ is a parity-preserving non-Gaussian unitary whenever $\sin(2\theta)\neq0$. Quartic Majorana products are the lowest-degree even generators beyond the quadratic Gaussian sector. For example, the nearest-neighbor SWAP gate is a ``magic'' gate in fermionic Gaussian circuits~\cite{jozsa2008MatchgatesClassicalSimulation,brod2011ExtendingMatchgatesUniversal,mele2025EfficientLearningQuantum} and $\operatorname{SWAP}_{j,j+1}$ can be composed by Gaussian unitaries and the quartic gate $V_{\{2j-1,2j,2j+1,2j+2\}}(\pi/4)$. 
Recall that an $n$-mode
\emph{$(g,q)$-doped Gaussian circuit} for $n,g\in\mathbb{N}$ and $q\geq4$ takes the form
\begin{equation}\label{eq:g_doped_circuit_app}
U
=
G_gV_gG_{g-1}\cdots V_2G_1V_1G_0,
\end{equation}
where each $G_j$ is a fermionic Gaussian unitary, and each $V_j$ is a non-Gaussian unitary generated by a Hermitian Majorana product of even weight $w_j\leq q$.
We call an $n$-mode state vector $\ket{\phi}$ a \emph{$(g,q)$-doped fermionic Gaussian state} (the set is denoted as $\mathcal G_{g,q}^{(n)}$), if it can be generated by a $(g,q)$-doped Gaussian circuit from state $\ket{0^n}$. Theorem~1 in Ref.~\cite{mele2025EfficientLearningQuantum} shows that the non-Gaussian content of a $(g,q)$-doped Gaussian state can be compressed to a local subsystem of size $gq$, reformulated as the following lemma.
\begin{lemma}[Fermionic non-Gaussianity compression,~\cite{mele2025EfficientLearningQuantum}]\label{lem:fermionic_non-Gaussianity_compression}
For any $n$-mode $(g,q)$-doped fermionic Gaussian state $\ket{\psi}$, there exist a Gaussian unitary $G$ and a state $\ket{\chi}$ on the first $c=\min\{n,gq\}$ modes, such that
\begin{equation}
\ket{\psi}=G\left(\ket{\chi}_C\otimes\ket{0^{n-c}}_T\right),
\end{equation}
where we call $C$ the core system of $c$ modes and $T$ the tail system of the remaining $n-c$ modes.
\end{lemma}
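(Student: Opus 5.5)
The plan is to prove the lemma by induction on the number $g$ of non-Gaussian gates. The inductive invariant after processing the first $j$ gates of the circuit in Eq.~\eqref{eq:g_doped_circuit_app} is: the partial state $G_{j}V_j\cdots V_1G_0\ket{0^n}$ can be written as $G^{(j)}(\ket{\chi_j}_{C_j}\otimes\ket{0}_{T_j})$, with $G^{(j)}$ a Gaussian unitary and $\ket{\chi_j}$ supported on the first $c_j\le\min\{n,jq\}$ modes.

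\textbf{Base case ($j=0$).} Take $G^{(0)}=G_0$, $c_0=0$, and the empty core.

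\textbf{Inductive step, first move.} Suppose the invariant holds for $j$. The next gate is $V=\exp(i\theta\widehat\gamma_K)$ with $\abs{K}=w\le q$, followed by $G_{j+1}$. Write
\begin{equation}
V G^{(j)} = G^{(j)}\exp\bigl(i\theta\, G^{(j)\dagger}\widehat\gamma_K G^{(j)}\bigr).
\end{equation}
By Eq.~\eqref{eq:Gaussian_unitary_action}, $G^{(j)\dagger}\widehat\gamma_K G^{(j)}$ equals, up to phase, a product $\widetilde\gamma_1\cdots\widetilde\gamma_w$. Each $\widetilde\gamma_a=\sum_k R_{a,k}\gamma_k$ is a real linear combination of Majoranas. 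We split every $\widetilde\gamma_a$ into a core part (Majoranas of the first $c_j$ modes) and a tail part.

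\textbf{Key step: a passive tail rotation.} Rewrite each tail part in terms of tail annihilation and creation operators,
\begin{equation}
\sum_{m} \bigl(u_{a,m} a_m + \bar u_{a,m} a_m^\dagger\bigr),
\end{equation}
which defines complex vectors $u_a\in\mathbb C^{n-c_j}$. Their complex span has dimension $w'\le w$. Choose a unitary $u\in\mathrm U(n-c_j)$ that maps this span into the span of the first $w'$ tail modes. The induced particle-number-preserving (passive) Gaussian unitary $W_u$ acts only on the tail, fixes $\ket{0}_T$, and commutes with the core Majoranas. Inserting $W_uW_u^\dagger$ turns each $W_u^\dagger\widetilde\gamma_aW_u$ into a combination of Majoranas of the first $c_j+w'$ modes only. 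Hence the conjugated generator lies in the algebra of those modes.

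\textbf{Closing the step.} Because the generator has even weight, under the Jordan--Wigner ordering (core first) it acts as an operator on the first $c_j+w'$ qubits tensored with the identity on the rest. Set
\begin{equation}
\ket{\chi_{j+1}}=\exp(i\theta\,\cdot)\bigl(\ket{\chi_j}\otimes\ket{0^{w'}}\bigr), \qquad G^{(j+1)}=G_{j+1}G^{(j)}W_u.
\end{equation}
This gives $c_{j+1}\le c_j+q$. If $c_j+w'$ would exceed $n$, the core is simply all $n$ modes and the statement is trivial, which yields the $\min\{n,gq\}$. Finally, if $c<gq$ we pad the core with extra vacuum tail modes; mode permutations are Gaussian (in $\mathrm M_n$), so the core can be placed on the first $c$ modes.

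\textbf{Main obstacle.} The delicate point is the key step: the tail rotation must both localize the tail components and leave $\ket{0}_T$ invariant. A generic $\mathrm{O}(2(n-c))$ rotation localizing the real Majorana span would destroy the vacuum. This is why I use complex annihilation-operator coefficient vectors, which makes a passive $\mathrm U(n-c)$ rotation sufficient. The cost is that each real Majorana consumes one complex mode, so the core grows by at most $w$ modes per gate rather than $w/2$. I would also double-check the fermionic signs from the Jordan--Wigner strings; evenness of $\widehat\gamma_K$ is what guarantees locality on the enlarged core.
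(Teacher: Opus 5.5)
Your proof is correct. The paper gives no argument of its own for this lemma; it imports it from Theorem~1 of Ref.~\cite{mele2025EfficientLearningQuantum}, so your induction supplies a self-contained proof where the paper only has a citation.

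\textbf{Your route.} The key choice is the right one. After pulling each gate back through the accumulated Gaussian layer, you localize the tail components with a \emph{passive} rotation acting on the complex span of their annihilator coefficients. Hence $W_u$ fixes $\ket{0}_T$ and commutes with the core Majoranas. With the core first in the Jordan--Wigner order, quadratic tail operators act as $\mathbb I_C\otimes(\cdot)$. Everything generated by the Majoranas of the first $c_j+w'$ modes acts as $(\cdot)\otimes\mathbb I$, so no sign issues arise. The construction is explicit gate by gate. It needs only Eq.~\eqref{eq:Gaussian_unitary_action} and the transitivity of $\mathrm U(n-c_j)$ on subspaces, and it even yields the sharper core size $\min\{n,\sum_j w_j\}$.

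\textbf{A basis-free alternative.} One can instead track the covariance matrix. A linear combination $b_v=\sum_k v_k\gamma_k$ annihilates $\ket{\psi}$ if and only if $\Gamma(\psi)v=iv$. Those $b_v$ with $v_k=0$ for $k\in K$ commute with $e^{i\theta\widehat\gamma_K}$, so each gate lowers $\dim\ker(\Gamma-i\mathbb I)$ by at most $\abs{K}$. The compressing $G$ is then read off from the normal-mode decomposition of the final $\Gamma$, without reference to the circuit. This is convenient when only $\Gamma$ is accessible.

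\textbf{Two small remarks.} First, the case $c_j+w'>n$ cannot occur, since the span lives in $\mathbb C^{n-c_j}$. No permutation is needed for padding either, because your core already occupies the first modes.

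Second, the growth by up to $w$ modes per gate (rather than $w/2$) is not an artifact of using passive rotations. Take $n=4$: the state $e^{i\theta\widehat\gamma_{\{1,3,5,7\}}}\ket{0^4}=\cos\theta\ket{0000}-i\sin\theta\ket{1111}$ has covariance matrix $\cos(2\theta)J^{\oplus4}$, where $J$ is the single-mode vacuum covariance matrix. This matrix has no singular value equal to one when $\sin(2\theta)\neq0$. By contrast, Eq.~\eqref{eq:Gaussian_unitary_covariance_matrix} shows that every $G(\ket{\chi}\otimes\ket{0^{4-c}})$ with $c\leq3$ has such a singular value. So a single quartic gate can force a four-mode core, and the $gq$ in the statement cannot be improved in general.
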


For a state vector $\ket{\psi}$, define its $(g,q)$-doped Gaussian fidelity as the maximum fidelity between $\ket{\psi}$ and any $(g,q)$-doped Gaussian state, i.e.,
\begin{equation}\label{eq:def_g-doped_Gaussian_fidelity}
F_{g,q}^{(n)}(\ket{\psi})
\coloneqq
\sup_{\ket{\phi}\in\mathcal G_{g,q}^{(n)}}
\abs{\braket{\phi}{\psi}}^2.
\end{equation}
When $g=0$, it equals the fermionic
Gaussian fidelity in Eq.~\eqref{eq:def_Gaussian_fidelity}. 

We then provide some useful lemmas and their proofs.
The following Lemma~\ref{lem:distance_degradation} provides a lower bound on the Majorana distance remaining after conjugation by a doped circuit, whose proof uses Lemma~\ref{lem:local-even-degree-growth}.

\begin{lemma}[Degree growth under a local unitary]
\label{lem:local-even-degree-growth}
For $r\in\mathbb{N}$, denote the subspace that is spanned by Majorana products with weight at most $r$, i.e., every operator in this subspace has Majorana degree at most $r$, as
\begin{equation}
\mathcal A_{\leq r}
\coloneqq
\operatorname{span}
\left\{
\widehat{\gamma}_S\middle|
\abs{S}\leq r
\right\}.
\label{eq:Majorana-degree-filtration}
\end{equation}
Let $V=\exp(i\theta\widehat{\gamma}_K)$ be a parity-preserving unitary generated by a Hermitian Majorana product with even $\abs{K}\geq2$ and $\theta\in[0,2\pi)$. Then
\begin{equation}
V\mathcal A_{\leq r}V^{\dagger}
\subseteq
\mathcal A_{\leq r+\abs{K}-2}.
\label{eq:local-even-degree-growth}
\end{equation}
\end{lemma}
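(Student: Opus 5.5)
We will compute the conjugation of a Hermitian Majorana product $\widehat{\gamma}_S$ by $V=\exp(i\theta\widehat{\gamma}_K)$ explicitly, using the fact that $\widehat{\gamma}_K$ is a Hermitian involution. By linearity of conjugation, it then suffices to treat a single basis element $\widehat{\gamma}_S$ with $\abs{S}\leq r$.

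\textbf{Step 1: closed form of the conjugation.} Since $\widehat{\gamma}_K^2=\bI$, we have $V=\cos\theta\,\bI+i\sin\theta\,\widehat{\gamma}_K$. From the product rule in Eq.~\eqref{eq:Hermitian_Majorana_product_product}, any two Majorana products either commute or anticommute. The sign is $(-1)^{\abs{S}\abs{K}-\abs{S\cap K}}$, and because $\abs{K}$ is even this equals $(-1)^{\abs{S\cap K}}$.

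If $\widehat{\gamma}_S$ commutes with $\widehat{\gamma}_K$, then $V\widehat{\gamma}_SV^\dagger=\widehat{\gamma}_S\in\mathcal A_{\leq r}$.

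If they anticommute, then $\widehat{\gamma}_SV^\dagger=V\widehat{\gamma}_S$, so
\begin{equation}
V\widehat{\gamma}_SV^\dagger=V^2\widehat{\gamma}_S=\cos(2\theta)\,\widehat{\gamma}_S+i\sin(2\theta)\,\widehat{\gamma}_K\widehat{\gamma}_S .
\end{equation}
The second term is proportional to $\widehat{\gamma}_{K\triangle S}$.

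\textbf{Step 2: weight counting.} Anticommutation forces $\abs{S\cap K}$ to be odd, so in particular $\abs{S\cap K}\geq1$. Therefore
\begin{equation}
\abs{K\triangle S}=\abs{S}+\abs{K}-2\abs{S\cap K}\leq r+\abs{K}-2 .
\end{equation}
Both terms of the anticommuting case thus lie in $\mathcal A_{\leq r+\abs{K}-2}$. Extending by linearity gives Eq.~\eqref{eq:local-even-degree-growth}.

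\textbf{Main point to check.} The only delicate step is the commutation sign. One moves each $\gamma_s$ through $\gamma_K$: it picks up a factor $(-1)^{\abs{K}}$ if $s\notin K$ and $(-1)^{\abs{K}-1}$ if $s\in K$. Since $\abs{K}$ is even, the total sign is $(-1)^{\abs{S\cap K}}$, so anticommutation does imply that the overlap is nonempty. The phase factors in the Hermitian normalization do not affect membership in $\mathcal A_{\leq r+\abs{K}-2}$.
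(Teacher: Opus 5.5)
Your proof is correct, but it is organized differently from the paper's. The paper factors $\widehat{\gamma}_S\propto\widehat{\gamma}_{S\setminus K}\widehat{\gamma}_{S\cap K}$ and notes that $V$ commutes with the outside factor, since $\abs{K}$ is even. It then uses only that $V\widehat{\gamma}_{S\cap K}V^{\dagger}$ is a combination of monomials $\widehat{\gamma}_T$ with $T\subseteq K$ and $\abs{T}\equiv\abs{S\cap K}\pmod 2$. From this it bounds the degree $\abs{S}-\abs{S\cap K}+\abs{T}$ in two parity cases, using the crude bound $\abs{T}\leq\abs{K}$ when $\abs{S\cap K}\geq 2$ is even.

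You instead use the commute/anticommute dichotomy of the whole monomial with $\widehat{\gamma}_K$, together with $V\widehat{\gamma}_SV^{\dagger}=V^2\widehat{\gamma}_S$ in the anticommuting case. This gives the exact image $V\widehat{\gamma}_SV^{\dagger}\in\operatorname{span}\{\widehat{\gamma}_S,\widehat{\gamma}_{S\triangle K}\}$. It also shows directly that monomials with even overlap are left invariant, which the paper's case analysis does not exploit.

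What each route buys: yours is slightly sharper and avoids the factorization step, while the paper's needs only a support/parity argument rather than an explicit conjugation formula. Both rest on the same arithmetic fact, that odd overlap forces $\abs{S\cap K}\geq 1$. The one step you leave implicit is that $\widehat{\gamma}_S$ itself (in the commuting case and in the $\cos(2\theta)$ term) lies in $\mathcal A_{\leq r+\abs{K}-2}$; this uses the hypothesis $\abs{K}\geq 2$.
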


\begin{proof}[Proof of Lemma~\ref{lem:local-even-degree-growth}]
It suffices to consider a Hermitian Majorana product
$\widehat{\gamma}_S$ of degree $\abs{S}\leq r$. 
Up to an irrelevant phase,
$\widehat{\gamma}_S$ is the product of a monomial supported outside
$K$ and a monomial supported inside $K$, i.e., 
\begin{equation}
\widehat\gamma_S
\propto
\widehat\gamma_{S\setminus K}
\widehat\gamma_{S\cap K},
\end{equation}
so
\begin{equation}\label{eq:V_gamma_S_Vdagger}
V\widehat\gamma_S V^\dagger\propto\widehat\gamma_{S\setminus K}V\widehat\gamma_{S\cap K}V^\dagger
\end{equation}

If $\abs{S\cap K}=0$, then $V$ commutes with every operator on
the disjoint support $S$ since $\abs{K}$ is even, so $V\widehat\gamma_SV^\dagger=\widehat\gamma_S$ is unchanged. Suppose,
therefore, that $\abs{S\cap K}>0$. Using Eq.~\eqref{eq:Hermitian_Majorana_product_product} and $\abs{K}$ is even, 
$V\widehat{\gamma}_{S\cap K}V^{\dagger}$
is a linear combination of monomials
$\widehat{\gamma}_T$ with $T\subseteq K$ and
\begin{equation}
\abs{T}\equiv \abs{S\cap K}\pmod 2.
\end{equation}
\begin{itemize}
\item If $\abs{S\cap K}$ is odd, then $\abs{S\cap K}\geq1$ and
$\abs{T}\leq \abs{K}-1$,
the total degree in Eq.~\eqref{eq:V_gamma_S_Vdagger} is at most
\begin{equation}
\abs{S}-\abs{S\cap K}+\abs{T}
\leq\abs{S}-1+\abs{K}-1=\abs{S}+\abs{K}-2.
\end{equation}

\item If $\abs{S\cap K}$ is even, then $\abs{S\cap K}\geq2$ and
$\abs{T}\leq \abs{K}$,
the total degree in Eq.~\eqref{eq:V_gamma_S_Vdagger} is at most
\begin{equation}
\abs{S}-\abs{S\cap K}+\abs{T}
\leq\abs{S}-2+\abs{K}.
\end{equation}
\end{itemize}
Consequently, every monomial appearing after conjugation has degree
at most
\begin{equation}
\abs{S}+\abs{K}-2
\leq
r+\abs{K}-2.
\end{equation}
Using linearity, we prove
Eq.~\eqref{eq:local-even-degree-growth}.
\end{proof}

\begin{lemma}[Distance degradation under local doping]\label{lem:distance_degradation}
Let $P$ be the orthogonal code projector of an $[\![n,k,\dF]\!]_{\mathrm{F}}$ fermionic code and $U$ be a $(g,q)$-doped circuit in the form
\begin{equation}\label{eq:g_doped_circuit_app_2}
U
=
G_gV_gG_{g-1}\cdots V_2G_1V_1G_0,
\end{equation}
where each $G_j$ is a fermionic Gaussian unitary, and each $V_j$ is a non-Gaussian unitary generated by a Hermitian Majorana product of even weight $w_j\leq q$. Then $P'=U^\dagger PU$ is the orthogonal code projector of an $[\![n,k,\dF']\!]_\mathrm{F}$ fermionic code with
\begin{equation}
\dF'\geq\max\left\{1,\dF-\sum_{j=1}^g(w_j-2)\right\}\geq\max\left\{1,\dF-g(q-2)\right\}.
\end{equation}
In particular, when $g=0$, conjugation by a Gaussian unitary preserves Majorana distance since $\dF'=\dF$.
\end{lemma}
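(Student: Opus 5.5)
We will prove Lemma~\ref{lem:distance_degradation} by working in the Heisenberg picture. The basic observation is that for any unitary $W$ and subset $S$,
\begin{equation}
P'\gamma_SP'=U^\dagger P\,(U\gamma_SU^\dagger)\,PU .
\end{equation}
Hence $P'\gamma_SP'\in\mathbb CP'$ holds whenever $P(U\gamma_SU^\dagger)P\in\mathbb CP$. Since $U$ is unitary, $P'$ is an orthogonal projector with $\operatorname{rank}(P')=\operatorname{rank}(P)=2^k$, so it again defines an $n$-mode code encoding $k$ qubits. What remains is a degree bound: we will show that for $\abs{S}<\dF-\sum_j(w_j-2)$, the conjugated operator $U\gamma_SU^\dagger$ lies in $\mathcal A_{\leq \dF-1}$.

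\textbf{Step 1 (degree bookkeeping).} Gaussian unitaries preserve the filtration, i.e.\ $G\mathcal A_{\leq r}G^\dagger\subseteq\mathcal A_{\leq r}$. To see this, write $G=U_R$ with some $R\in\mathrm O(2n)$. By Eq.~\eqref{eq:Gaussian_unitary_action} applied to $U_R^\dagger$ (whose rotation is $R^{\mathsf T}$), each $U_R\gamma_jU_R^\dagger$ is a linear combination of single Majoranas. A product of $\abs{S}$ such combinations expands into Majorana products of weight at most $\abs{S}$, because repeated indices cancel via $\gamma_j^2=\bI$. For the non-Gaussian layers, Lemma~\ref{lem:local-even-degree-growth} gives $V_j\mathcal A_{\leq r}V_j^\dagger\subseteq\mathcal A_{\leq r+w_j-2}$. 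Composing all layers of $U=G_gV_g\cdots V_1G_0$ yields
\begin{equation}
U\mathcal A_{\leq r}U^\dagger\subseteq\mathcal A_{\leq r+\sum_{j=1}^g(w_j-2)}.
\end{equation}

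\textbf{Step 2 (detection).} Fix $S\neq\emptyset$ with $\abs{S}<\dF-\sum_j(w_j-2)$. By Step~1, $U\gamma_SU^\dagger=\sum_{\abs{T}<\dF}c_T\gamma_T$. The identity term $T=\emptyset$ contributes $c_\emptyset P$. Every other term has $0<\abs{T}<\dF$, so by the definition of $\dF(P)$ in Eq.~\eqref{eq:def_majorana_distance_main} we have $P\gamma_TP\in\mathbb CP$. Summing, $P(U\gamma_SU^\dagger)P\in\mathbb CP$, and by the opening identity $P'\gamma_SP'\in\mathbb CP'$. Therefore every nonempty $S$ with $P'\gamma_SP'\notin\mathbb CP'$ must satisfy $\abs{S}\geq\dF-\sum_j(w_j-2)$, which is exactly the lower bound $\dF'\geq \dF-\sum_j(w_j-2)$.

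\textbf{Step 3 (floor and special case).} For the floor $\dF'\geq1$: since $\operatorname{rank}(P')=2^k>1$, $P'\gamma_SP'\notin\mathbb CP'$ for some $S$, because the Majorana products span all operators. Moreover $\abs{S}\geq1$ trivially for nonempty $S$, so $\dF'\geq1$. The second inequality follows from $w_j\leq q$. For $g=0$ we get $\dF'\geq\dF$. Applying the same argument to the inverse conjugation $P=U P'U^\dagger$, where $U^\dagger$ is again Gaussian, gives $\dF\geq\dF'$, hence equality.

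\textbf{Main obstacle.} The only delicate point is the filtration preservation under Gaussian unitaries in Step~1. In particular, determinant $-1$ elements of $\mathrm O(2n)$ still act linearly on the Majoranas, so the argument covers them as well. Everything else in the proof is routine.
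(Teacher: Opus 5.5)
Your proposal is correct and follows essentially the same route as the paper. You conjugate $\gamma_S$ through the circuit and use Lemma~\ref{lem:local-even-degree-growth} layer by layer to place $U\gamma_SU^\dagger$ in $\mathcal A_{\leq \dF-1}$; you then apply the distance condition of $P$ term by term to get $P'\gamma_SP'\in\mathbb CP'$, and use the inverse Gaussian conjugation for the $g=0$ equality. Your proposal also makes explicit two steps the paper leaves implicit: that the Gaussian layers $G_j$, including those with $\det R=-1$, preserve the degree filtration, and that $\dF'$ is well defined because $\operatorname{rank}(P')>1$.
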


\begin{proof}[Proof of Lemma~\ref{lem:distance_degradation}]
For $P'=U^\dagger PU$, it is an orthogonal projector since $P'^\dagger=P'$ and $P'^2=U^\dagger P^2U=P'$. Its rank is $\operatorname{rank}(P')=\Tr(P')=\Tr(P)=2^k$. So $P'$ is a valid fermionic code projector encoding $k$ logical modes to $n$ physical modes. We have $\dF'\geq1$.
For any $S\subseteq[2n]$ with $\abs{S}<\dF-\sum_{j=1}^g(w_j-2)$, recursively using Lemma~\ref{lem:local-even-degree-growth} for $g$ times, we obtain
$U\gamma_SU^\dagger\in\cA_{\leq\dF-1}$
so that $\deg_\mathrm{F}\left(U\gamma_SU^\dagger\right)<\dF$. Expanding in the basis of Majorana products,
\begin{equation}
U\gamma_SU^\dagger=\sum_{\substack{T\subseteq[2n]\\\abs{T}\leq\dF-1}}c_{S,T}\gamma_T,
\end{equation}
and since the Majorana distance of $P$ is $\dF$, there exist coefficients $\{\lambda_T\}$ such that $P\gamma_TP=\lambda_TP$. Then
\begin{equation}
\begin{aligned}
P'\gamma_SP'&=U^\dagger PU\gamma_SU^\dagger PU
\\&=\sum_{\substack{T\subseteq[2n]\\\abs{T}\leq\dF-1}}c_{S,T}U^\dagger P\gamma_TPU
\\&=\left(\sum_{\substack{T\subseteq[2n]\\\abs{T}\leq\dF-1}}c_{S,T}\lambda_T\right)P',
\end{aligned}
\end{equation}
which implies $\dF'\geq\dF-\sum_{j=1}^g(w_j-2)\geq\dF-g(q-2)$ since each $w_j\leq q$.

If $g=0$ and $U$ is a Gaussian unitary, then $\dF'\geq\dF$ from above. Applying the same argument for $U^\dagger$, we obtain $\dF\geq\dF'$, so $\dF'=\dF$.
\end{proof}

For any $n\in\mathbb{N}_+$ and $a\in\mathbb{N}$, we denote the volume of Hamming ball in $\{0,1\}^n$ centered at $0^n$ with radius $a$ by
\begin{equation}
 V_n(a)=\sum_{w=0}^a\binom nw,
 \qquad
\text{and } V_n(-1)=0.
\label{eq:Hamming-volume}
\end{equation}
For $1\leq k\leq n$, denote the radius of the largest Hamming ball with volume less than $2^k$ by 
\begin{equation}
a_{n,k}
=\max\left\{a\in\{-1,0,\cdots,n-1\}\middle|V_n(a)<2^k\right\},
\label{eq:def_a_nk}
\end{equation}
and denote
\begin{equation}\label{eq:def_q_nk}
q_{n,k}=\frac{a_{n,k}+1}n.
\end{equation}
Recall the binary entropy function defined for $x\in[0,1]$,
\begin{equation}
H_2(x)=-x\log_2x-(1-x)\log_2(1-x),
\end{equation}
and we set $0\log 0=0$. We denote the inverse of the restriction of $H_2$ to $[0,1/2]$ as $H_2^{-1}$ and then prove the following inequality.

\begin{lemma}[Lower bound of $q_{n,k}$]\label{lem:lower_bound_q_nk}
For any $1\leq k\leq n$ and $q_{n,k}$ defined in Eq.~\eqref{eq:def_q_nk},
\begin{equation}\label{eq:lower_bound_q_nk}
q_{n,k}>H_2^{-1}\left(\frac kn\right).
\end{equation}
\end{lemma}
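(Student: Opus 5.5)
The plan is to compare the Hamming-ball volume just above $a_{n,k}$ with the standard entropy bound on binomial tails. Write $a=a_{n,k}$ and $p=q_{n,k}=(a+1)/n$.

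First, $a\geq 0$: since $k\geq1$ we have $V_n(0)=1<2^k$, so $a\geq0$ and hence $p\geq 1/n>0$. Second, there is a trivial case: if $p>1/2$, then $p>1/2\geq H_2^{-1}(k/n)$, because $H_2^{-1}$ takes values in $[0,1/2]$, and we are done. This case includes $a=n-1$, where $p=1$. So we may assume $0<p\leq 1/2$. Then $m\coloneqq a+1$ satisfies $1\leq m\leq n/2<n$. In particular $m\leq n-1$, so maximality of $a$ in Eq.~\eqref{eq:def_a_nk} gives $V_n(m)\geq 2^k$.

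The key step is the strict entropy bound $V_n(m)<2^{nH_2(m/n)}$ for $0<m/n\leq1/2$ and $m<n$. I would derive it from the binomial identity
\begin{equation}
1=\sum_{w=0}^{n}\binom nw p^w(1-p)^{n-w}>\sum_{w=0}^{m}\binom nw p^w(1-p)^{n-w}\geq V_n(m)\,p^m(1-p)^{n-m}.
\end{equation}
The first inequality is strict because the omitted terms with $w>m$ exist (since $m<n$) and are positive (since $0<p<1$). The second inequality holds because $p/(1-p)\leq1$ for $p\leq1/2$, so $p^w(1-p)^{n-w}\geq p^m(1-p)^{n-m}$ for all $w\leq m$. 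With $p=m/n$ we have $p^m(1-p)^{n-m}=2^{-nH_2(p)}$, which gives the claimed strict bound.

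Combining the two bounds gives $2^k\leq V_n(m)<2^{nH_2(p)}$, so $H_2(p)>k/n$. Since $H_2$ is strictly increasing on $[0,1/2]$ and $p\in(0,1/2]$, applying $H_2^{-1}$ yields $p>H_2^{-1}(k/n)$. This is Eq.~\eqref{eq:lower_bound_q_nk}.

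The main point of care is the strictness and the edge cases: $a=-1$ (excluded above), $a=n-1$ (where maximality gives no information about $V_n(a+1)$, handled by the trivial case), and $p>1/2$ (where the monotonicity of $H_2$ fails, also handled by the trivial case). With those disposed of, the remaining argument is the standard binomial estimate.
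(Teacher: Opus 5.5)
Your proof is correct and is essentially the paper's argument. Both rest on the same binomial estimate $V_n(m)<2^{nH_2(m/n)}$ for $m/n\leq 1/2$, $m<n$, and differ only in direction: the paper applies it at radius $\lfloor nH_2^{-1}(k/n)\rfloor$ to show that radius satisfies $V_n<2^k$, giving $a_{n,k}\geq\lfloor nH_2^{-1}(k/n)\rfloor$, whereas you apply it at $a_{n,k}+1$, use maximality to get $V_n(a_{n,k}+1)\geq 2^k$, and invert $H_2$ (so your edge cases are $p>1/2$ rather than the paper's $H_2^{-1}(k/n)=1/2$ and $a=0$).
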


\begin{proof}[Proof of Lemma~\ref{lem:lower_bound_q_nk}]
We denote the right-hand side in Eq.~\eqref{eq:lower_bound_q_nk} as $x\in[0,1/2]$ and let $a=\lfloor nx\rfloor$. We need to show $a_{n,k}\geq a$ (cf. Eq.~\eqref{eq:def_a_nk}), which is equivalent to showing that $V_n(a)<2^k$ (cf.\ Eq.~\eqref{eq:Hamming-volume}).
\begin{itemize}
\item If $x=1/2$, then $k=n$ and $V_n(\lfloor n/2\rfloor)<2^n=2^k$. Below we assume $0<x<1/2$.
\item If $a=0$, then $V_n(a)=1<2^k$.
\item If $a\geq1$, let $p=a/n\in(0,1/2)$, notice that the function $p^w(1-p)^{n-w}$ is decreasing for $0\leq w\leq a$, we have
\begin{equation}\label{eq:proof_binomial_ball}
\begin{aligned}
1
&>
\sum_{w=0}^{a}
\binom nw
p^w(1-p)^{n-w}
\\
&\geq
V_n(a)
p^a(1-p)^{n-a}
\\
&=
V_n(a)
2^{-nH_2(a/n)}.
\end{aligned}
\end{equation}
Since $H_2(x)$ is increasing for $x\in[a/n,1/2]$,
\begin{equation}
V_n(a)<2^{nH_2(a/n)}\leq2^{nH_2(x)}=2^k.
\end{equation}
\end{itemize}
So in all cases we have $a_{n,k}\geq\lfloor nx\rfloor$, so
\begin{equation}
q_{n,k}\geq\frac{\lfloor nx\rfloor+1}n>x,
\end{equation}
which is Eq.~\eqref{eq:lower_bound_q_nk}.
\end{proof}

For any fermionic error-correcting code, assume we fix a partition as in Lemma~\ref{lem:fermionic_non-Gaussianity_compression}, we show that there exists a mode in the tail system, such that the probability of getting vacuum on that mode is the same for any codeword, and this quantity is upper bounded.

\begin{lemma}[Vacuum occupation bound]\label{lem:vacuum_occupation_bound}
For any $[\![n,k,\dF]\!]_\mathrm{F}$ fermionic code with orthogonal projector $Q$ with $\dF\geq3$, fix a core system $C$ consisting of $c<k$ modes and denote the remaining system of $n-c$ modes as tail system $T$. There exists a mode $\ell\in T$ and a scalar $p_\ell$ independent of the codeword $\ket{\zeta}\in\operatorname{im}(Q)$, such that
\begin{equation}\label{eq:vacuum_occupation_bound}
p_\ell=\bra{\zeta}\Pi_\ell\ket{\zeta}\leq1-q_{n-c,k-c},
\end{equation}
where $\Pi_\ell=\ket{0}_\ell\bra{0}_\ell$ with $\ket{0}_\ell$ denoting the vacuum state on mode $\ell$, and $q_{n-c,k-c}$ is defined in Eq.~\eqref{eq:def_q_nk}.
\end{lemma}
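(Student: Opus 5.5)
The plan is an averaging argument over the tail modes: the Knill--Laflamme condition makes every single-mode vacuum probability codeword-independent, and a dimension count then forces the code to carry a large tail occupation number on average.

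\emph{Step 1 (codeword independence).} For each tail mode $\ell\in T$ write the vacuum projector as a quadratic Majorana operator,
\begin{equation}
\Pi_\ell=\tfrac12\left(\bI-i\gamma_{2\ell-1}\gamma_{2\ell}\right),
\end{equation}
which is the identity plus a Majorana product of weight $2$. Since $\dF\geq3$, the Knill--Laflamme condition gives $Q\Pi_\ell Q=p_\ell Q$ for a scalar $p_\ell$. Exactly as in Eq.~\eqref{eq:low_weight_indistinguishability}, this yields $\bra{\zeta}\Pi_\ell\ket{\zeta}=p_\ell$ for every normalized $\ket{\zeta}\in\operatorname{im}(Q)$. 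Hence every tail mode has a codeword-independent vacuum probability, and it remains only to show that at least one of them satisfies the bound in Eq.~\eqref{eq:vacuum_occupation_bound}.

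\emph{Step 2 (tail number operator).} Define the tail number operator
\begin{equation}
N_T=\sum_{\ell\in T}(\bI-\Pi_\ell).
\end{equation}
By Step 1, every normalized codeword has the same expectation value
\begin{equation}
\bra{\zeta}N_T\ket{\zeta}=\sum_{\ell\in T}(1-p_\ell).
\end{equation}
So it suffices to exhibit a single codeword with $\bra{\zeta}N_T\ket{\zeta}\geq a_{n-c,k-c}+1$. Averaging over the $n-c$ tail modes then gives some $\ell\in T$ with
\begin{equation}
1-p_\ell\geq\frac{a_{n-c,k-c}+1}{n-c}=q_{n-c,k-c},
\end{equation}
which is the claim.

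\emph{Step 3 (dimension count).} $N_T$ is diagonal in the occupation basis of $C\cup T$, with eigenvalue equal to the Hamming weight of the tail occupation string. Let $W$ be the span of all basis states whose tail string has weight at most $a=a_{n-c,k-c}$. Then
\begin{equation}
\dim W=2^{c}V_{n-c}(a)<2^{c}\,2^{k-c}=2^k=\dim\operatorname{im}(Q),
\end{equation}
using the definition of $a_{n-c,k-c}$ in Eq.~\eqref{eq:def_a_nk}, which is valid since $1\leq k-c\leq n-c$. Therefore $\operatorname{im}(Q)\cap W^\perp\neq\{0\}$, and we may pick a normalized codeword $\ket{\zeta}\in W^\perp$. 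It is supported only on occupation states with tail weight at least $a+1$, so $\bra{\zeta}N_T\ket{\zeta}\geq a+1$, completing Step 2.

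The case $a=-1$ needs no separate treatment: then $q_{n-c,k-c}=0$ and the bound $p_\ell\leq1$ holds trivially. The step needing the most care is Step 3. One must check that the fermionic ordering, with $C$ as the first $c$ modes as in Lemma~\ref{lem:fermionic_non-Gaussianity_compression}, lets $\Pi_\ell$ for $\ell\in T$ act diagonally in the product occupation basis. This holds because $\Pi_\ell$ is even and diagonal under the Jordan--Wigner transformation, so no parity strings interfere and the dimension count is legitimate.
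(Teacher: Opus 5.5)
Your proposal is correct and follows the paper's proof essentially step for step. The shared argument is: the Knill--Laflamme condition on the weight-two operator $\Pi_\ell$ makes each $\bra{\zeta}\Pi_\ell\ket{\zeta}$ codeword-independent; the dimension count $2^{c}V_{n-c}(a_{n-c,k-c})<2^k$ yields a codeword orthogonal to the low-tail-weight subspace; and averaging the resulting bound $\bra{\zeta}N_T\ket{\zeta}\geq a_{n-c,k-c}+1$ over the $n-c$ tail modes gives the claim. Your separate remark on $a=-1$ is harmless but vacuous, since $k-c\geq1$ forces $a_{n-c,k-c}\geq0$.
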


\begin{proof}[Proof of Lemma~\ref{lem:vacuum_occupation_bound}]
For each mode $j\in T$, denote the projector on the vacuum state on mode $j$ as $\Pi_j=\ket{0}_j\bra{0}_j$, 
and the number operator on mode $j$ as $N_j=\bI-\Pi_j$, equivalently,
\begin{equation}\label{eq:def_Pi_j_N_j}
\Pi_j
=
\frac{
\mathbb I-
i
\gamma_{2j-1}
\gamma_{2j}
}{2},
\qquad
N_j
=
\frac{
\mathbb I+
i
\gamma_{2j-1}
\gamma_{2j}
}{2}.
\end{equation}
Since $\deg_\mathrm{F}(N_j)=2$ and $\dF\geq3$, there exists $\lambda_j\in\mathbb{C}$ such that $QN_jQ=\lambda_jQ$. So for any normalized $\ket{\zeta}\in\operatorname{im}(Q)$,
\begin{equation}\label{eq:xi_N_j_xi}
\bra{\zeta}N_j\ket{\zeta}=\bra{\zeta}QN_jQ\ket{\zeta}=\lambda_j,
\end{equation}
and
\begin{equation}
\bra{\zeta}\Pi_j\ket{\zeta}=1-\lambda_j=p_j
\end{equation}
is independent of $\ket{\zeta}$.

Denote the number of modes in the tail system $T$ as $r=n-c$ and consider the subspace spanned by the states with Hamming weight upper bounded by $a_{r,k-c}$ in $T$
\begin{equation}
\cB\coloneqq\cH_C\otimes\operatorname{span}\left\{\ket{x}\middle|x\in\{0,1\}^r,\abs{x}\leq a_{r,k-c}\right\},
\end{equation}
we immediately have 
$\dim(\cB)=2^c\times V_r(a_{r,k-c})<2^k=\dim(\operatorname{im}(Q))$. Thus there exists a normalized $\ket{\xi}\in\operatorname{im}(Q)\cap\cB^\perp$, such that for all $\abs{x}\leq a_{r,k-c}$, $(\bI_C\otimes\bra{x}_T)\ket{\xi}=0$.
Consider the total number operator in the tail system $N_T=\sum_{j\in T}N_j$, then we have
\begin{equation}
\bra{\xi}N_T\ket{\xi}\geq a_{r,k-c}+1,
\end{equation}
since $\ket{\xi}$ has zero overlap with states in $\cB$. Combining with Eq.~\eqref{eq:xi_N_j_xi}, we have
\begin{equation}\label{eq:collective_occupation_bound}
\bra{\xi}N_T\ket{\xi}=\sum_{j\in T}\lambda_j\geq a_{r,k-c}+1.
\end{equation}
In this way, there exists $\ell\in T$, such that
\begin{equation}
1-p_\ell=\lambda_\ell\geq\frac{a_{r,k-c}+1}r=q_{r,k-c},
\end{equation}
which proves Eq.~\eqref{eq:vacuum_occupation_bound}.
\end{proof}

After identifying the mode $\ell$ chosen in Lemma~\ref{lem:vacuum_occupation_bound}, we then consider the outcomes when projecting the projector of a $[\![n,k,\dF]\!]_\mathrm{F}$ fermionic error-correcting code onto $\Pi_{\ell}$.

\begin{lemma}[Mode postselection]\label{lem:mode_postselection}
For any $[\![n,k,\dF]\!]_\mathrm{F}$ fermionic code with orthogonal projector $Q$ with $\dF\geq3$ and any mode $j\in[n]$, consider the projector $\Pi_j=\ket{0}_j\bra{0}_j$ onto the vacuum state on mode $j$, then the projection of $Q$ onto $\Pi_j$, i.e., $\Pi_jQ$, either:
\begin{enumerate}
\item annihilates the code projector $Q$, or
\item induces an isometry $W_j\propto\Pi_jQ$ on $\operatorname{im}(Q)$. We consider the isometry $J_j$ that maps from modes $[n]\setminus\{j\}$ to modes $[n]$ and inserts the vacuum state on mode $j$. Equivalently, $J_j$ and its adjoint act on the occupation number basis as
\begin{equation}\label{eq:def_J_j}
\begin{aligned}
J_j
\ket{x_1,\cdots,x_{j-1},x_{j+1},\cdots,x_n}
&=
\ket{x_1,\cdots,x_{j-1},0,x_{j+1},\cdots,x_n},\\
J_j^{\textnormal{\textdagger}}
\ket{x_1,\cdots,x_n}
&=\delta_{x_j,0}
\ket{x_1,\cdots,x_{j-1},x_{j+1},\cdots,x_n}.
\end{aligned}
\end{equation}
Together with $W_j$, they form a code projector $J_j^\textnormal{\textdagger} W_jW_j^\textnormal{\textdagger} J_j$ onto a $[\![n-1,k,\dF']\!]_\mathrm{F}$ fermionic code with $\dF'\geq\dF-2$, defined on the remaining $n-1$ modes after post-selecting $\ket{0}_j\bra{0}_j$ on mode $j$.
\end{enumerate}
\end{lemma}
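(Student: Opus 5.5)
The argument will rest on the Knill--Laflamme structure of $Q$ with respect to the quadratic operator $\Pi_j=(\bI-i\gamma_{2j-1}\gamma_{2j})/2$. Since $\deg_{\mathrm F}(\Pi_j)=2<3\leq\dF$, there is a scalar $p_j\in[0,1]$ with $Q\Pi_jQ=p_jQ$. The dichotomy then follows from
\begin{equation}
(\Pi_jQ)^\dagger(\Pi_jQ)=Q\Pi_jQ=p_jQ.
\end{equation}
If $p_j=0$, then $\Pi_jQ=0$, and we are in the first case. If $p_j>0$, we set $W_j=\Pi_jQ/\sqrt{p_j}$. The displayed identity gives $W_j^\dagger W_j=Q$, so $W_j$ restricted to $\operatorname{im}(Q)$ is an isometry onto $\operatorname{im}(\Pi_jQ)$. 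Consequently $W_jW_j^\dagger$ is a rank-$2^k$ projector whose image lies inside $\operatorname{im}(\Pi_j)$.

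Next we transport this projector to $n-1$ modes. The isometry $J_j$ satisfies $J_jJ_j^\dagger=\Pi_j$ and $J_j^\dagger J_j=\bI$. Since $\Pi_jW_j=W_j$, the operator $Q'\coloneqq J_j^\dagger W_jW_j^\dagger J_j$ is a projector of the same rank $2^k$, and $J_jQ'J_j^\dagger=W_jW_j^\dagger$. It remains to bound the distance of $Q'$.

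For that bound, take a Majorana product $\gamma_S$ on the modes $[n]\setminus\{j\}$ with $|S|<\dF-2$. We compute
\begin{equation}
Q'\gamma_SQ'=J_j^\dagger W_j W_j^\dagger\,(J_j\gamma_SJ_j^\dagger)\,W_jW_j^\dagger J_j .
\end{equation}
The key point is that $J_j\gamma_SJ_j^\dagger=\Pi_j\tilde\gamma_S\Pi_j$, where $\tilde\gamma_S$ is the Majorana product on $n$ modes with relabeled indices. The Jordan--Wigner strings must match up to a possible factor of $Z_j$, and $Z_j$ acts as the identity on $\operatorname{im}(\Pi_j)$, so this factor causes no harm. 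Even so, this sign bookkeeping is the step I expect to need the most care; I would check it directly on the occupation basis using the definition of $J_j$.

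With that identity in hand, $W_j^\dagger\Pi_j\tilde\gamma_S\Pi_jW_j=p_j^{-1}Q\Pi_j\tilde\gamma_S\Pi_jQ$. Expanding $\Pi_j\tilde\gamma_S\Pi_j$ with $\Pi_j=(\bI-i\gamma_{2j-1}\gamma_{2j})/2$ gives a combination of Majorana products of weight at most $|S|+4$. Since $\tilde\gamma_S$ avoids indices $2j-1,2j$, it either commutes or anticommutes with $\gamma_{2j-1}\gamma_{2j}$; in the anticommuting case the sandwich vanishes. In either case the result equals a combination of $\tilde\gamma_S$ and $\tilde\gamma_S\gamma_{2j-1}\gamma_{2j}$, so the weight is at most $|S|+2<\dF$. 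Hence $Q(\cdot)Q$ of this combination is a scalar multiple of $Q$, say $cQ$. Then
\begin{equation}
Q'\gamma_SQ'=c\,p_j^{-1}\,J_j^\dagger W_jQW_j^\dagger J_j=c\,p_j^{-1}\,Q',
\end{equation}
which gives $\dF'\geq\dF-2$, as claimed.
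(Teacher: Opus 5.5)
Your proposal is correct and follows essentially the same route as the paper. Both arguments apply Knill--Laflamme to the quadratic $\Pi_j$ to get $Q\Pi_jQ=p_jQ$, use the isometry $W_j=\Pi_jQ/\sqrt{p_j}$ and transport it via $J_j$, and observe that $\Pi_j\gamma_S\Pi_j=\gamma_S\Pi_j$ has degree at most $|S|+2<\dF$. The paper defines the reduced Majoranas directly as $J_j^\dagger\gamma_S J_j$ instead of checking your $Z_j$-string identity. Also, your ``anticommuting'' case is vacuous, since $\gamma_{2j-1}\gamma_{2j}$ commutes with every Majorana product avoiding the indices $2j-1,2j$.
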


\begin{proof}[Proof of Lemma~\ref{lem:mode_postselection}]
For any $j\in[n]$, Eq.~\eqref{eq:def_Pi_j_N_j} implies $\deg_\mathrm{F}(\Pi_j)=2$. Then $\dF\geq3$ implies
\begin{equation}
Q\Pi_jQ=p_jQ
\end{equation}
for some $p_j\in[0,1]$ since $\Pi_j$ is a Hermitian projector.
\begin{enumerate}
\item If $p_j=0$, then $\Pi_jQ=0$.
\item If $p_j>0$, define
\begin{equation}\label{eq:def_Q'}
W_j=\frac{\Pi_jQ}{\sqrt{p_j}},\qquad Q'=W_jW_j^\dagger=\frac{\Pi_jQ\Pi_j}{p_j}.
\end{equation}
Then $W_j$ is an isometry on $\operatorname{im}(Q)$ since $W_j^\dagger W_j=Q$, and $Q'$ is a rank-$2^k$ projector since $Q'^2=Q'$ and 
\begin{equation}
\Tr(Q')=\frac{\Tr(\Pi_jQ)}{p_j}=\frac{\Tr(Q\Pi_jQ)}{p_j}=\Tr(Q)=2^k.
\end{equation}
Notice that $\Pi_jQ'=Q'\Pi_j=Q'$, so $\operatorname{im}(Q')\subseteq\operatorname{im}(\Pi_j)$. Denote the remaining $n-1$ modes except mode $j$ as $\overline{j}=[n]\setminus\{j\}$ and consider the isometry $J_j$ that maps from modes $\overline{j}$ to modes $[n]$, acting as
\begin{equation}
\begin{aligned}
J_j
\ket{x_1,\cdots,x_{j-1},x_{j+1},\cdots,x_n}
&=
\ket{x_1,\cdots,x_{j-1},0,x_{j+1},\cdots,x_n},\\
J_j^\dagger
\ket{x_1,\cdots,x_n}
&=
\delta_{x_j,0}\ket{x_1,\cdots,x_{j-1},x_{j+1},\cdots,x_n}.
\end{aligned}
\end{equation}
It satisfies $J_j^\dagger J_j=\bI_{\overline{j}}$ and $J_jJ_j^\dagger=\Pi_j$. Define $Q''=J_j^\dagger Q'J_j=J_j^\dagger W_jW_j^\dagger J_j$, we then show that $Q''$ is the projector of a $[\![n-1,k,\dF']\!]_\mathrm{F}$ fermionic code with $\dF'\geq\dF-2$. It is indeed a projector defined on an $(n-1)$-mode fermionic system with rank $2^k$ since
\begin{equation}
Q''^2=Q'',\qquad
\Tr(Q'')=\Tr(Q'\Pi_j)=\Tr(Q')=2^k.
\end{equation}
Its Majorana distance is
\begin{equation}
\dF'=\dF(Q'')=\min
\left\{
\abs{S}\middle|\emptyset\neq S\subseteq[2n]\setminus\{2j-1,2j\},
Q''\overline\gamma_S Q''
\notin
\mathbb{C}Q''
\right\},
\end{equation}
where $\overline{\gamma}_S=J_j^\dagger\gamma_SJ_j$ is defined on modes $\overline{j}$. It is straightforward to verify $\{\overline\gamma_l\}_{l\in[2n]\setminus\{2j-1,2j\}}$ satisfy the canonical anticommutation relation and can be regarded as Majorana operators on modes $\overline{j}$.
For any $T\subseteq[2n]\setminus\{2j-1,2j\}$ with $\abs{T}<\dF-2$, we need to show $Q''\overline\gamma_{T}Q''\in\mathbb{C}Q''$. Notice that $\Pi_j$ defined in Eq.~\eqref{eq:def_Pi_j_N_j} is an even operator and its support has no overlap with that of $T$, we have $[\Pi_j,\gamma_T]=0$, so 
that
\begin{equation}
\deg_\mathrm{F}(\Pi_j\gamma_T\Pi_j)=\deg_\mathrm{F}(\gamma_T\Pi_j)=\deg_\mathrm{F}\left(\frac{\gamma_T}2-\frac{i\gamma_T\gamma_{2j-1}\gamma_{2j}}2\right)\leq\abs{T}+2<\dF.
\end{equation}
So there exists $\lambda_T\in\mathbb{C}$, such that
$Q\Pi_j\gamma_T\Pi_jQ=\lambda_TQ$,
thus
$p_j^2Q'\gamma_TQ'=\Pi_jQ\Pi_j\gamma_T\Pi_jQ\Pi_j=\lambda_Tp_jQ'$,
and
\begin{equation}
Q'\gamma_TQ'=\frac{\lambda_T}{p_j}Q'.
\end{equation}
Finally,
\begin{equation}
\begin{aligned}
Q''\overline\gamma_{T}Q''&=J_j^\dagger Q'J_jJ_j^\dagger\gamma_TJ_jJ_j^\dagger Q'J_j
\\&=J_j^\dagger Q'\Pi_j\gamma_T\Pi_j Q'J_j
\\&=J_j^\dagger Q'\gamma_TQ'J_j
\\&=\frac{\lambda_T}{p_j}J_j^\dagger Q'J_j
\\&=\frac{\lambda_T}{p_j}Q'',
\end{aligned}
\end{equation}
which finishes the proof.
\end{enumerate}
\end{proof}

\subsection{Proof of Theorem~\ref{thm:universal_non_Gaussian_cost}}\label{app:g-doped_Gaussian_fidelity}
We now present the proof of Theorem~\ref{thm:universal_non_Gaussian_cost} in the main text.
\begin{proof}[Proof of Theorem~\ref{thm:universal_non_Gaussian_cost}]
We denote
\begin{equation}
\begin{gathered}
r_{\dF,q}\coloneqq\left\lfloor\frac{d_{\mathrm F}-3}{q-2}\right\rfloor,\quad
s_{g,q}\coloneqq q\max\{g-r_{\dF,q},0\},\quad
c_{g,q}\coloneqq\min\left\{k,s_{g,q}\right\},\\
d_{g,q}\coloneqq\dF-(q-2)\min\left\{g,r_{\dF,q}\right\},\qquad
t_{g,q}\coloneqq\left\lfloor\frac{d_{g,q}-1}2\right\rfloor.
\end{gathered}
\end{equation}
By definition we have $d_{g,q}\geq3$ and $t_{g,q}\geq1$.
We first prove Eq.~\eqref{eq:unified_gq_fidelity_bound} in the main text,
\begin{equation}\label{eq:unified_gq_fidelity_bound_app}
F_{g,q}^{(n)}(\mathcal C)
\leq
\left[
1-H_2^{-1}\left(
\frac{ k-c_{g,q} }{n-c_{g,q}}
\right)
\right]^{t_{g,q}}.
\end{equation}
If $s_{g,q}\geq k$, then $c_{g,q}=k$ and Eq.~\eqref{eq:unified_gq_fidelity_bound_app} holds trivially. So we focus on $s_{g,q}<k$ and thus $c_{g,q}=s_{g,q}$. 

Fix an arbitrary codeword $\ket{\psi}\in\operatorname{im}(P)$. We calculate its fidelity with a $(g,q)$-doped state, i.e., $\abs{\bra{0^n}U^\dagger\ket{\psi}}^2$, for a generic $(g,q)$-doped circuit $U$ in the form of Eq.~\eqref{eq:g_doped_circuit_app}. We decompose $U$ into two parts $U=U_1U_2$, where $U_1$ contains $\min\{g,r_{\dF,q}\}$ non-Gaussian gates together with their adjacent Gaussian gates, and $U_2$ contains the remaining $\max\{0,g-r_{\dF,q}\}$ non-Gaussian gates and adjacent Gaussian gates, i.e.,
\begin{equation}
\begin{aligned}
U_1&=G_g\prod_{j=g}^{\max\{0,g-r_{\dF,q}\}+1}\left(V_jG_{j-1}\right),\\
U_2&=\prod_{j=\max\{0,g-r_{\dF,q}\}}^1\left(V_jG_{j-1}\right),
\end{aligned}
\end{equation}
where the index $j$ in the products takes a decreasing order, and the empty product is the identity. Define $P_1=U_1^\dagger PU_1$ and use Lemma~\ref{lem:distance_degradation}, then $P_1$ is the code projector of an $[\![n,k,d']\!]_\mathrm{F}$ fermionic code with
\begin{equation}
d'\geq\dF-\min\left\{g,r_{\dF,q}\right\}(q-2)=d_{g,q}.
\end{equation}
We then consider $U_2\ket{0^n}$, it is a $(\max\{0,g-r_{\dF,q}\},q)$-doped state. Applying Lemma~\ref{lem:fermionic_non-Gaussianity_compression}, there exist a Gaussian unitary $G$ and a state $\ket{\chi}$ on the first $\max\{0,g-r_{\dF,q}\}q=s_{g,q}=c_{g,q}$ modes, such that
\begin{equation}
U_2\ket{0^n}=G\left(\ket{\chi}_C\otimes\ket{0^{n-c_{g,q}}}_T\right),
\end{equation}
where $C$, as a collection of mode indices, is the core system on the first $c_{g,q}$ modes and $T$ is the tail system on the remaining $n-c_{g,q}$ modes. 

Define 
\begin{equation}
Q=G^\dagger P_1G=G^\dagger U_1^\dagger PU_1G,\qquad
\ket{\zeta}=G^\dagger U_1^\dagger\ket{\psi}.
\end{equation}
Using Lemma~\ref{lem:distance_degradation} with $g=0$, we see that $Q$ is the orthogonal projector of an $[\![n,k,d']\!]_{\textup{F}}$ code and $\ket{\zeta}$ is a codeword in $\operatorname{im}(Q)$.
In this way, the fidelity between $\ket{\psi}$ and the $(g,q)$-doped state $U\ket{0^n}$ can be written as
\begin{equation}\label{eq:fidelity_rewrite}
\abs{\bra{0^n}U^\dagger\ket{\psi}}^2=\abs{\bra{0^n}U_2^\dagger U_1^\dagger\ket{\psi}}^2=\abs{\left(\bra{\chi}_C\otimes\bra{0^{n-c_{g,q}}}_T\right)\ket{\zeta}}^2,
\end{equation}

Let
\begin{equation}
R_0\coloneqq[n],
\qquad
r_0\coloneqq\abs{R_0}=n,
\qquad
T_0\coloneqq T,
\qquad
d_0\coloneqq d',\qquad
Q_0\coloneqq Q,\qquad
\ket{\psi_0}\coloneqq\ket{\zeta}.
\end{equation}
We recursively construct distinct mode labels
$\left\{\ell_j\right\}$ in the tail system $T$,
code projectors
$\left\{Q_j\right\}$,
and codewords
$\left\{\ket{\psi_j}\right\}$ from the following steps. Suppose for $1\leq j\leq t_{g,q}$ and before step $j$, we have a code projector $Q_{j-1}$ on the remaining mode set $R_{j-1}$ with $r_{j-1}=\abs{R_{j-1}}=n-j+1$, such that $\operatorname{rank}(Q_{j-1})=2^k$ and $d_{j-1}=\dF(Q_{j-1})\geq d'-2j+2$, and a codeword $\ket{\psi_{j-1}}\in\operatorname{im}(Q_{j-1})$. We denote $T_j=R_j\cap T$. The initial construction for $j=1$ is provided above. We show that the above properties also hold after step $j$.

For $1\leq j\leq t_{g,q}$, since $d_{j-1}\geq d'-2j+2\geq d'-2t_{g,q}+2\geq d_{g,q}-2t_{g,q}+2\geq3$, we apply Lemma~\ref{lem:vacuum_occupation_bound} to $Q_{j-1}$, core system $C$ and the remaining tail system $T_{j-1}$. We determine a mode $\ell_j\in T_{j-1}$ and a scalar $p_j$, independent of the codeword in $\operatorname{im}(Q_{j-1})$, such that
\begin{equation}\label{eq:Gaussian_fidelity_step_probability}
p_j
=
\bra{\psi_{j-1}}
\Pi_{\ell_j}
\ket{\psi_{j-1}}
\leq
1-q_{r_{j-1}-c_{g,q},k-c_{g,q}},
\end{equation}
where $\Pi_{\ell_j}=\ket{0}_{\ell_j}\bra{0}_{\ell_j}$ and $q_{r_{j-1}-c_{g,q},k-c_{g,q}}$ is defined in Eq.~\eqref{eq:def_q_nk}.
Using Lemma~\ref{lem:lower_bound_q_nk} and noticing that the function $H_2^{-1}$ is increasing on its domain, we have

\begin{equation}\label{eq:Gaussian_fidelity_uniform_step_bound}
p_j
<
1-
H_2^{-1}
\left(
\frac{k-c_{g,q}}{r_{j-1}-c_{g,q}}
\right)
\leq
1-
H_2^{-1}
\left(
\frac{k-c_{g,q}}{n-c_{g,q}}
\right).
\end{equation}

We then apply Lemma~\ref{lem:mode_postselection} to $Q_{j-1}$ and mode $\ell_j\in T_{j-1}$. Then there are two cases.
\begin{enumerate}
\item If $\Pi_{\ell_j}Q_{j-1}=0$, then
\begin{equation}
\left(\bra{\chi}_C\otimes\bra{0^{r_{j-1}-c_{g,q}}}_{T_{j-1}}\right)\ket{\psi_{j-1}}=\left(\bra{\chi}_C\otimes\bra{0^{r_{j-1}-c_{g,q}}}_{T_{j-1}}\right)\Pi_{\ell_j}\ket{\psi_{j-1}}=0.
\end{equation}
We terminate the series of construction at step $j$.
\item If $\Pi_{\ell_j}Q_{j-1}\neq0$, we obtain the isometries $W_{\ell_j}$, $J_{\ell_j}$, and the code projector $Q_j=J_{\ell_j}^\dagger W_{\ell_j}W_{\ell_j}^\dagger J_{\ell_j}$ on the remaining mode set $R_j=R_{j-1}\setminus\{\ell_j\}$ with $r_j=\abs{R_j}=n-j$, such that $\operatorname{rank}(Q_j)=2^k$ and $d_j=\dF(Q_j)\geq d_{j-1}-2\geq d'-2j$. We denote $T_j=R_j\cap T$. Recall the isometry $J_{\ell_j}$ in Eq.~\eqref{eq:def_J_j}, then $J_{\ell_j}\left(\ket{\chi}_C\otimes\ket{0^{r_j-c_{g,q}}}_{T_j}\right)=\left(\ket{\chi}_C\otimes\ket{0^{r_{j-1}-c_{g,q}}}_{T_{j-1}}\right)_{R_{j-1}}$. Let $\ket{\psi_j}\in\operatorname{im}(Q_j)$ be the normalized post-selected codeword from $\ket{\psi_{j-1}}$, defined by
\begin{equation}\label{eq:Gaussian_fidelity_post-selected_state}
\ket{\psi_j}=
J_{\ell_j}^\dagger W_{\ell_j}\ket{\psi_{j-1}}
=
J_{\ell_j}^\dagger\frac{
\Pi_{\ell_j}\ket{\psi_{j-1}}
}{
\sqrt{p_j}
}.
\end{equation}
Then
\begin{equation}\label{eq:Gaussian_fidelity_overlap_recursion}
\begin{aligned}
\left(\bra{\chi}_C\otimes\bra{0^{r_{j-1}-c_{g,q}}}_{T_{j-1}}\right)
\ket{\psi_{j-1}}
&=
\left(\bra{\chi}_C\otimes\bra{0^{r_{j-1}-c_{g,q}}}_{T_{j-1}}\right)
\Pi_{\ell_j}
\ket{\psi_{j-1}}
\\
&=
\sqrt{p_j}
\left(\bra{\chi}_C\otimes\bra{0^{r_{j-1}-c_{g,q}}}_{T_{j-1}}\right)
J_{\ell_j}
\ket{\psi_j}
\\
&=
\sqrt{p_j}
\left(\bra{\chi}_C\otimes\bra{0^{r_j-c_{g,q}}}_{T_j}\right)
\ket{\psi_j}.
\end{aligned}
\end{equation}
This completes the recursive construction at step $j$.
\end{enumerate}
If the recursive construction runs into the first case above at some step, then we immediately get $\left(\bra{\chi}_C\otimes\bra{0^{n-c_{g,q}}}_T\right)\ket{\psi_0}=0$ from Eq.~\eqref{eq:Gaussian_fidelity_overlap_recursion}. Otherwise, we can construct $t_{g,q}=\lfloor(d_{g,q}-1)/2\rfloor$ steps and obtain
\begin{equation}
\left(\bra{\chi}_C\otimes\bra{0^{n-c_{g,q}}}_T\right)\ket{\psi_0}=\left(
\prod_{j=1}^{t_{g,q}}
\sqrt{p_j}
\right)
\left(\bra{\chi}_C\otimes\bra{0^{r_{t_{g,q}}-c_{g,q}}}_{T_{t_{g,q}}}\right)
\ket{\psi_{t_{g,q}}}.
\end{equation}
Using Eq.~\eqref{eq:fidelity_rewrite} and Eq.~\eqref{eq:Gaussian_fidelity_uniform_step_bound}, we have
\begin{equation}
\abs{\bra{0^n}U^\dagger\ket{\psi}}^2=\abs{\left(\bra{\chi}_C\otimes\bra{0^{n-c_{g,q}}}_T\right)\ket{\psi_0}}^2\leq
\prod_{j=1}^{t_{g,q}}
p_j<
\left[
1-
H_2^{-1}\left(\frac{k-c_{g,q}}{n-c_{g,q}}\right)
\right]^{t_{g,q}}.
\end{equation}
Since $\ket{\psi}\in\operatorname{im}(P)$ and $(g,q)$-doped circuit $U$ are arbitrary, we prove Eq.~\eqref{eq:unified_gq_fidelity_bound_app}.

Assume a $(g,q)$-doped circuit prepares a codeword exactly, then
\begin{equation}
1=F_{g,q}^{(n)}(\mathcal C)
\leq
\left[
1-H_2^{-1}\left(
\frac{ k-c_{g,q} }{n-c_{g,q}}
\right)
\right]^{t_{g,q}}.
\end{equation}
Notice that $t_{g,q}\geq1$, we have $0<k=c_{g,q}\leq s_{g,q}=q(g-r_{\dF,q})$, $g$ is an integer implies
\begin{equation}
g\geq r_{\dF,q}+\left\lceil\frac kq\right\rceil=\left\lfloor\frac{\dF-3}{q-2}\right\rfloor+\left\lceil\frac kq\right\rceil,
\end{equation}
which is Eq.~\eqref{eq:exact_non_Gaussian_cost_from_fidelity} in the main text.
\end{proof}

\section{Fermionic entanglement distillation with Gaussian channels}
\subsection{Proof of Theorem~\ref{thm:entanglement_distillation}}
\label{app:proof_entanglement_distillation}
In this section, we provide the proof of Theorem~\ref{thm:entanglement_distillation}.

\begin{proof}[Proof of Theorem~\ref{thm:entanglement_distillation}]
We first prove Eq.~\eqref{eq:Gaussian_entanglemenet_distillation_upper_bound}. Assume that the input consists of $m$ copies of $\ketbra{\psi_\theta}{\psi_\theta}$, namely
\begin{equation}
\rho_{\mathrm{in}}
=
\left(
\ketbra{\psi_\theta}{\psi_\theta}
\right)^{\otimes_{\mathrm F}m}.
\end{equation}
For each copy $\ell\in[m]$, let
$\alpha_{2\ell-1},\alpha_{2\ell}$ denote the two Majorana operators of $A$'s mode and
$\beta_{2\ell-1},\beta_{2\ell}$ those of $B$'s mode. Choosing the global Majorana ordering in which all of $A$'s modes precede all of $B$'s modes, the 
covariance matrix of $\rho_{\mathrm{in}}$ takes the form
\begin{equation}
\Gamma(\rho_{\mathrm{in}})=
\begin{bmatrix}
\Gamma_A&\mathsf{C}\\
-\mathsf{C}^\mathsf{T}&\Gamma_B
\end{bmatrix},
\end{equation}
where for $j,k\in[2m]$,
\begin{equation}\label{eq:Gamma_A_Gamma_B_C}
\begin{aligned}
(\Gamma_A)_{j,k}&=-\frac{i}2\Tr([\alpha_j,\alpha_k]\rho_{\mathrm{in}}),\\
(\Gamma_B)_{j,k}&=-\frac{i}2\Tr([\beta_j,\beta_k]\rho_{\mathrm{in}}),\\
\mathsf{C}_{j,k}&=-\frac{i}2\Tr([\alpha_j,\beta_k]\rho_{\mathrm{in}})=-i\Tr(\alpha_j\beta_k\rho_{\mathrm{in}}).
\end{aligned}
\end{equation}
Since each $\ketbra{\psi_\theta}{\psi_\theta}$ is an even operator, all of its odd Majorana expectations vanish, so the two-point correlation between Majorana operators from distinct copies is zero. So the matrix $\mathsf{C}$ factorizes as the direct sum of $m$ copies of the matrix $\mathsf{C}_\theta$ for the single-copy case, i.e.,
\begin{equation}
\mathsf{C}=\bigoplus_{\ell=1}^m\mathsf{C}_\theta,
\qquad\text{with }
\mathsf{C}_\theta=
\begin{bmatrix}
0&\sin(2\theta)\\
\sin(2\theta)&0
\end{bmatrix},
\end{equation}
where we use $\bra{\psi_\theta}X_AX_B\ket{\psi_\theta}=-\bra{\psi_\theta}Y_AY_B\ket{\psi_\theta}=\sin(2\theta)$.
So $\norm{\mathsf{C}}_\infty=\norm{\mathsf{C}_\theta}_\infty=\sin(2\theta)$. We then consider the action of the local Gaussian channel $\Lambda_A\otimes_\mathrm{F}\Lambda_B$ on the covariance matrix. According to Lemma~\ref{lem:Gaussian_channel_covariance_matrix}, there exist matrices $\mathsf{M}_A,\mathsf{M}_B\in\mathbb{R}^{2\times2m}$ and antisymmetric matrices $\mathsf{L}_A,\mathsf{L}_B\in\mathbb{R}^{2\times2}$, such that for the output state $\rho_{\mathrm{out}}=(\Lambda_A\otimes_\mathrm{F}\Lambda_B)(\rho_\mathrm{in})$,
\begin{equation}
\begin{aligned}
\Gamma(\rho_{\mathrm{out}})&=\left(\mathsf{M}_A\oplus\mathsf{M}_B\right)\Gamma(\rho_{\mathrm{in}})\left(\mathsf{M}_A^\mathsf{T}\oplus\mathsf{M}_B^\mathsf{T}\right)+\left(\mathsf{L}_A\oplus\mathsf{L}_B\right)
\\&=
\begin{bmatrix}
\mathsf{M}_A\Gamma_A\mathsf{M}_A^\mathsf{T}+\mathsf{L}_A&\mathsf{M}_A\mathsf{C}\mathsf{M}_B^\mathsf{T}\\
-\mathsf{M}_B\mathsf{C}^\mathsf{T}\mathsf{M}_A^\mathsf{T}&\mathsf{M}_B\Gamma_B\mathsf{M}_B^\mathsf{T}+\mathsf{L}_B
\end{bmatrix}.
\end{aligned}
\end{equation}
We denote the cross-correlation block as $\mathsf{C}_\mathrm{out}=\mathsf{M}_A\mathsf{C}\mathsf{M}_B^\mathsf{T}$, by Lemma~\ref{lem:Gaussian_channel_covariance_matrix}, $\norm{\mathsf{M}_A}_\infty,\norm{\mathsf{M}_B}_\infty\leq1$, so that
\begin{equation}
\norm{\mathsf{C}_\mathrm{out}}_\infty\leq
\norm{\mathsf{M}_A}_\infty\norm{\mathsf{C}}_\infty\norm{\mathsf{M}_B}_\infty
\leq\sin(2\theta).
\end{equation}
Notice that $\rho_{\mathrm{out}}$ is a bipartite two-mode state, and its fidelity with the EPR state
\begin{equation}
\begin{aligned}
\ketbra{\Phi^+}{\Phi^+}&=\frac14(\bI+X_AX_B-Y_AY_B+Z_AZ_B)
\end{aligned}
\end{equation}
can be upper bounded by
\begin{equation}
\begin{aligned}
F_{\mathrm{EPR}}(\rho_{\mathrm{out}})&=\frac14\left(1+(\mathsf{C}_\mathrm{out})_{2,1}+(\mathsf{C}_\mathrm{out})_{1,2}+\Tr(Z_AZ_B\rho_{\mathrm{out}})\right)\\
&\leq
\frac{1+\norm{\mathsf{C}_\mathrm{out}}_\infty}2
\\&\leq\frac{1+\sin(2\theta)}2
\\&=F_{\mathrm{EPR}}(\ketbra{\psi_\theta}{\psi_\theta}),
\end{aligned}
\end{equation}
where we use $\Tr(Z_AZ_B\rho_{\mathrm{out}})\leq1$ and relate the expectation values to the elements of $\mathsf{C}_\mathrm{out}$ defined in Eq.~\eqref{eq:Gamma_A_Gamma_B_C}. The statement can be extended to the case where classical randomness is used to implement a channel
\begin{equation}
\cN
=
\sum_r
p_r
\left(
\Lambda_A^{(r)}
\otimes_{\mathrm F}
\Lambda_B^{(r)}
\right),
\qquad
p_r\geq0,
\qquad
\sum_rp_r=1.
\end{equation}
Indeed, the EPR-pair fidelity is linear in the output state, and
each product channel in the mixture satisfies the bound proved above. Thus we prove Eq.~\eqref{eq:Gaussian_entanglemenet_distillation_upper_bound} in the main text.

We then prove the statement about local Pauli measurement. Consider a local fermionic measurement instrument by subsystem $A$: for $0<t\leq1$, let
\begin{equation}
\begin{aligned}
\cM_t^{(\mathrm{succ})}(\rho)&=K_t\rho K_t^\dagger,\qquad&&\text{with }
K_t=t\ketbra{0}{0}+\ketbra{1}{1},\\
\cM_t^{(\mathrm{fail})}(\rho)&=L_t\rho L_t^\dagger,&&\text{with }
L_t=\sqrt{1-t^2}\ketbra{1}{0}.
\end{aligned}
\end{equation}
Since $K_t^\dagger K_t+L_t^\dagger L_t=\bI$, $[K_t,\Pi]=0$, and $\{L_t,\Pi\}=0$, it is straightforward to verify $\cM_t^{(\mathrm{succ})}$ and $\cM_t^{(\mathrm{fail})}$ are both CP and trace-non-increasing parity-preserving maps, and they form a fermionic measurement. To show they are Gaussian, we write down their Choi operators defined in Eq.~\eqref{eq:fermionic-Choi-operator}
\begin{equation}
\begin{aligned}
J_\mathrm{F}\left(\cM_t^{(\mathrm{succ})}\right)&=\frac12\left(t\ket{0,0}+\ket{1,1}\right)\left(t\bra{0,0}+\bra{1,1}\right),\\
J_\mathrm{F}\left(\cM_t^{(\mathrm{fail})}\right)&=\frac{1-t^2}2\ketbra{1,0}{1,0}.
\end{aligned}
\end{equation}
The normalized operator $\ketbra{\phi_t}{\phi_t}=J_\mathrm{F}\left(\cM_t^{(\mathrm{succ})}\right)/\Tr(J_\mathrm{F}\left(\cM_t^{(\mathrm{succ})}\right))$ is indeed a pure Gaussian state since its covariance matrix
\begin{equation}
\Gamma\left(\ketbra{\phi_t}{\phi_t}\right)=
\begin{bmatrix}
0&c_t&0&s_t\\
-c_t&0&s_t&0\\
0&-s_t&0&c_t\\
-s_t&0&-c_t&0
\end{bmatrix},
\end{equation}
with $c_t=(t^2-1)/(t^2+1)$ and $s_t=2t/(t^2+1)$, satisfies $\Gamma(\ketbra{\phi_t}{\phi_t})^2=-\bI$. Also, $\ketbra{1,0}{1,0}$ is Gaussian. 
Then for any $\theta\in(0,\pi/4]$, let $t=\tan\theta$, so
\begin{equation}
\left(\cM_t^{(\mathrm{succ})}\otimes_\mathrm{F}\operatorname{id}\right)\left(\ketbra{\psi_\theta}{\psi_\theta}\right)=\sin^2\theta\left(\ket{0,0}+\ket{1,1}\right)\left(\bra{0,0}+\bra{1,1}\right)=2\sin^2\theta\ketbra{\Phi^+}{\Phi^+}.
\end{equation}
That is, after performing a local Gaussian measurement $\left\{\cM_t^{(\mathrm{succ})},\cM_t^{(\mathrm{fail})}\right\}$, with success probability $p_\mathrm{succ}=2\sin^2\theta$, subsystem $A$ can obtain the successful outcome. After subsystem $ B$ receives the one-bit success information via classical communication, they retain the EPR state $\ketbra{\Phi^+}{\Phi^+}$.
\end{proof}

\subsection{Schur complement formulas for conditioned Gaussian projections}\label{app:Schur_complement_formula}

To put the discussion about fermionic measurements in Appendix~\ref{app:fermionic_Gaussian_measurements} on the same footing as common bosonic arguments \cite{eisert2002distilling,Giedke2002distillation,Fiurasek2002Gaussian}, we discuss selective CP maps in a language of covariance matrices.
In this light, 
it is instructive to compare the conditional update rules for
fermionic and bosonic Gaussian states. Consider first a fermionic
Gaussian state with covariance matrix
\begin{equation}
 \Gamma =
 \begin{bmatrix}
  \Gamma_A & \mathsf{C} \\
  -\mathsf{C}^{\mathsf T} & \Gamma_B
 \end{bmatrix},
\end{equation}
where the second subsystem is projected onto a pure fermionic Gaussian
state with covariance matrix $\Gamma'$. Conditioned on this outcome,
the retained subsystem has covariance matrix
\begin{equation}
 \Gamma_{\mathrm{cond}}
 =
 \Gamma_A + \mathsf{C}(\Gamma_B+\Gamma')^{-1}\mathsf{C}^{\mathsf T}
 ,\label{eq:fermionic-schur-complement}
\end{equation}
that takes the form of a Schur complement.
For fermion-number measurements on a single mode, one has
$\Gamma'=\pm J$, where
\begin{equation}
 J =
 \begin{bmatrix}
  0 & 1\\
  -1 & 0
 \end{bmatrix},
\end{equation}
with the sign specifying the occupation outcome
(which is implied but not spelled out in Ref.\
\cite{bravyi2005LagrangianRepresentationFermionica}).

By comparison, let a bosonic Gaussian state have covariance matrix
\begin{equation}
 V =
 \begin{bmatrix}
  V_A & \mathsf{C}\\
  \mathsf{C}^{\mathsf T} & V_B
 \end{bmatrix}.
\end{equation}
Upon projection of the second subsystem onto the vacuum, whose
covariance matrix is the identity $\bI$ in our convention, the conditional covariance
matrix is
\begin{equation}
 V_{\mathrm{cond}}
 =
 V_A-\mathsf{C}(V_B+\bI)^{-1}\mathsf{C}^{\mathsf T}.
 \label{eq:bosonic-schur-complement}
\end{equation}
Thus, in both cases Gaussian postselection is described by a Schur
complement \cite{eisert2002distilling}, 
but with the characteristic sign difference originating from the antisymmetric fermionic and symmetric bosonic covariance
matrices. That said, while such selective operations do not allow for distilling entanglement of bosonic Gaussian states, the same cannot be straightforwardly said for fermionic states.

\end{document}